\newtheorem{prop}{Proposition}
\newtheorem{definition}{Definition}
\title{Rethinking Byzantine Robustness in Federated Recommendation from Sparse Aggregation Perspective}
\author{
    Zhongjian Zhang\textsuperscript{\rm 1}\equalcontrib, Mengmei Zhang\textsuperscript{\rm 2}\equalcontrib,
    Xiao Wang\textsuperscript{\rm 3},
    Lingjuan Lyu\textsuperscript{\rm 4}\\
    Bo Yan\textsuperscript{\rm 1},
    Junping Du\textsuperscript{\rm 1},
    Chuan Shi\textsuperscript{\rm 1}\thanks{Corresponding author.}
}
\begin{document}
\maketitle
\begin{abstract}
To preserve user privacy in recommender systems, federated recommendation (FR) based on federated learning (FL) emerges, keeping the personal data on the local client and updating a model collaboratively. Unlike FL, FR has a unique sparse aggregation mechanism, where the embedding of each item is updated by only partial clients, instead of full clients in a dense aggregation of general FL. 
Recently, as an essential principle of FL, model security has received increasing attention, especially for Byzantine attacks, where malicious clients can send arbitrary updates. The problem of exploring the Byzantine robustness of FR is particularly critical since in the domains applying FR, e.g., e-commerce, malicious clients can be injected easily by registering new accounts. However, existing Byzantine works neglect the unique sparse aggregation of FR, making them unsuitable for our problem. 
Thus, we make the first effort to investigate Byzantine attacks on FR from the perspective of sparse aggregation, which is non-trivial: it is not clear how to define Byzantine robustness under sparse aggregations and design Byzantine attacks under limited knowledge/capability. 
In this paper, we reformulate the Byzantine robustness under sparse aggregation by defining the aggregation for a single item as the smallest execution unit. Then we propose a family of effective attack strategies, named \textbf{Spattack}, which exploit the vulnerability in sparse aggregation and are categorized along the adversary's knowledge and capability. 
Extensive experimental results demonstrate that Spattack can effectively prevent convergence and even break down defenses under a few malicious clients, raising alarms for securing FR systems. 
\end{abstract}
\section{Introduction}
As an essential way to alleviate information overload, recommender systems are widely used in e-commerce~\cite{Ying2018GraphCN}, media~\cite{Wang2018DKNDK, Wu2019NPANN}, and social network~\cite{Fan2019DeepSC}, recommending items that users may be interested in. Despite the remarkable success, conventional recommender systems require centrally storing users' personal data for training, increasing privacy risks. 

Recently, federated learning (FL)~\cite{McMahan2016CommunicationEfficientLO} has emerged as a privacy-preserving paradigm and successfully applied to the recommendation area. In federated recommendation (FR)~\cite{sun2024survey, Luo2022PersonalizedFR}, the global item embeddings are uploaded to a central server for aggregation. Meanwhile, each user's interaction data and privacy features are kept on the local client. In this way, the privacy of local data is well protected.
\begin{figure}[tp]
    \centering
    \vskip -0.1in
    \includegraphics[width=\linewidth]{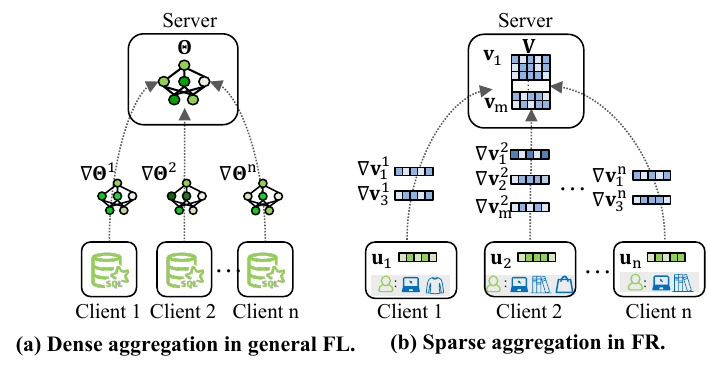}
    \vskip -0.15in
    \caption{Comparisons between dense aggregation of general FL and the unique sparse aggregation of FR.}
    \vskip -0.2in
    \label{fig:dense_sparse}
\end{figure}

Unlike general FL systems, FR has a unique sparse aggregation mechanism. As shown in Fig.~\ref{fig:dense_sparse}, for general FL, each element (circle) of model parameters can be updated by all $n$ clients, named dense aggregation. While for FR, the interactions of users and items are usually sparse~\cite{Ma2008SoRecSR}, resulting in each item's embedding can only be updated by partial clients. For example, client $n$ can only produce and send substantive gradients $\{\nabla \boldsymbol{v}_1^n, \nabla \boldsymbol{v}_3^n\}$ for its interacted items $\{v_1, v_3\}$. For the remaining items, the updates are zero vectors or empty, named sparse aggregation. 

By far, FR has provided satisfactory performance without collecting users' private data, extending recommendation applications to privacy-sensitive scenarios. Despite success, the model security, as an essential principle, has received increasing attention. Here we consider the worst-case attack, i.e., Byzantine attack~\cite{fang2024byzantine, Fang2019LocalMP, Blanchard2017MachineLW}, where attackers are omniscient and collusive, and can control several clients to upload arbitrary malicious gradients. Note that Byzantine robustness is especially critical for FR, since in the domains applying FR, e.g., e-commerce, malicious clients can be injected easily by registering new accounts. This raises one question naturally: \textit{With the unique sparse aggregations, how robust the federated recommendation model is against Byzantine attacks?}

For this question, existing Byzantine works cannot be directly employed, since they mainly focus on dense aggregation in general FL~\cite{RodriguezBarroso2022SurveyOF, Xu2021ByzantinerobustFL, PevaBlanchard2017MachineLW}. Despite a few prior attacks against FR emerges~\cite{yuan2023manipulating, Yu2022UntargetedAA, Rong2022FedRecAttackMP, Wu2022FedAttackEA}, they also neglect to analyze how the sparse aggregation affects the robustness of FR. We answer this problem by solving two challenges: (1) How to define Byzantine robustness under sparse aggregations? Existing Byzantine attacks and defenses are mainly defined based on the dense aggregation mechanism in the general FL. In FR, due to sparse user-item interaction, for an item, its embedding is updated only by its interacted users, and the remaining users upload zero-valued or empty updates. So the aggregated item embedding may be skewed towards zero-valued when directly applying existing dense aggregators, since the zero-value update is majority. Hence, it is vital to transfer them into FR and re-examine their theoretical guarantee and effectiveness. (2) How to design general Byzantine attacks against FR for attackers with different levels of knowledge and capability in reality. Specifically, it is hard to have full knowledge of all users, due to the large number of participating users in FR. Besides, since user-item interactions are usually sparse, Byzantine clients should not update too many items. Otherwise, a monitor based on the number of user interactions can be triggered easily under such aggressive modifications~\cite{Wu2022FedAttackEA}.

In this paper, we make the first effort to investigate the Byzantine robustness of federated recommendation from the perspective of sparse aggregations. 
For the first challenge, we transfer the existing aggregators to FR by treating the aggregation for a single item as the smallest execution unit. Namely, for each item embedding, the gradients are collected and aggregated separately and concurrently. Based on this, we further point out that such a sparse aggregation mechanism of FR will lead to a unique Byzantine vulnerability: items with different degrees receive different amounts of updates, leading to individual robustness. The degrees of all items usually meet long tail distribution in reality~\cite{Abdollahpouri2019ManagingPB}, where most items (named tailed items) are only interacted with by a few users, making them extremely fragile.
For the second challenge, we design a series of attack strategies, named \textbf{Spattack}, based on the vulnerability from sparse aggregation in FR. Then we categorize them along the attacker's knowledge and capability into four classes. To be specific, following~\cite{xie2020fall,Fang2019LocalMP,GiladBaruch2019ALI}, we consider both omniscient attacker (Spattack-O) and limited non-omniscient attacker (Spattack-L). 
Then we further divide them depending on whether limiting the maximum number of each client's poisoned items or not.
In summary, our contributions are three folds:\\
(1) We first systematically study the Byzantine robustness of FR from the perspective of unique sparse aggregation, by treating the aggregation for a single item as the smallest execution unit. We theoretically analyze its convergence guarantee and point out a special vulnerability of FR. \\
(2) We propose a family of effective attack strategies, named Spattack, utilizing the vulnerability from sparse aggregation. Then Spattack can be categorized into four different types along attacker’s knowledge and capability.\\
(3) We perform experiments on multiple benchmark datasets for different FR systems. The results show that our Spattack can prevent the convergence of vanilla even defense FR models by only controlling a few malicious clients. 

\section{Background and Preliminary}
\begin{figure*}[!htbp]
    \centering
    \includegraphics[width=1.0\linewidth]{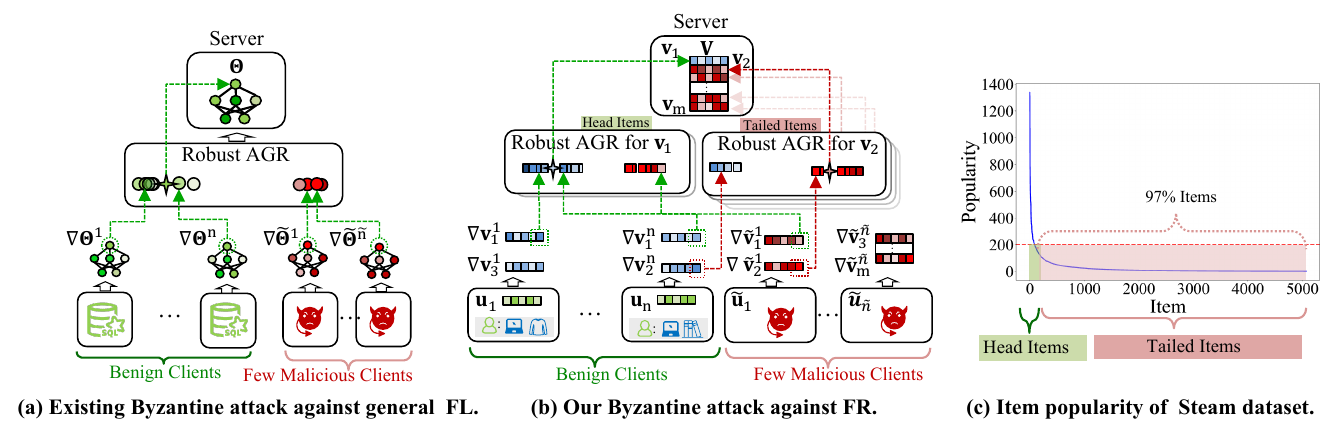}
    \vspace{-22.5pt}
    \caption{Analysis of Byzantine robustness in general FL and FR. Under Byzantine attacks, a robust aggregator can filter outliers in general FL, but fails to defend for tailed items' embedding.}
    \vspace{-17.5pt}
    \label{fig:concept}
\end{figure*}
\subsection{Centralized Recommendation}
Here, a recommender system contains a set of users $\mathcal{U}=\{u_{1}, \cdots, u_n \}$ and a set of items $\mathcal{V}=\{v_{1}, \cdots, v_m\}$, where $n$ and $m$ are the numbers of users and items, respectively.
Each user $u_i \in \mathcal{U}$ has a local training dataset $\mathcal{D}_i$, consisting of implicit feedback tuples $(u_i,v_j,r_{ij})$. These tuples represent user-item interactions (e.g., purchased, clicked), where \( r_{ij} = 1 \) and \( r_{ij} = 0 \) indicate positive and negative instances, respectively, i.e., whether \( u_i \) interacted with \( v_j \).
For each user $u_i$, we define $\mathcal{V}_{u_i}=\left\{v_j \in \mathcal{V}|(u_i,v_j,r_{ij}) \in \mathcal{D}_i \right\}$ as the set of the items that interact with $u_i$.
Let $\boldsymbol{U}=[\boldsymbol{u}_1, \cdots, \boldsymbol{u}_n]$ and $\boldsymbol{V} = [\boldsymbol{v}_1, \cdots, \boldsymbol{v}_m]$ denote the embeddings of users and items, respectively. 
The recommender system is trained to predict the rating score $\hat{r}_{ij}= f_{\boldsymbol{\Theta}}(\boldsymbol{u}_i,\boldsymbol{v}_j)$ between $u_i$ and $v_j$, where $\hat{r}_{ij}$ represents how much $u_{i}$ likes $v_{j}$, $f_{\boldsymbol{\Theta}}$ is the score function, $\{\boldsymbol{U},\boldsymbol{V},\boldsymbol{\Theta}\}$ are learnable parameters. 
Then, the system recommends an item list for each user that they might be interested in by sorting the rating scores. In traditional centralized training, the personal dataset $\mathcal{D}_i$ of each user $u_i$ is stored on a central server, yielding a total dataset $\mathcal{D}$ for model training, which will increase the privacy risks.

\subsection{Federated Recommendation}
Considering privacy issues, in FR, the privacy data $\mathcal{D}_i$ of user $u_i$ is kept on the local device. The shared model parameters $\boldsymbol{V}$ and $\boldsymbol{\Theta}$ are aggregated over clients by sending the local gradients to a central server. According to the base recommender, the parameters $\boldsymbol{\Theta}$ are different: in Matrix Factorization (MF) recommender models, the interaction function is fixed and $\boldsymbol{\Theta}$ is an empty set. In deep learning-based recommender models, $\boldsymbol{\Theta}$ is the set of weights of neural networks. 
Following~\cite{Rong2022FedRecAttackMP}, we adopt the classic and widely used MF as the base recommender for simplicity, where $f$ is fixed to be dot product, i.e., $\hat{r}_{i j}=\boldsymbol{u}_i \odot \boldsymbol{v}_j$. Following~\cite{Rong2022FedRecAttackMP}, we take Bayesian Personalized Ranking (BPR)~\cite{Rendle2009BPRBP}, a pairwise personalized ranking loss, as the local loss of each client:
\begin{align}\label{eq:L_i}
    \mathcal{L}_{i}(\boldsymbol{u}_i, \boldsymbol{V})=-\sum_{ \substack{v_{j}, \ v_{k}  \in \mathcal{V}_{u_i} \\ r_{ij}=1 \ \wedge \ r_{ik}=0}} \ln \sigma\left(\hat{r}_{ij} - \hat{r}_{ik}\right),
\end{align}
where $\sigma$ is the logistic sigmoid function. It assumes that the user prefers the positive items over all negative items. In each training iteration, the central server sends the current item embeddings $\boldsymbol{V}^t$ to all clients. For each user $u_{i}$, the client computes loss $\mathcal{L}_i(\boldsymbol{u}_i^t,\boldsymbol{V}^t)$ then locally updates its private user embedding at epoch $t$ as follows:
\begin{align}
    \boldsymbol{u}_{i}^{t+1} \leftarrow \boldsymbol{u}_{i}^t -\eta \cdot \nabla \boldsymbol{u}_{i}^t,
\end{align}
where $\eta$ is the learning rate. Then $u_i$ uploads its local item embedding gradients $\nabla \boldsymbol{V}^{i,t}$ to a central server. After collecting gradients from all clients, the server updates $\boldsymbol{V}^t$ by:
\begin{align}
    \boldsymbol{V}^{t+1} \leftarrow \boldsymbol{V}^{t}-\eta \cdot \sum_{i \in [n]} \nabla \boldsymbol{V}^{i,t}.
\end{align}
As shown in Fig.~\ref{fig:dense_sparse}(b), for each user $u_i$, the private interaction history (item list) and user embedding (green $\boldsymbol{u}_i$ vector) is preserved on the local client device, and only the gradients of item embeddings $\boldsymbol{V}$ are sent. Throughout the training stage, all users' privacy is well protected. 

\subsection{Byzantine Attack and Defense}
\textbf{Byzantine Attack.} In Byzantine attacks, the attacker aims to degrade model performance and even prevent convergence by controlling a few malicious clients. As shown in Fig.~\ref{fig:concept}(a),  malicious client $\tilde{u}_i$ is allowed to send arbitrary (red) gradient $\nabla \tilde{\boldsymbol{\Theta}}^i$. Following existing Byzantine attack studies~\cite{Xu2021ByzantinerobustFL, Fang2019LocalMP, GiladBaruch2019ALI}, considering the worst case, we assume attackers have full knowledge of all benign gradients $\{\nabla \boldsymbol{\Theta}^1, \cdots, \nabla \boldsymbol{\Theta}^n\}$ and all the malicious clients are collusive by default, which help to understand the severity of model poisoning threats. 

\textbf{Byzantine Defense.} Since servers have no access to the raw training data of clients, the defense is generally implemented on the server side as a robust aggregator, which can filter Byzantine updates and guarantee model convergence. As shown in Fig.~\ref{fig:concept}(a), let $\{\nabla \boldsymbol{\Theta}^1, \cdots, \nabla \boldsymbol{\Theta}^n\}$ be the gradient vectors of $n$ benign clients in FL. The server collects and aggregates the training gradient of each client model using a federated aggregator. In non-robust FL settings, coordinate-wise Mean in form of $\text{MEAN}(\nabla \boldsymbol{\Theta}^1, \cdots, \nabla \boldsymbol{\Theta}^n)=\frac{1}{n} \sum_{i=1}^{n} \nabla \boldsymbol{\Theta}^i$ is an effective aggregation rule. However, $\text{MEAN}$ can be manipulated by several malicious clients~\cite{PevaBlanchard2017MachineLW}. Therefore, multiple robust aggregators~\cite{DongYin2018ByzantineRobustDL, PevaBlanchard2017MachineLW, Xu2021ByzantinerobustFL} are proposed to filter the Byzantine updates. For example, coordinate-wise Median aggregator computes the median for each element $\Theta_i$ in parameter $\boldsymbol{\Theta}$ across all clients, yielding 0.5 breakdown point~\cite{DongYin2018ByzantineRobustDL}. Namely, when the fraction of malicious clients is less than 0.5, the Median aggregator can guarantee the model convergence under Byzantine attacks, yielding the correct gradient (green star) in Fig.~\ref{fig:concept}(a).

\section{Methodology}
In this section, we re-define the Byzantine robustness under sparse aggregation of FR, and theoretically point out the inherent vulnerability. Based on such vulnerability and considering attackers' knowledge and capability, we design a family of attack strategies, named Spattack.

\subsection{Problem Definition}
For Byzantine attacks, attackers can inject some Byzantine users $\tilde{\mathcal{U}}=\{\tilde{u}_1,\cdots,\tilde{u}_{\tilde{n}}\}$, limiting the proportion of malicious ones less than $\rho$, i.e., $\tilde{n}/(n+\tilde{n}) < \rho$.
A malicious client $\tilde{u}_i$ can upload arbitrary gradient values $\nabla \tilde{\boldsymbol{V}}^{i,t}$ at any epoch $t$, to directly perturb the item embedding. The server will collect and aggregate all gradients including benign $\{\nabla {\boldsymbol{V}}^{1,t}, \cdots, \nabla {\boldsymbol{V}}^{n,t}\}$ and malicious $\{ \nabla \tilde{\boldsymbol{V}}^{1,t}, \cdots, \nabla \tilde{\boldsymbol{V}}^{\tilde{n},t}\}$. Let $\text{AGR}(\cdot)$ be the aggregation operator of federated learning, which can be the most common $\text{MEAN}(\cdot)$ or statistically robust $\text{MEDIAN}(\cdot)$. Our Byzantine attacker aims to prevent model convergence, namely, keeping the recommendation loss $\mathcal{L}_i$ from decreasing.
Formally, in FR, the objective of the Byzantine attack is defined as the following optimization problem:
\begin{align}\label{eq:obj_1}
    & \max_{\{\nabla \tilde{\boldsymbol{V}}_i^t: i \in \tilde{n}\}} 
    \sum_{i=1}^{n} \left(\mathcal{L}_i(\boldsymbol{u}_i^{t+1}, \boldsymbol{V}^{t+1}) 
    - \mathcal{L}_i(\boldsymbol{u}_i^t, \boldsymbol{V}^t)\right), \notag \\
    & \text{s.t.} \ \boldsymbol{V}^{t+1} = \boldsymbol{V}^{t} - \eta \cdot 
    \text{AGR}(\{\nabla \boldsymbol{V}^{i,t}: i \in [n]\} \notag \\
    & \quad \quad \quad \quad \ \ \ \cup \{\nabla \tilde{\boldsymbol{V}}^{i,t}: i \in [\tilde{n}]\}), \notag \\
    & \quad \ \ \boldsymbol{u}_i^{t+1} = \boldsymbol{u}_i^{t} - \eta \nabla \boldsymbol{u}_i^t, 
    \quad \text{for } i \in [n], \notag \\
    & \quad \ \ \frac{\tilde{n}}{n+\tilde{n}} \leq \rho,
\end{align}
where attackers aim to find the optimal set of malicious gradients ${\{\nabla \tilde{\boldsymbol{V}}_i^t: i \in [\tilde{n}]\}}$, to raise the loss after updating. 
Before solving this optimization problem, we find that FR has a unique sparse aggregation mechanism defined as follows:
\begin{definition}
\textbf{(Dense/Sparse Aggregation).} Let $\boldsymbol{\theta} \in \mathbb{R}^d$ be the shared model parameter vector. If there exists an element $\theta_{i}$ ($i \in [d]$) for which only a subset of clients can produce valuable updates, the parameter $\boldsymbol{\theta}$ is sparsely aggregated. If all clients can support it, it is a dense aggregation. 
\end{definition}
As shown in Fig.~\ref{fig:concept}(a), in general FL, each element (green circle) of model parameters $\boldsymbol{\Theta}$ is assumed to be involved in all clients' loss functions, i.e., dense aggregation. Different from FL, for a client $u_i$ in FR, not all item embeddings $\boldsymbol{V}=\{\boldsymbol{v}_1,\cdots, \boldsymbol{v}_m\}$ are employed in local loss $\mathcal{L}_i$ in Eq.~\ref{eq:L_i}, i.e., sparse aggregation. For example, client $u_1$ in Fig.~\ref{fig:concept}(b) only computes valuable gradients $\{\nabla \boldsymbol{v}^1_1, \nabla \boldsymbol{v}^1_3\}$ for items $v_1$ and $v_3$, while the gradients for the remaining items are either zero or an empty set. When directly applying the aggregators on all $\boldsymbol{V}_i$, the embedding will be skewed towards zero value. Therefore, we need to adapt them to FR. 

\textbf{Adapting Dense Aggregator to Sparse.}
We adapt existing aggregators from dense to sparse aggregation by treating the aggregation for a single item as the smallest execution unit. As shown in Fig.~\ref{fig:concept}(b), the aggregator is conducted separately for each item. Taking the embedding of $j$-th item as an example, the embedding is updated by: 
\begin{equation}
    \boldsymbol{v}^{t+1}_j = \boldsymbol{v}^{t}_j - \eta \cdot \text{AGR}(\{\nabla \boldsymbol{v}_j^{i,t}| \text{ user } i  \in \mathcal{U}_{v_j} \}), \\
\end{equation}
where $\mathcal{U}_{v_j}$ is the set of users that the item $v_j$ interacts with, $\nabla \boldsymbol{v}_j^{i,t}$ is the gradient of item $v_j$ sent from client $u_i$ at epoch $t$. Only if the user $u_i$ has interaction with item $v_j$, the gradients $\nabla \boldsymbol{v}_j^{i,t}$ can be aggregated to $\boldsymbol{v}_j^{t+1}$ separately and concurrently. Intuitively, the numbers of received gradients are varied for different items, leading to each item having personal robustness. Therefore, we need to theoretically re-examine the convergence guarantee of existing aggregators against Byzantine attacks under the sparse aggregation. 

\subsection{Byzantine Robustness Analysis}

\textbf{Robustness of FR without Defense.} Like general FL, FR without defense often uses the Mean aggregator to compute the average of input gradients, which is highly susceptible to Byzantine attacks. Even one malicious client can also destroy the Mean aggregator as stated in Proposition~\ref{prop1}.

\begin{prop} For each item $v_j$, let $\{\nabla \boldsymbol{v}^{i,t}_j| \text{ user } i  \in \mathcal{U}_{v_j}\}$ be the set of benign gradient vectors at epoch $t$. Consider a Mean aggregator averaging updates for each element.
Let $\nabla \boldsymbol{\tilde{v}}_j$ be a malicious update with arbitrary values in $\mathbb{R}^{d}$. The output of
$\text{MEAN}(\{\nabla \boldsymbol{v}^{i,t}_j| \text{ user } i   \in \mathcal{U}_{v_j}\} \cup {\nabla \tilde{\boldsymbol{v}}_j})= \frac{1}{|\mathcal{U}_{v_j}|+1} (\sum_{i \in \mathcal{U}_{v_j}} \nabla \boldsymbol{v}_{j}^{i,t} + \nabla \tilde{\boldsymbol{v}}_j)$ can be controlled as zero vector by only single malicious $\nabla \boldsymbol{\tilde{v}}_j$. When all the items are attacked, one malicious client can prevent convergence.\label{prop1}
\end{prop}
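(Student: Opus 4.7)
The plan is to exhibit a single explicit choice of $\nabla \tilde{\boldsymbol{v}}_j$ that forces the aggregated output to the zero vector, then observe that one omniscient Byzantine client can execute this cancellation simultaneously across all items, freezing the server-side item embeddings at their initialization and thereby preventing the factorization from learning.

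\emph{Step 1: single-item cancellation.} The first step is a one-line algebraic construction. Setting
\begin{equation*}
\nabla \tilde{\boldsymbol{v}}_j \;=\; -\sum_{i \in \mathcal{U}_{v_j}} \nabla \boldsymbol{v}^{i,t}_j
\end{equation*}
makes the numerator inside $\text{MEAN}(\cdot)$ vanish, so the aggregator outputs $\boldsymbol{0} \in \mathbb{R}^d$. This vector lies in $\mathbb{R}^d$ and is therefore a valid Byzantine update, and it is computable by the attacker under the omniscience assumption already granted in the Byzantine model of the preceding section.

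\emph{Step 2: lifting to all items and concluding non-convergence.} Next I lift the per-item cancellation to a single attack vector. A single malicious client uploads $\nabla \tilde{\boldsymbol{V}}^{1,t}$ whose $j$-th block is the cancellation vector from Step~1 for every $j \in [m]$. Because the adapted aggregator from Eq.~(5) runs independently per item, Step~1 applies block-by-block and yields $\boldsymbol{v}_j^{t+1} = \boldsymbol{v}_j^t$ for every $j$, i.e., $\boldsymbol{V}^{t+1} = \boldsymbol{V}^t$. Iterating across all epochs $t$, the server-side item embeddings stay pinned at their random initialization $\boldsymbol{V}^0$ throughout training, so the shared model never learns item representations and the joint BPR objective in Eq.~(1) cannot approach the value attainable when both factors are trained; convergence of the recommender is therefore prevented.

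\textbf{Main obstacle.} The algebra itself is trivial, so the real delicacy is conceptual rather than computational. The first subtle point is that the cancellation requires per-item access to the benign gradients, which is permitted only because the attacker is assumed omniscient; without that hypothesis one would instead be in the Spattack-L regime the paper treats separately. The more important point is the correct reading of ``prevent convergence'': since user embeddings continue to update against the frozen $\boldsymbol{V}^0$, the sum of BPR losses may still drift downwards, so the conclusion must be interpreted as ``the item-embedding iterates are stationary, hence the matrix factorization cannot converge to any meaningful fixed point'', rather than as divergence of the loss. Pinning this interpretation down cleanly, and making clear that stationarity of $\boldsymbol{V}$ is the operational meaning of non-convergence here, is the step I would spend the most care on.
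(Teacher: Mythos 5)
Your proposal is correct and follows essentially the same route as the paper: both exhibit the single cancellation gradient $\nabla \tilde{\boldsymbol{v}}_j = -\sum_{i \in \mathcal{U}_{v_j}} \nabla \boldsymbol{v}^{i,t}_j$ for every item, observe that the Mean aggregator then outputs the zero vector per item, and conclude that the item embeddings never move. Your additional remark that ``prevent convergence'' should be read as stationarity of $\boldsymbol{V}$ at its initialization (rather than non-decrease of the loss, since the locally updated user embeddings may still reduce it somewhat) is a careful clarification the paper's one-line proof leaves implicit, but it does not change the argument.
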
 
\begin{proof}
If the attacker registers one malicious client, where the embedding gradient of each item $v_j$ is $\nabla \boldsymbol{\tilde{v}}_{j}= - \sum_{i \in \mathcal{U}_j} \nabla \boldsymbol{v}_{j}^{i,t}$, the output of aggregator is zero vector, which can prevent convergence. 
\end{proof}

\textbf{Robustness of FR with Defense.}
The most common defense method is to use aggregators that are statistically more robust against outliers than Mean.
In these defenses, FL models have a consistently high breakdown point, e.g., when $\rho < 50\%$, Median can theoretically guarantee the convergence of FL as proved in~\cite{DongYin2018ByzantineRobustDL}.
However, we find that FR models have varied breakdown points for different items, which depends on the item's degree. Specifically, each item embedding can only be updated by specific clients with whom the item interacts. Obviously, the popular item with massive updates is more robust. Unfortunately, in FR, only a few items interact frequently (head items), while the remaining items interact less frequently (tailed items). We plot the popularity (item degree) of the Steam recommendation dataset~\cite{Cheuque2019RecommenderSF} in Fig.~\ref{fig:concept}(c). We find that 97\% tailed items (red long tail area) have interactions less than 200 times, and only 3\% head items (green area) interact frequently over 200 times.
Therefore, existing statistically-based FL defenses will fail to guarantee the convergence of most items.
Formally, let $x$ be the degree of an item and $p(x)$ be its probability. We assume that the probability distribution can be defined as a typical power-law distribution $p(x) = Cx^{-\beta}$, where $C$ is a normalization constant and $\beta$ is the scaling parameter. The failure of defenses can be formally characterized as follows:
\begin{prop} \label{prop2}
Let $\alpha$ be the breakdown point of robust federated aggregator, the amount of benign and malicious clients are $n$ and $\tilde{n}$ respectively, and $\beta$ is the scaling parameter of the power-law distribution of items' degree with constant $C$. Then at least $1-
    \frac{C}{\beta-1} (\frac{1-\alpha}{\alpha} \tilde{n})^{(1-\beta)}$ percent of items' embeddings can be broken down.
\end{prop}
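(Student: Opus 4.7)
The plan is to convert the breakdown-point guarantee item-by-item into a threshold on the item degree, and then integrate the power-law tail above that threshold. The key conceptual point is that, under the reformulated per-item aggregation introduced in this section, each item's robustness is governed by the fraction of malicious updates it personally receives, which depends on its own degree rather than the global ratio $\tilde{n}/(n+\tilde{n})$.

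First I would fix an item $v_j$ with benign degree $x=|\mathcal{U}_{v_j}|$. Because the attacker is collusive and a malicious client may inject a gradient on any item, the per-item robust aggregator for $v_j$ sees up to $\tilde{n}$ Byzantine gradients against the $x$ honest ones, giving an effective malicious fraction of $\tilde{n}/(x+\tilde{n})$. By the definition of the breakdown point $\alpha$, the aggregator loses its convergence guarantee for $v_j$ precisely when
\[
\frac{\tilde{n}}{x+\tilde{n}} \;>\; \alpha
\;\Longleftrightarrow\;
x \;<\; \frac{1-\alpha}{\alpha}\,\tilde{n} \;=:\; x^{\star}.
\]
Thus every item whose degree lies below $x^{\star}$ is breakable; it remains to estimate the mass of such items under the power-law degree distribution.

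Next I would quantify the safe mass. Using $p(x)=Cx^{-\beta}$ (implicitly $\beta>1$ so the tail integral is finite), the fraction of items with degree at least $x^{\star}$ is
\[
\int_{x^{\star}}^{\infty} C x^{-\beta}\,dx
\;=\; \frac{C}{\beta-1}(x^{\star})^{1-\beta}
\;=\; \frac{C}{\beta-1}\left(\frac{1-\alpha}{\alpha}\,\tilde{n}\right)^{1-\beta}.
\]
Taking the complement yields the announced lower bound $1-\frac{C}{\beta-1}\bigl(\frac{1-\alpha}{\alpha}\tilde{n}\bigr)^{1-\beta}$ on the fraction of items that can be broken.

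The main obstacle is conceptual rather than computational: one has to justify carefully that the relevant malicious fraction for $v_j$ is the local $\tilde{n}/(x+\tilde{n})$ rather than the global $\tilde{n}/(n+\tilde{n})$, which is exactly where the sparse-aggregation reformulation pays off and where existing dense-aggregation analyses break. A secondary subtlety is the implicit requirement $\beta>1$ (otherwise the tail integral diverges) and the continuous power-law approximation to a discrete degree distribution; both deserve a one-line remark so the bound is stated precisely.
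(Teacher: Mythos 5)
Your proposal is correct and follows essentially the same route as the paper's proof: translate the breakdown point into the per-item degree threshold $x^{\star}=\frac{1-\alpha}{\alpha}\tilde{n}$ via the local malicious fraction $\tilde{n}/(x+\tilde{n})$, then integrate the power-law tail above $x^{\star}$ and take the complement. Your remarks on the implicit assumption $\beta>1$ and the continuous approximation are sound additions, and your tail integral is in fact written more cleanly than the paper's (which redundantly carries an extra factor of $C$ in the intermediate step).
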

The proof of Proposition~\ref{prop2} refers to the Appendix. Taking the Steam dataset as an example, the degree distribution of items can be modeled as a typical form of power-law distribution as shown in Fig.~\ref{fig:concept}(c). For example, if an attacker can control $\rho=5\%$ clients, each item can receive 197 malicious gradients at most. Clearly, for 97\% tailed items that interact less than 200 times, few malicious (red) updates can become the majority and dominate the aggregation. In this case, the statistically robust Median aggregator will pick the majority (red circles), yielding the malicious output (red star). In conclusion, due to the sparse aggregation vulnerability of FR, statistically robust aggregators in FL can also be easily broken down by Byzantine attacker.

\subsection{Spattack: Byzantine Attack Strategies}

\textbf{Intuition.}
In Eq.~\ref{eq:obj_1}, the attacker aims to keep the recommendation loss from decreasing to prevent recommender convergence.
Considering the unique vulnerability from sparse aggregation, i.e., the majority of tailed items have a lower breakdown point, we can conclude that: (1) The gradients are farther away from true gradients, the more considerable corruption is. (2) More items are disrupted in the training process, leading to more powerful attacks. 
Therefore, the attack objective of the proposed Spattack can be simplified to maximally uploading gradients farther away from true gradients and greedily disrupting the embeddings of items.

\begin{table}[tp]
    \centering
    \caption{Attack Taxonomy. For each malicious client in Spattack, knowledge means knowing benign gradients, and capability refers to poisoning all items.}
    \vspace{-8pt}
    \begin{tabular}{ccccc}
    \hline
    Spattack &O-D  &O-S &L-D  &L-S\\\hline
    Knowledge &\usym{2713} &\usym{2713} &\usym{2717} &\usym{2717} \\\hline 
    Capability  &\usym{2713} &\usym{2717} &\usym{2713}  &\usym{2717} \\
    \hline
    \end{tabular}
    \vspace{-18pt}
    \label{tab:taxonomy}
\end{table}
\textbf{Attack Taxonomy.} 
In real scenarios, depending on the attacker's knowledge about benign gradients and the maximum number of poisoned items in each malicious client, we outline different scenarios of Spattack that can be launched. As shown in Tab.~\ref{tab:taxonomy}, we have four possible scenarios:

\textbf{Spattack-O-D} is considered a worst case, where the attacker is both omniscient and omnipotent, i.e., attackers can obtain benign gradients at each epoch and the maximum number of poisoned items is not limited. 
Following the first intuition that the malicious gradients farther away from true gradients can cause larger corruption, attackers upload the gradients in the opposite direction of the benign ones. Formally, for a item $v_j$, we collect benign gradient $\nabla \boldsymbol{v}_{j}^{i,t}$ from $u_i$, where $u_i$ interacts with $v_j$, i.e., $u_i \in \mathcal{U}_{v_j}$. Then we compute the sum of the collected benign gradients to obtain the expected gradient $\nabla \boldsymbol{\bar{v}}_{j}^{t} = \sum_{u_i \in \mathcal{U}_{v_j}} \nabla \boldsymbol{v}_{j}^{i,t}$. Lastly, each malicious client $\tilde{u}_i \in \tilde{\mathcal{U}}$ will upload malicious gradients $\nabla \boldsymbol{\tilde{v}}_{j}^{i,t}= - \frac{1}{|\tilde{\mathcal{U}}|} \nabla \boldsymbol{\bar{v}}_{j}^{t}$. Following the second intuition that greedily disrupts items, the attack effectiveness will be maximized by uploading poisoning gradients for all items.
In this attack, the non-robust Mean aggregator will output zero gradients, while statistically robust aggregators will select malicious gradients for the majority of tailed items, preventing the convergence of item embeddings. According to Proposition~\ref{prop1} and Proposition~\ref{prop2}, even only having a small portion of malicious clients, Spattack-O-D can still guarantee to disrupt the majority of item embeddings.

\textbf{Spattack-L-D} uploads random noise as malicious gradients for all items, where attackers are non-omniscient but omnipotent, i.e., attackers do not have any knowledge about the benign gradients but can attack all items. Specifically, attackers construct the malicious gradient by randomly sampling from the Gaussian noise and keeping the same noise in all malicious clients. Under the Mean aggregator, the aggregated gradients can be skewed by such noise. Even worse, the statistically robust aggregators, e.g., Median, can pick the uploaded random noise as output for tailed items. So this attack can still prevent model convergence. 

\textbf{Spattack-O-S and Spattack-L-S} only upload malicious gradients for partial items, where attackers are non-omnipotent. Let $\tilde{m}_{max}$ be the maximum number of poisoned items in each malicious client. The larger $\tilde{m}_{max}$, the stronger the attack, but the excessive $\tilde{m}_{max}$ may lead to the attack being detected. To limit malicious users to behaving like benign users, we restrict $\tilde{m}_{max}$ as the maximum number of interactions in benign clients. Specifically, to make the injections of malicious clients as imperceptible and effective as possible, based on the distribution of item popularity, we use a sampling operation to determine the poisoned items for each malicious client. Therefore, the attacker can automatically assign more malicious gradients to the items having more interactions. Then we generate malicious gradients based on the opposite benign gradients (Spattack-O-S) or random noise (Spattack-L-S), respectively.

\begin{table}[tp]
    \centering
    \caption{Statistics of datasets.}
    \vskip -0.1in
    \resizebox{\linewidth}{!}{
    \begin{tabular}{c|c|c|c|c}
    \hline \text { Dataset } & \text { \#Users } & \text { \#Items } & \text { \#Edges }  & \text { Sparsity } \\
    \hline \text { ML100K } & 943 & 1,682 & 100,000 & 93.70 \% \\
    \hline \text { ML1M } & 6,040 & 3,706 & 1,000,209  & 95.53 \% \\
    \hline \text { Steam } & 3,753 & 5,134 & 114,713 & 99.40 \% \\
    \hline
    \end{tabular}}
    \vskip -0.2in
    \label{tab:dataset}
\end{table}
\begin{table*}[ht]
\caption{
Comparison of Spattack with baselines under a 3\% malicious rate. Lower scores represent better attack effectiveness. We additionally report the performance drop (\%) compared with the performance on the clean model.
}
\label{tab:sota}
\vskip -0.15in
\centering
\setlength{\extrarowheight}{2pt}
\resizebox{\linewidth}{!}{
\begin{tabular}{c|c|c|cccccc|cccc}
\hline
Dataset                 & Metric  & Clean & LabelFlip                                                & FedAttack                                                & Gaussian                                                 & LIE                                                      & Cluster                                                  & Fang                                                     & Type L-S                                                & Type L-D                                                         & Type O-S                                                 & Type O-D                                                          \\ \hline
\multirow{7}{*}{ML100K} & HR@5    & \begin{tabular}[c]{@{}c@{}}0.2513\end{tabular} & \begin{tabular}[c]{@{}c@{}}0.2517\\ (-3\%)\end{tabular}  & \begin{tabular}[c]{@{}c@{}}0.2550\\ (-2\%)\end{tabular}  & \begin{tabular}[c]{@{}c@{}}0.2550\\ (-2\%)\end{tabular}  & \begin{tabular}[c]{@{}c@{}}0.2539\\ (-2\%)\end{tabular}  & \begin{tabular}[c]{@{}c@{}}0.2461\\ (-5\%)\end{tabular}  & \begin{tabular}[c]{@{}c@{}}0.1957\\ (-25\%)\end{tabular} & \begin{tabular}[c]{@{}c@{}}0.1018\\ (-59\%)\end{tabular} & \begin{tabular}[c]{@{}c@{}}0.0721\\ (-71\%)\end{tabular}          & \begin{tabular}[c]{@{}c@{}}0.0594\\ (-76\%)\end{tabular} & \textbf{\begin{tabular}[c]{@{}c@{}}0.0530\\ (-79\%)\end{tabular}} \\
                        & nDCG@5  & \begin{tabular}[c]{@{}c@{}}0.1643\end{tabular} & \begin{tabular}[c]{@{}c@{}}0.1706\\ (-3\%)\end{tabular}  & \begin{tabular}[c]{@{}c@{}}0.1721\\ (-2\%)\end{tabular}  & \begin{tabular}[c]{@{}c@{}}0.1729\\ (-1\%)\end{tabular}  & \begin{tabular}[c]{@{}c@{}}0.1724\\ (-2\%)\end{tabular}  & \begin{tabular}[c]{@{}c@{}}0.1678\\ (-4\%)\end{tabular}  & \begin{tabular}[c]{@{}c@{}}0.1229\\ (-30\%)\end{tabular} & \begin{tabular}[c]{@{}c@{}}0.0620\\ (-62\%)\end{tabular} & \begin{tabular}[c]{@{}c@{}}0.0380\\ (-77\%)\end{tabular}          & \begin{tabular}[c]{@{}c@{}}0.0362\\ (-78\%)\end{tabular} & \textbf{\begin{tabular}[c]{@{}c@{}}0.0339\\ (-79\%)\end{tabular}} \\
                        & HR@10   & \begin{tabular}[c]{@{}c@{}}0.4051\end{tabular} & \begin{tabular}[c]{@{}c@{}}0.4083\\ (-2\%)\end{tabular}  & \begin{tabular}[c]{@{}c@{}}0.4094\\ (-2\%)\end{tabular}  & \begin{tabular}[c]{@{}c@{}}0.4116\\ (-2\%)\end{tabular}  & \begin{tabular}[c]{@{}c@{}}0.4116\\ (-2\%)\end{tabular}  & \begin{tabular}[c]{@{}c@{}}0.3982\\ (-5\%)\end{tabular}  & \begin{tabular}[c]{@{}c@{}}0.2919\\ (-30\%)\end{tabular} & \begin{tabular}[c]{@{}c@{}}0.2163\\ (-47\%)\end{tabular} & \begin{tabular}[c]{@{}c@{}}0.1601\\ (-60\%)\end{tabular}          & \begin{tabular}[c]{@{}c@{}}0.0997\\ (-75\%)\end{tabular} & \textbf{\begin{tabular}[c]{@{}c@{}}0.0944\\ (-77\%)\end{tabular}} \\
                        & nDCG@10 & \begin{tabular}[c]{@{}c@{}}0.2131\end{tabular} & \begin{tabular}[c]{@{}c@{}}0.2206\\ (-2\%)\end{tabular}  & \begin{tabular}[c]{@{}c@{}}0.2213\\ (-2\%)\end{tabular}  & \begin{tabular}[c]{@{}c@{}}0.2230\\ (-1\%)\end{tabular}  & \begin{tabular}[c]{@{}c@{}}0.2229\\ (-1\%)\end{tabular}  & \begin{tabular}[c]{@{}c@{}}0.2166\\ (-4\%)\end{tabular}  & \begin{tabular}[c]{@{}c@{}}0.1541\\ (-32\%)\end{tabular} & \begin{tabular}[c]{@{}c@{}}0.0980\\ (-54\%)\end{tabular} & \begin{tabular}[c]{@{}c@{}}0.0658\\ (-69\%)\end{tabular}          & \begin{tabular}[c]{@{}c@{}}0.0492\\ (-77\%)\end{tabular} & \textbf{\begin{tabular}[c]{@{}c@{}}0.0470\\ (-78\%)\end{tabular}} \\ \hline
\multirow{7}{*}{ML1M}   & HR@5    & \begin{tabular}[c]{@{}c@{}}0.3121\end{tabular} & \begin{tabular}[c]{@{}c@{}}0.3051\\ (-1\%)\end{tabular}  & \begin{tabular}[c]{@{}c@{}}0.3056\\ (-1\%)\end{tabular}  & \begin{tabular}[c]{@{}c@{}}0.3053\\ (-1\%)\end{tabular}  & \begin{tabular}[c]{@{}c@{}}0.3054\\ (-1\%)\end{tabular}  & \begin{tabular}[c]{@{}c@{}}0.3033\\ (-2\%)\end{tabular}  & \begin{tabular}[c]{@{}c@{}}0.2827\\ (-9\%)\end{tabular}  & \begin{tabular}[c]{@{}c@{}}0.1007\\ (-68\%)\end{tabular} & \begin{tabular}[c]{@{}c@{}}0.0921\\ (-71\%)\end{tabular}          & \begin{tabular}[c]{@{}c@{}}0.0925\\ (-70\%)\end{tabular} & \textbf{\begin{tabular}[c]{@{}c@{}}0.0907\\ (-71\%)\end{tabular}} \\
                        & nDCG@5  & \begin{tabular}[c]{@{}c@{}}0.2054\end{tabular} & \begin{tabular}[c]{@{}c@{}}0.2013\\ (-2\%)\end{tabular}  & \begin{tabular}[c]{@{}c@{}}0.2021\\ (-1\%)\end{tabular}  & \begin{tabular}[c]{@{}c@{}}0.2017\\ (-1\%)\end{tabular}  & \begin{tabular}[c]{@{}c@{}}0.2018\\ (-1\%)\end{tabular}  & \begin{tabular}[c]{@{}c@{}}0.2004\\ (-2\%)\end{tabular}  & \begin{tabular}[c]{@{}c@{}}0.1858\\ (-9\%)\end{tabular}  & \begin{tabular}[c]{@{}c@{}}0.0581\\ (-72\%)\end{tabular} & \textbf{\begin{tabular}[c]{@{}c@{}}0.0521\\ (-75\%)\end{tabular}} & \begin{tabular}[c]{@{}c@{}}0.0553\\ (-73\%)\end{tabular} & \begin{tabular}[c]{@{}c@{}}0.0549\\ (-73\%)\end{tabular}          \\
                        & HR@10   & \begin{tabular}[c]{@{}c@{}}0.4626\end{tabular} & \begin{tabular}[c]{@{}c@{}}0.4632\\ (-1\%)\end{tabular}  & \begin{tabular}[c]{@{}c@{}}0.4634\\ (-1\%)\end{tabular}  & \begin{tabular}[c]{@{}c@{}}0.4634\\ (-1\%)\end{tabular}  & \begin{tabular}[c]{@{}c@{}}0.4634\\ (-1\%)\end{tabular}  & \begin{tabular}[c]{@{}c@{}}0.4592\\ (-2\%)\end{tabular}  & \begin{tabular}[c]{@{}c@{}}0.3977\\ (-15\%)\end{tabular} & \begin{tabular}[c]{@{}c@{}}0.2141\\ (-54\%)\end{tabular} & \begin{tabular}[c]{@{}c@{}}0.1935\\ (-58\%)\end{tabular}          & \begin{tabular}[c]{@{}c@{}}0.1753\\ (-62\%)\end{tabular} & \textbf{\begin{tabular}[c]{@{}c@{}}0.1679\\ (-64\%)\end{tabular}} \\
                        & nDCG@10 & \begin{tabular}[c]{@{}c@{}}0.2539\end{tabular} & \begin{tabular}[c]{@{}c@{}}0.2522\\ (-1\%)\end{tabular}  & \begin{tabular}[c]{@{}c@{}}0.2528\\ (-1\%)\end{tabular}  & \begin{tabular}[c]{@{}c@{}}0.2526\\ (-1\%)\end{tabular}  & \begin{tabular}[c]{@{}c@{}}0.2526\\ (-1\%)\end{tabular}  & \begin{tabular}[c]{@{}c@{}}0.2506\\ (-2\%)\end{tabular}  & \begin{tabular}[c]{@{}c@{}}0.2231\\ (-12\%)\end{tabular} & \begin{tabular}[c]{@{}c@{}}0.0939\\ (-63\%)\end{tabular} & \begin{tabular}[c]{@{}c@{}}0.0846\\ (-67\%)\end{tabular}          & \begin{tabular}[c]{@{}c@{}}0.0817\\ (-68\%)\end{tabular} & \textbf{\begin{tabular}[c]{@{}c@{}}0.0793\\ (-69\%)\end{tabular}} \\ \hline
\multirow{7}{*}{Steam}  & HR@5    & \begin{tabular}[c]{@{}c@{}}0.5729\end{tabular} & \begin{tabular}[c]{@{}c@{}}0.4792\\ (-15\%)\end{tabular} & \begin{tabular}[c]{@{}c@{}}0.4798\\ (-15\%)\end{tabular} & \begin{tabular}[c]{@{}c@{}}0.4879\\ (-14\%)\end{tabular} & \begin{tabular}[c]{@{}c@{}}0.4862\\ (-14\%)\end{tabular} & \begin{tabular}[c]{@{}c@{}}0.4263\\ (-25\%)\end{tabular} & \begin{tabular}[c]{@{}c@{}}0.0278\\ (-95\%)\end{tabular} & \begin{tabular}[c]{@{}c@{}}0.0426\\ (-93\%)\end{tabular} & \textbf{\begin{tabular}[c]{@{}c@{}}0.0139\\ (-98\%)\end{tabular}} & \begin{tabular}[c]{@{}c@{}}0.0671\\ (-88\%)\end{tabular} & \begin{tabular}[c]{@{}c@{}}0.0685\\ (-88\%)\end{tabular}          \\
                        & nDCG@5  & \begin{tabular}[c]{@{}c@{}}0.3815\end{tabular} & \begin{tabular}[c]{@{}c@{}}0.3157\\ (-17\%)\end{tabular} & \begin{tabular}[c]{@{}c@{}}0.3172\\ (-16\%)\end{tabular} & \begin{tabular}[c]{@{}c@{}}0.3216\\ (-15\%)\end{tabular} & \begin{tabular}[c]{@{}c@{}}0.3209\\ (-15\%)\end{tabular} & \begin{tabular}[c]{@{}c@{}}0.2750\\ (-27\%)\end{tabular} & \begin{tabular}[c]{@{}c@{}}0.0160\\ (-96\%)\end{tabular} & \begin{tabular}[c]{@{}c@{}}0.0261\\ (-93\%)\end{tabular} & \textbf{\begin{tabular}[c]{@{}c@{}}0.0080\\ (-98\%)\end{tabular}} & \begin{tabular}[c]{@{}c@{}}0.0390\\ (-90\%)\end{tabular} & \begin{tabular}[c]{@{}c@{}}0.0408\\ (-89\%)\end{tabular}          \\
                        & HR@10   & \begin{tabular}[c]{@{}c@{}}0.6933\end{tabular} & \begin{tabular}[c]{@{}c@{}}0.6429\\ (-7\%)\end{tabular}  & \begin{tabular}[c]{@{}c@{}}0.6431\\ (-7\%)\end{tabular}  & \begin{tabular}[c]{@{}c@{}}0.6474\\ (-6\%)\end{tabular}  & \begin{tabular}[c]{@{}c@{}}0.6471\\ (-6\%)\end{tabular}  & \begin{tabular}[c]{@{}c@{}}0.6220\\ (-10\%)\end{tabular} & \begin{tabular}[c]{@{}c@{}}0.0619\\ (-91\%)\end{tabular} & \begin{tabular}[c]{@{}c@{}}0.0834\\ (-88\%)\end{tabular} & \textbf{\begin{tabular}[c]{@{}c@{}}0.0322\\ (-95\%)\end{tabular}} & \begin{tabular}[c]{@{}c@{}}0.1308\\ (-81\%)\end{tabular} & \begin{tabular}[c]{@{}c@{}}0.1287\\ (-81\%)\end{tabular}          \\
                        & nDCG@10 & \begin{tabular}[c]{@{}c@{}}0.4207\end{tabular} & \begin{tabular}[c]{@{}c@{}}0.3685\\ (-12\%)\end{tabular} & \begin{tabular}[c]{@{}c@{}}0.3700\\ (-12\%)\end{tabular} & \begin{tabular}[c]{@{}c@{}}0.3732\\ (-11\%)\end{tabular} & \begin{tabular}[c]{@{}c@{}}0.3730\\ (-11\%)\end{tabular} & \begin{tabular}[c]{@{}c@{}}0.3386\\ (-19\%)\end{tabular} & \begin{tabular}[c]{@{}c@{}}0.0269\\ (-94\%)\end{tabular} & \begin{tabular}[c]{@{}c@{}}0.0391\\ (-91\%)\end{tabular} & \textbf{\begin{tabular}[c]{@{}c@{}}0.0138\\ (-97\%)\end{tabular}} & \begin{tabular}[c]{@{}c@{}}0.0593\\ (-86\%)\end{tabular} & \begin{tabular}[c]{@{}c@{}}0.0601\\ (-86\%)\end{tabular}          \\ \hline
\end{tabular}}
\vskip -0.25in
\end{table*}
\section{Experiment}
We conduct extensive experiments to answer the following research questions. \textbf{RQ1}: How does Spattack perform compared with existing Byzantine attacks? \textbf{RQ2}: Can Spattack break the defenses deployed on FR? \textbf{RQ3}: Can Spattack transfer to different FR systems? \textbf{RQ4}: How do hyper-parameters impact on Spattack? Given the limited space, 
please refer to the Appendix for more detailed experiments.

\subsection{Experimental Setup}
\textbf{Datasets and Federated Recommender Systems.} Following~\cite{Rong2022FedRecAttackMP}, Spattack is evaluated on three widely used datasets, including movie recommendation datasets ML1M and ML100K~\cite{Harper2016TheMD}, and game recommendation dataset Steam~\cite{Cheuque2019RecommenderSF}. The dataset statistics refer to Tab.~\ref{tab:dataset}. The test set is divided with the leave-one-out method, where the latest interaction of a user is left as the test set and the remaining interactions as the training set. FedMF~\cite{Rong2022FedRecAttackMP} and the SOTA FedGNN~\cite{Wu2021FedGNN} are selected as evaluation models. More dataset and reproducibility details are in the Appendix.
    
\textbf{Evaluation Protocols.} 
We utilize two common evaluation protocols, including hit ratio (HR) and normalized discounted cumulative gain (nDCG) at ranks 5 and 10. For each user, since ranking the test item among all items is time-consuming, following the widely-used strategy~\cite{He2017NeuralCF}, we randomly sample 100 items that do not interact with the user, then rank the test item among the 100 items. Notably, all metrics are only calculated on benign clients. 

\textbf{Baselines}.
We compare Spattack with two categories of methods. First is the data poisoning attack, where attackers generate malicious gradients by modifying training data. LabelFlip~\cite{ValeTolpegin2020DataPA} flips training labels for poisoning, while FedAttack~\cite{Wu2022FedAttackEA} uses misaligned samples. Second is model poisoning attacks, where attackers directly modify the uploaded gradients. Gaussian~\cite{Fang2019LocalMP} estimates the Gaussian distribution of benign gradients and then samples from it. LIE~\cite{GiladBaruch2019ALI} adds small amounts of noise towards the average of benign gradients. Cluster~\cite{Yu2022UntargetedAA} uploads malicious gradients that aim to make item embeddings collapse into several dense clusters. Fang~\cite{Fang2019LocalMP} adds noise to opposite directions of the average normal gradient. More details can be found in the Appendix.

\textbf{Byzantine Defense Strategies.} We evaluate Spattack performance under the following defense strategies: Mean is the vanilla non-robust aggregator that computes the mean value of gradients for each dimension. Median ~\cite{DongYin2018ByzantineRobustDL} is a statistically robust aggregator with a 0.5 breakdown point, computing the element-wise median value. Trimmed-mean ~\cite{DongYin2018ByzantineRobustDL} trims several extreme values for each dimension and then averages the rest. Krum ~\cite{PevaBlanchard2017MachineLW} picks the gradient that is the most similar to other uploaded gradients. Norm~\cite{AnandaTheerthaSuresh2019CanYR} clips the norm of gradients with a given threshold.

\begin{figure*}[tp]
\centering
\minipage{0.23\textwidth}
\centering
 \subfigure[Steam (Spattack-O-D)]{\includegraphics[ width=\textwidth]{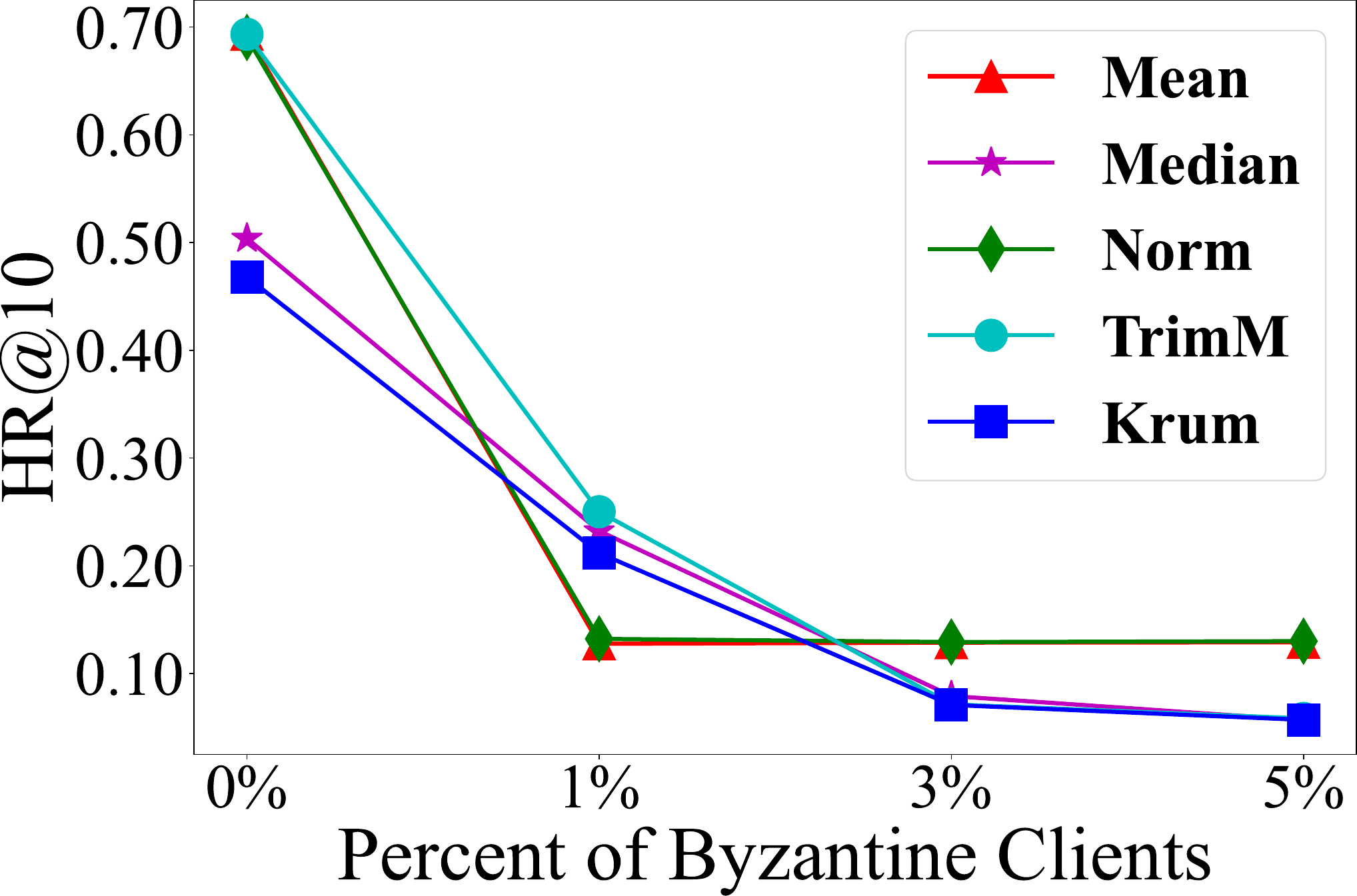}}
\endminipage\hfill
\minipage{0.23\textwidth}
\centering
 \subfigure[Steam (Spattack-O-S)]{\includegraphics[ width=\textwidth]{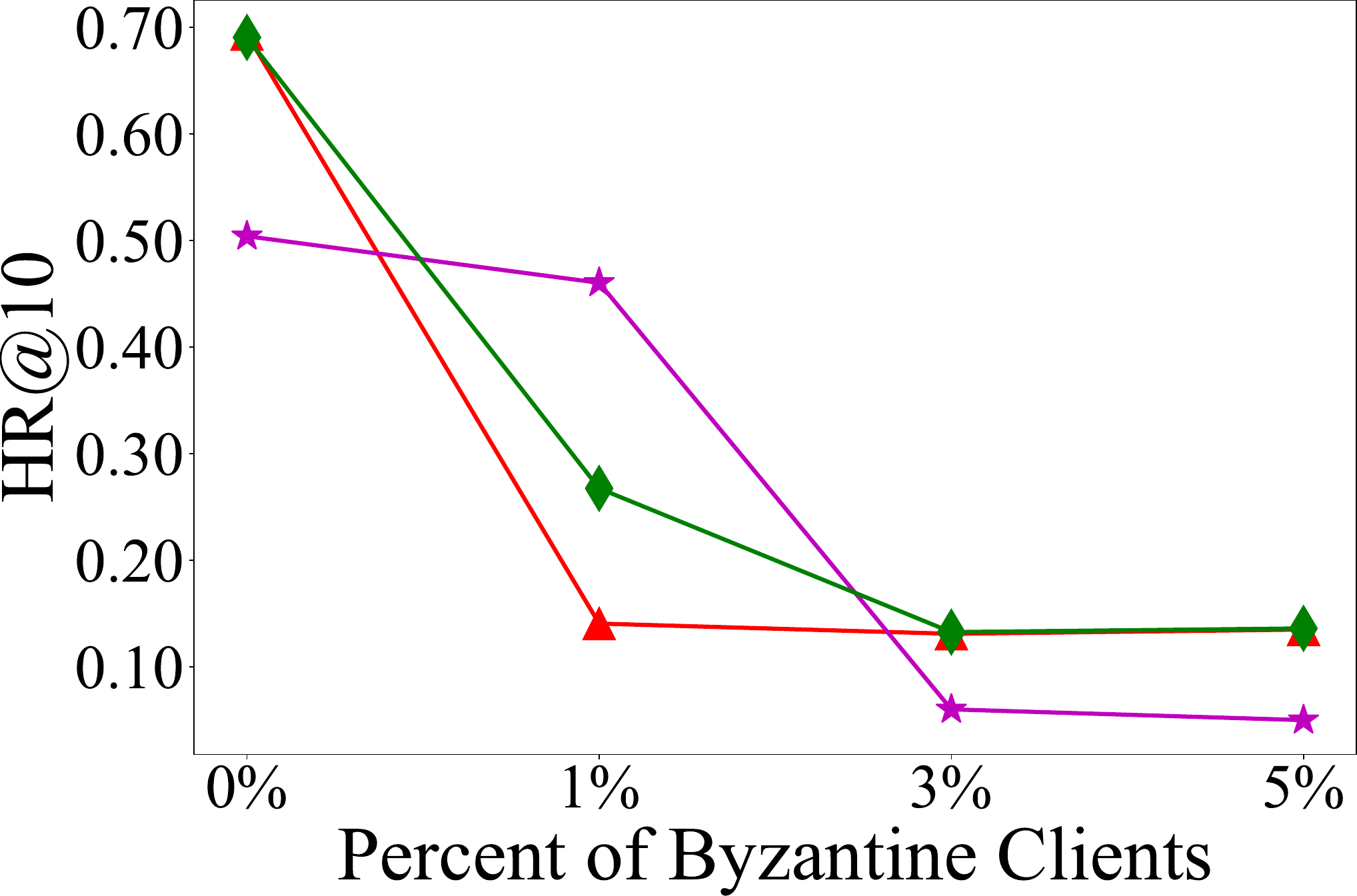}}
\endminipage\hfill
\minipage{0.23\textwidth}
\centering
 \subfigure[Steam (Spattack-L-D)]{\includegraphics[ width=\textwidth]{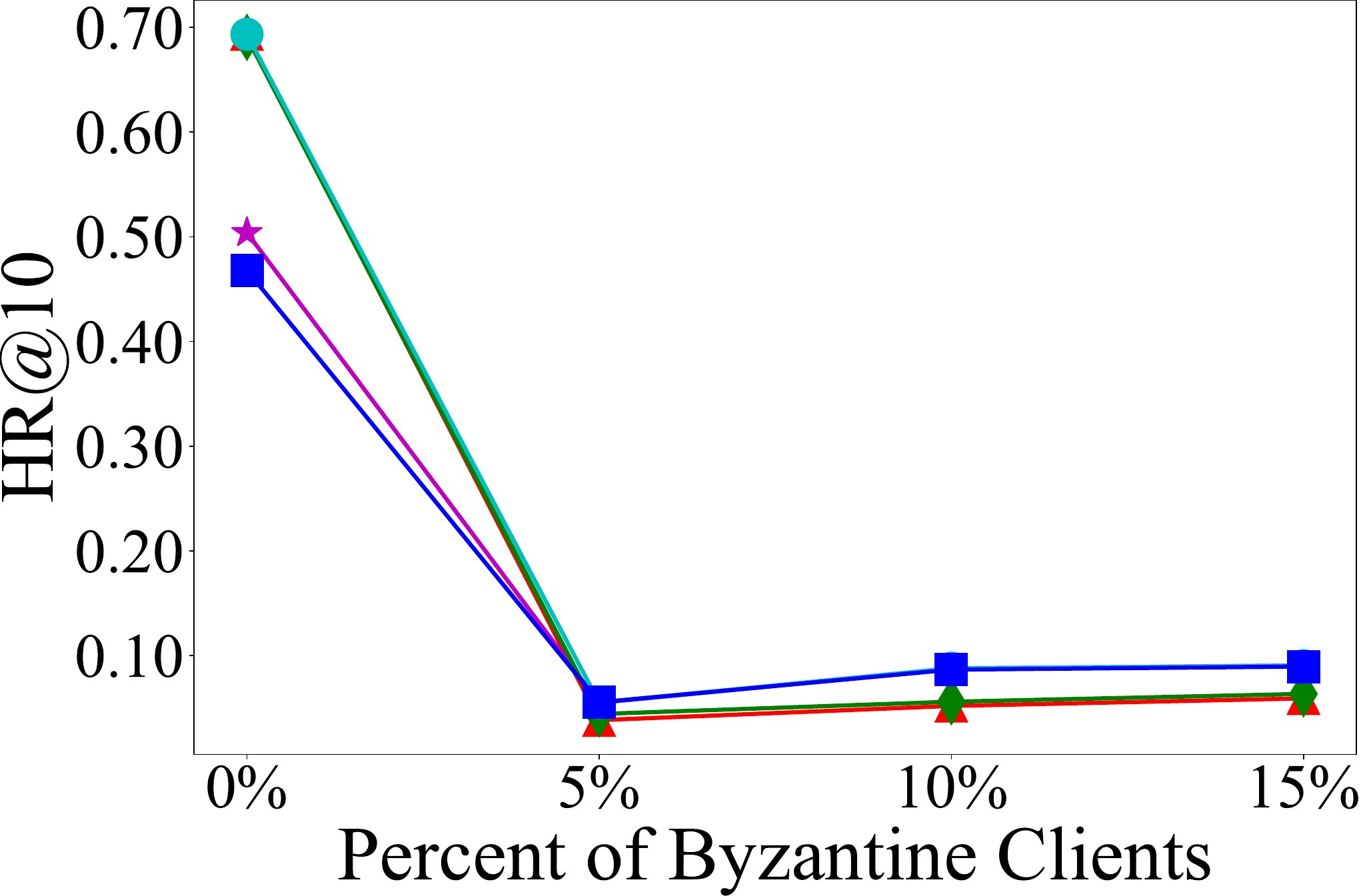}}
\endminipage\hfill
\minipage{0.23\textwidth}
\centering
 \subfigure[Steam (Spattack-L-S)]{ \includegraphics[ width=\textwidth]{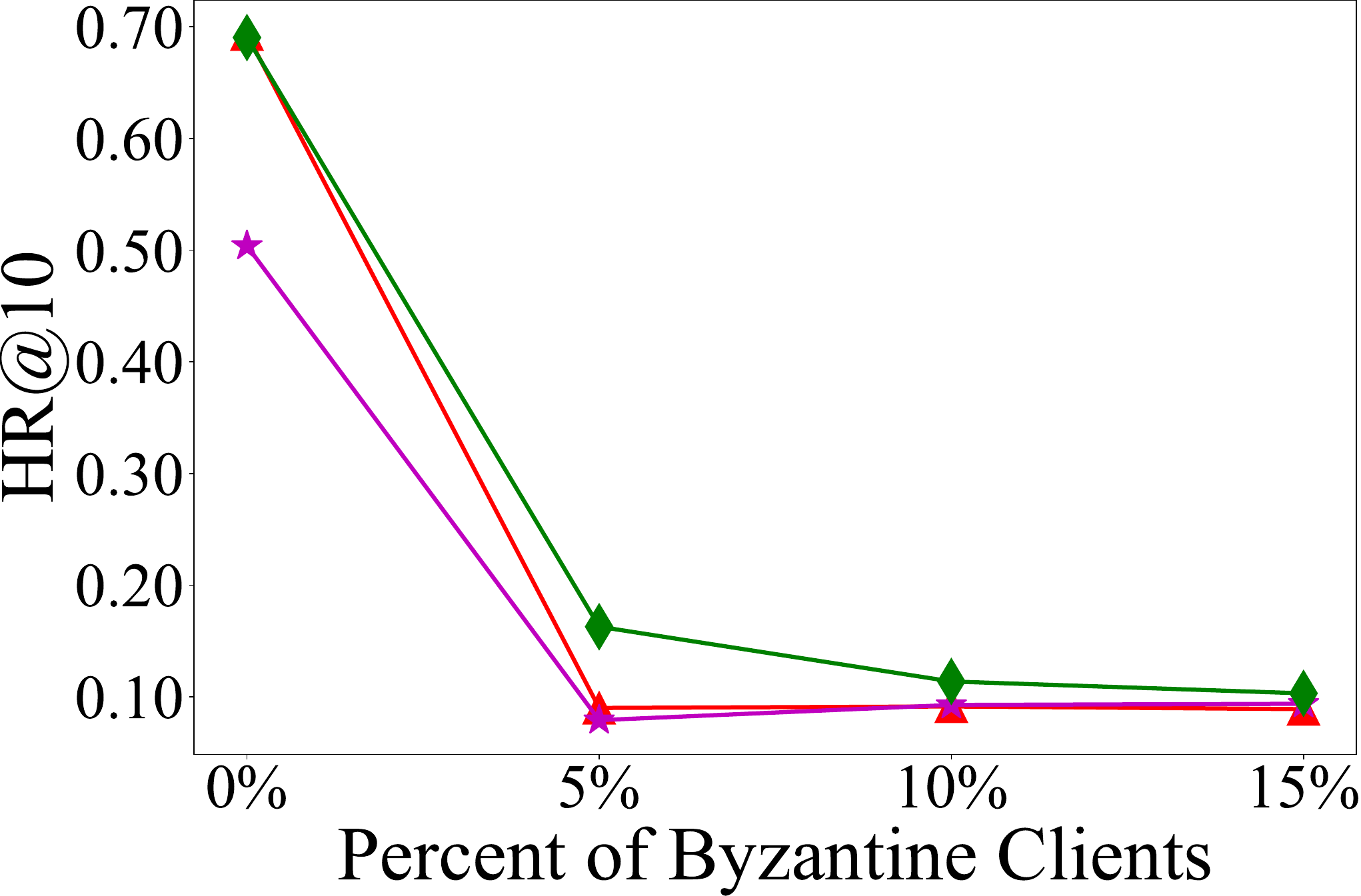}}
\endminipage\hfill

\centering
\minipage{0.23\textwidth}
\centering
 \subfigure[ML100K (Spattack-O-D)]{\includegraphics[ width=\textwidth]{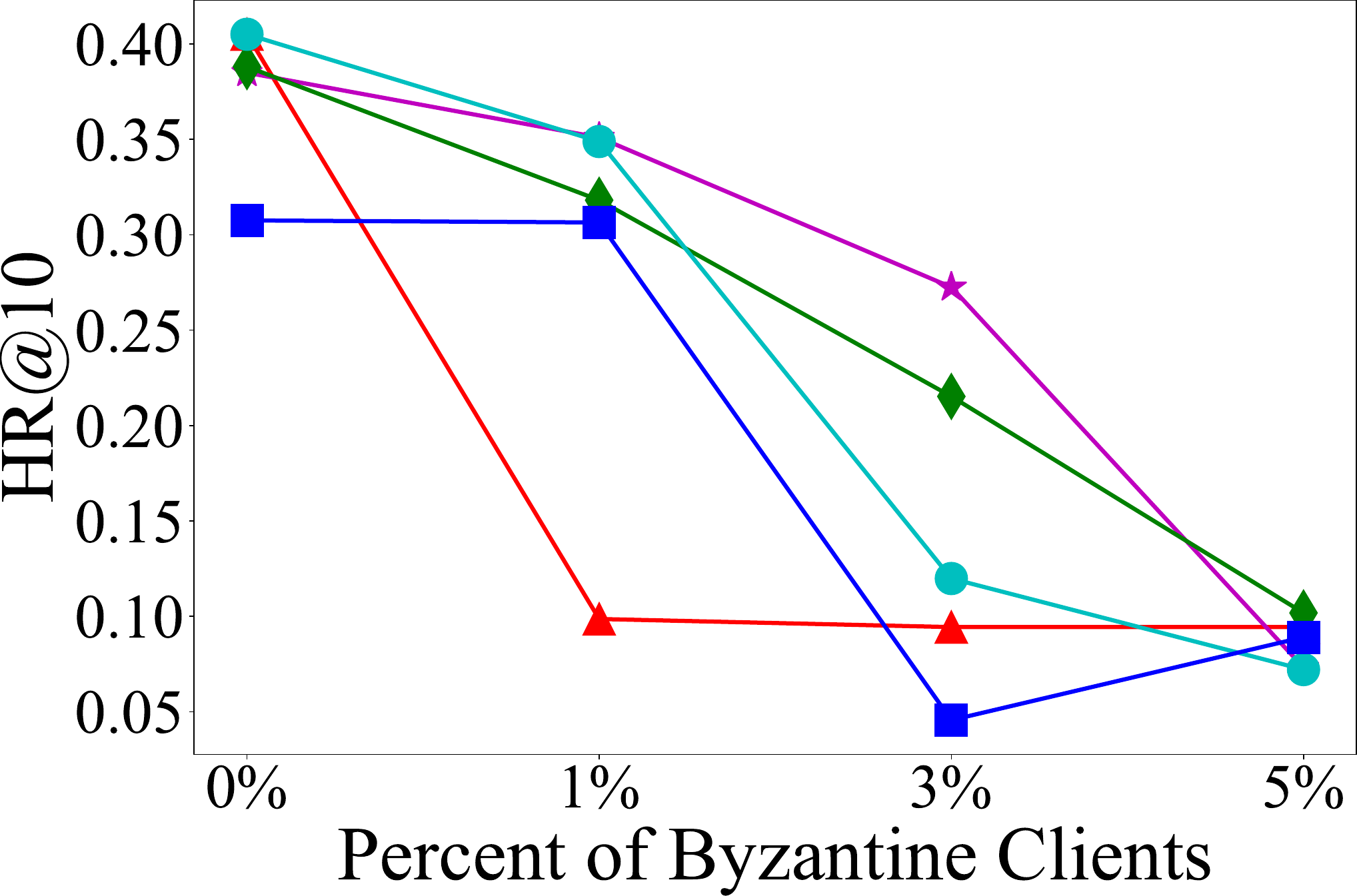}}
\endminipage\hfill
\minipage{0.23\textwidth}
\centering
 \subfigure[ML100K (Spattack-O-S)]{\includegraphics[ width=\textwidth]{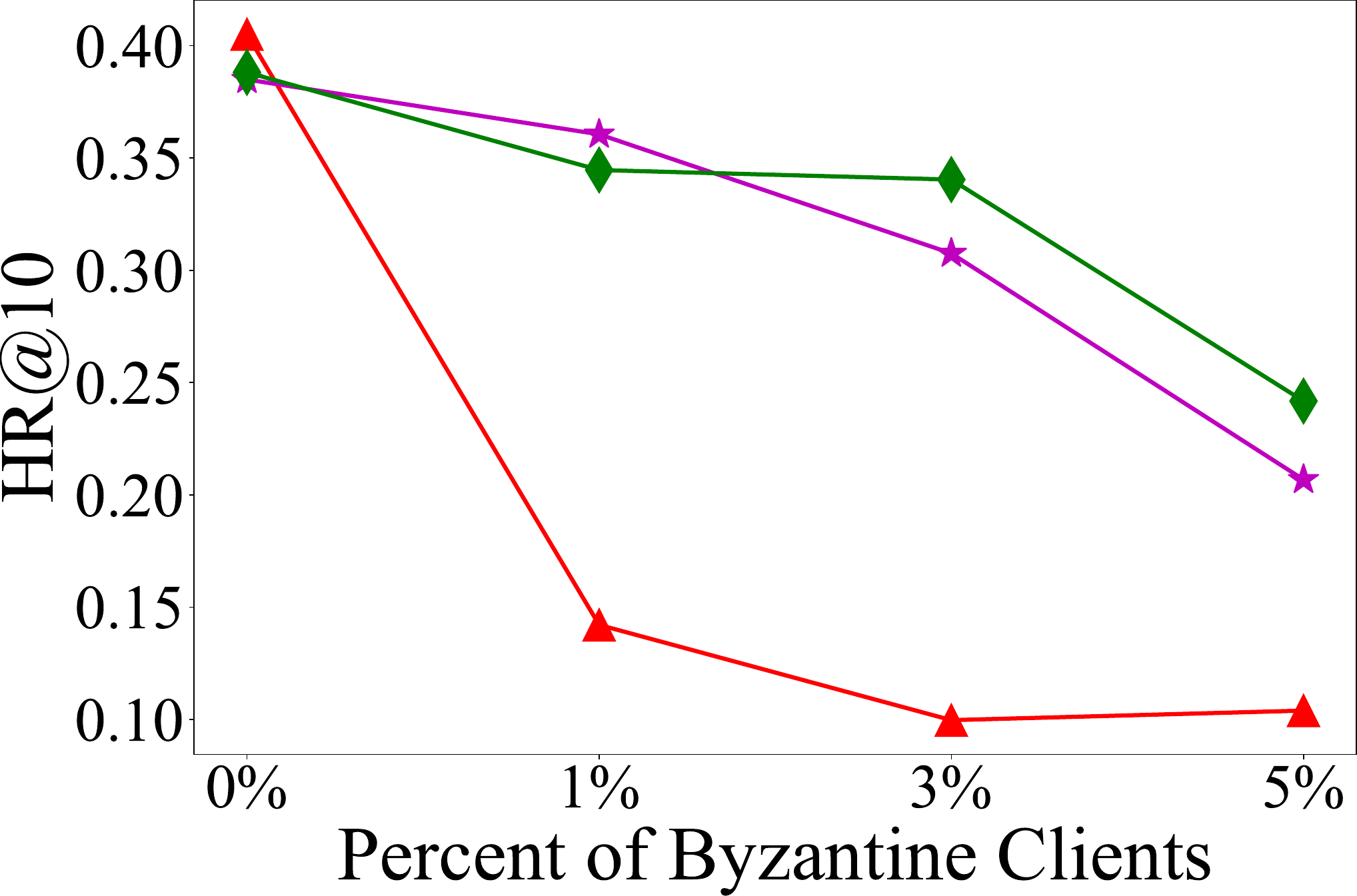}}
\endminipage\hfill
\minipage{0.23\textwidth}
\centering
 \subfigure[ML100K (Spattack-L-D)]{\includegraphics[ width=\textwidth]{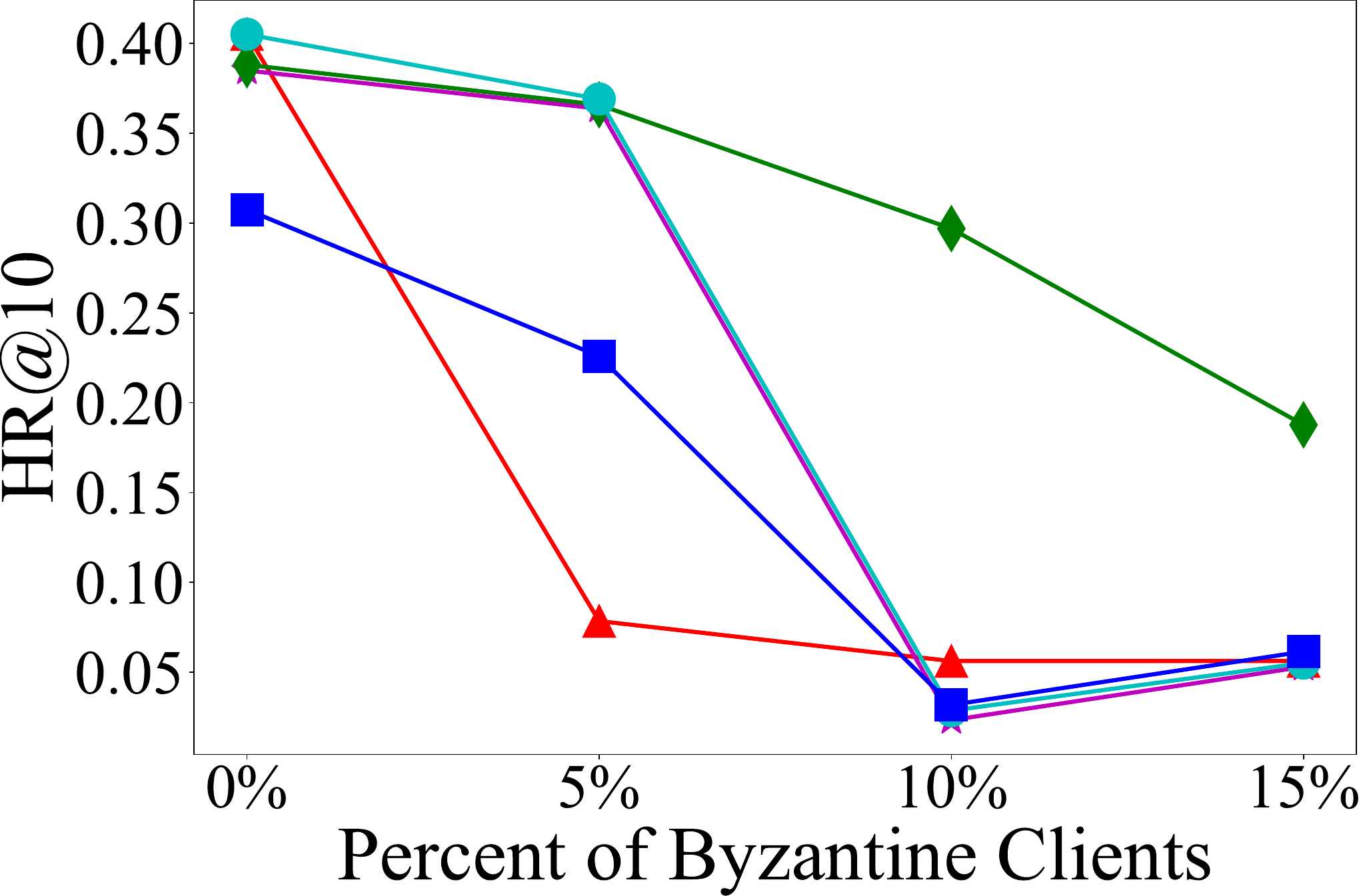}}
\endminipage\hfill
\minipage{0.23\textwidth}
\centering
 \subfigure[ML100K (Spattack-L-S)]{ \includegraphics[ width=\textwidth]{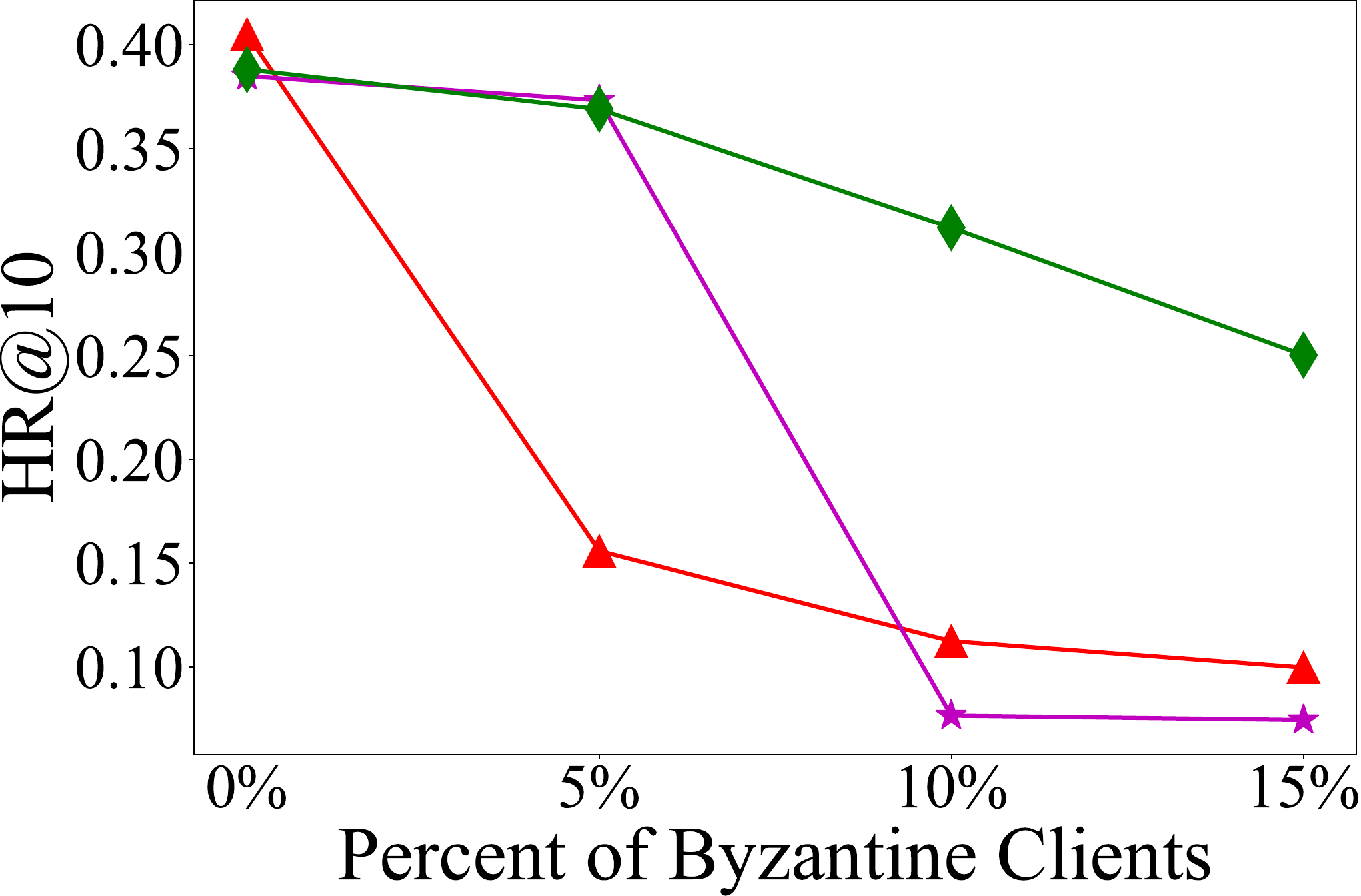}}
\endminipage\hfill

\centering
\minipage{0.23\textwidth}
\centering
 \subfigure[ML1M (Spattack-O-D)]{\includegraphics[ width=\textwidth]{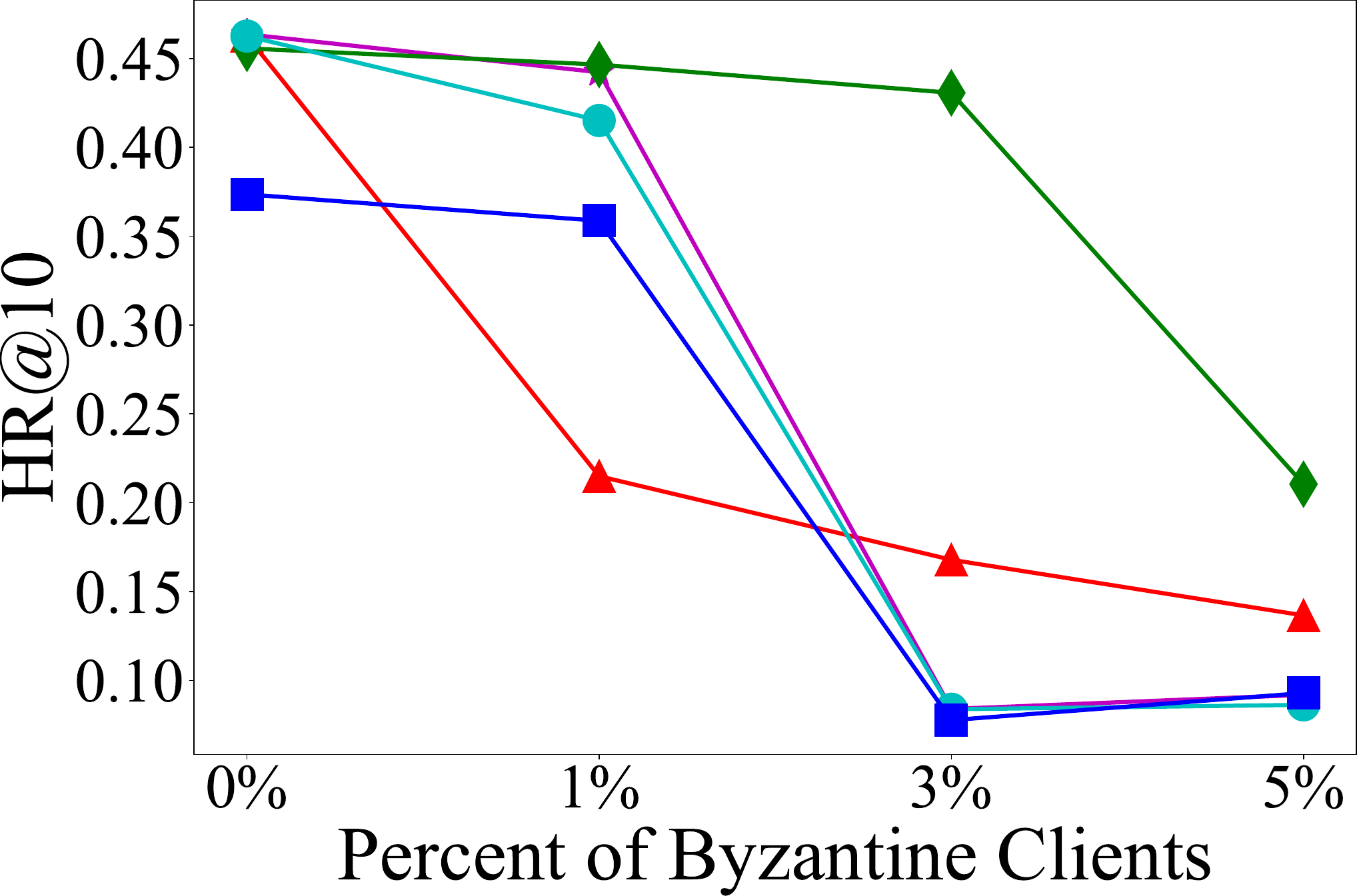}}
\endminipage\hfill
\minipage{0.23\textwidth}
\centering
 \subfigure[ML1M (Spattack-O-S)]{\includegraphics[ width=\textwidth]{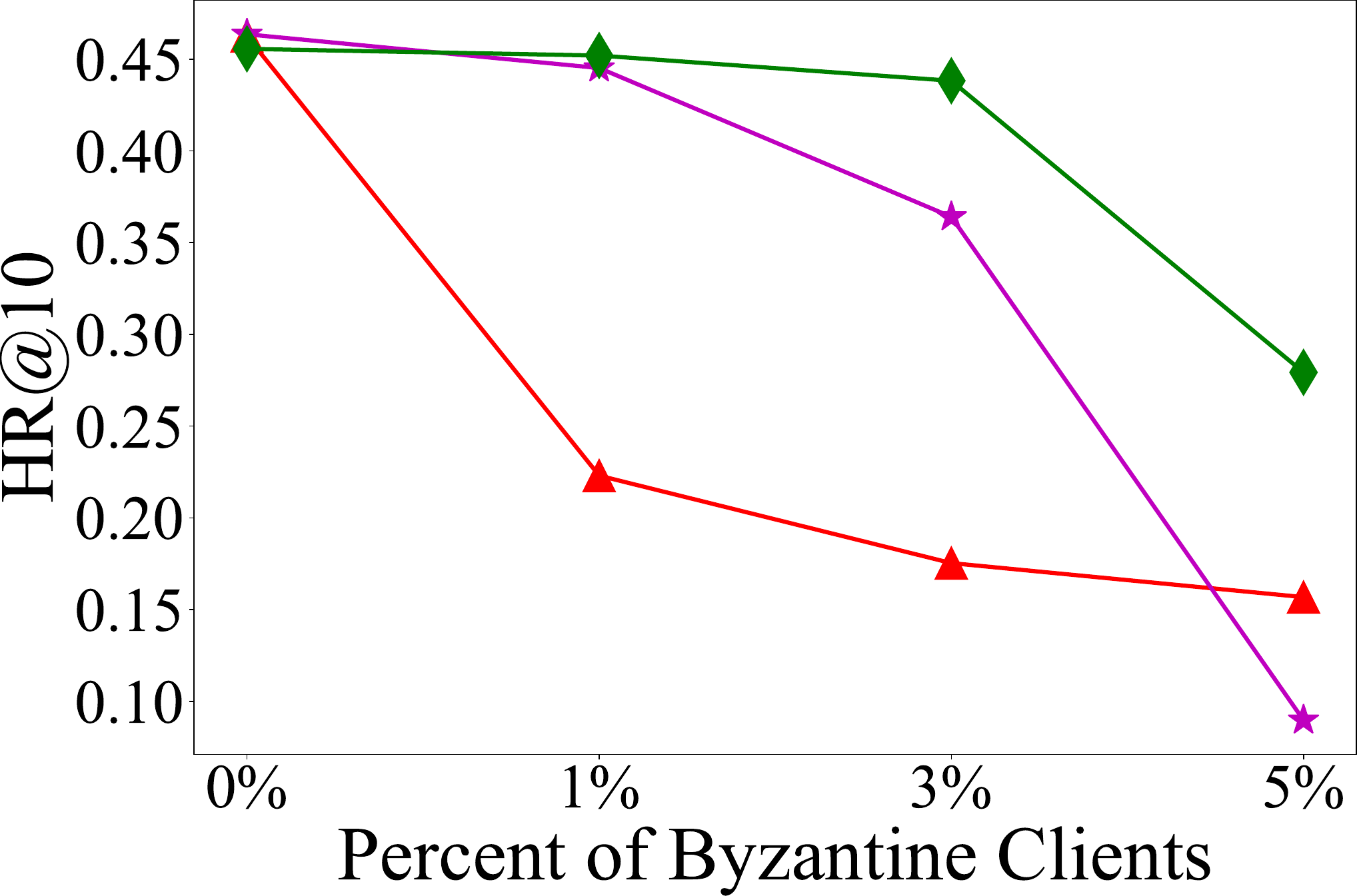}}
\endminipage\hfill
\minipage{0.23\textwidth}
\centering
 \subfigure[ML1M (Spattack-L-D)]{\includegraphics[ width=\textwidth]{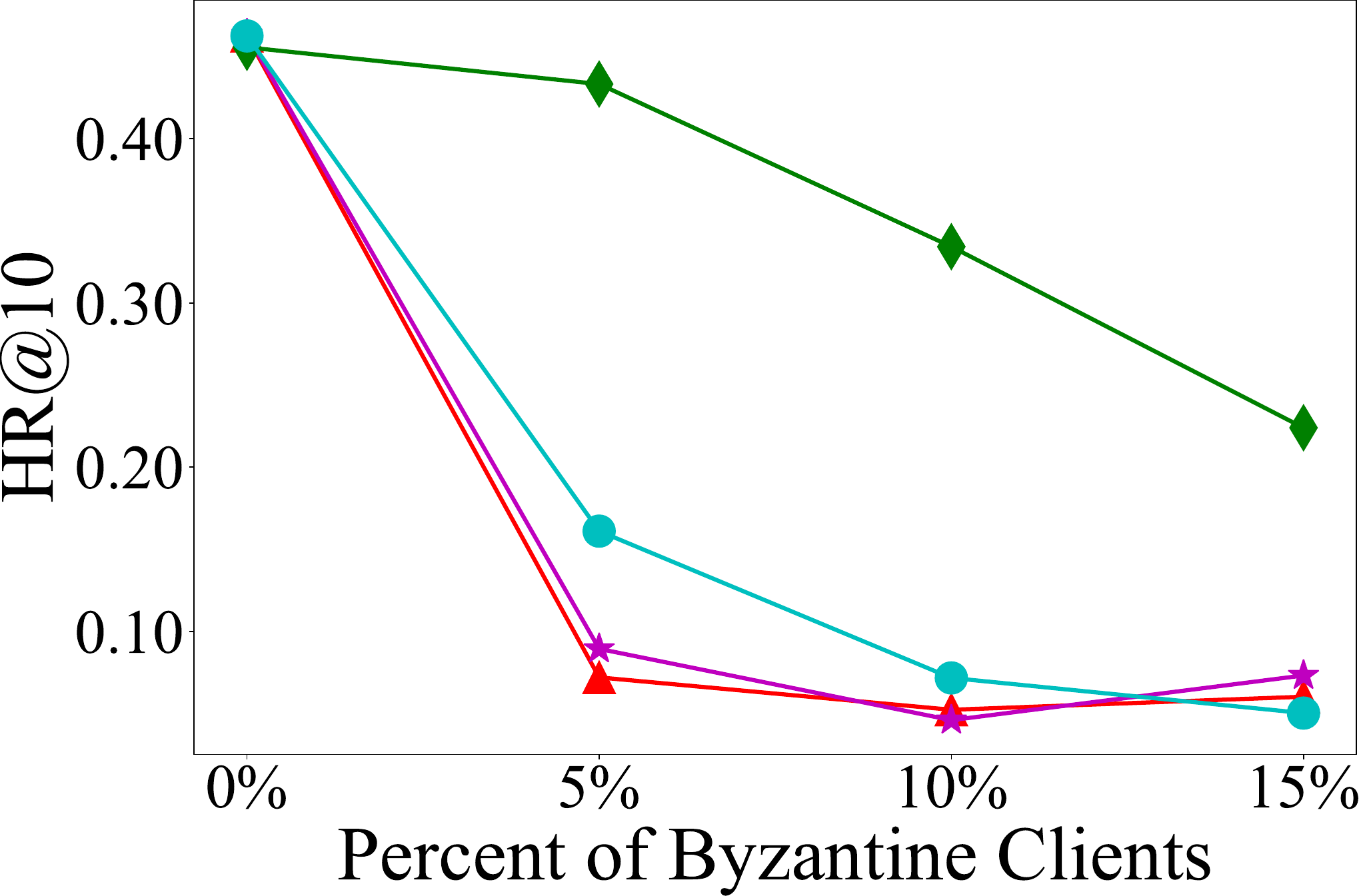}}
\endminipage\hfill
\minipage{0.23\textwidth}
\centering
 \subfigure[ML1M (Spattack-L-S)]{ \includegraphics[ width=\textwidth]{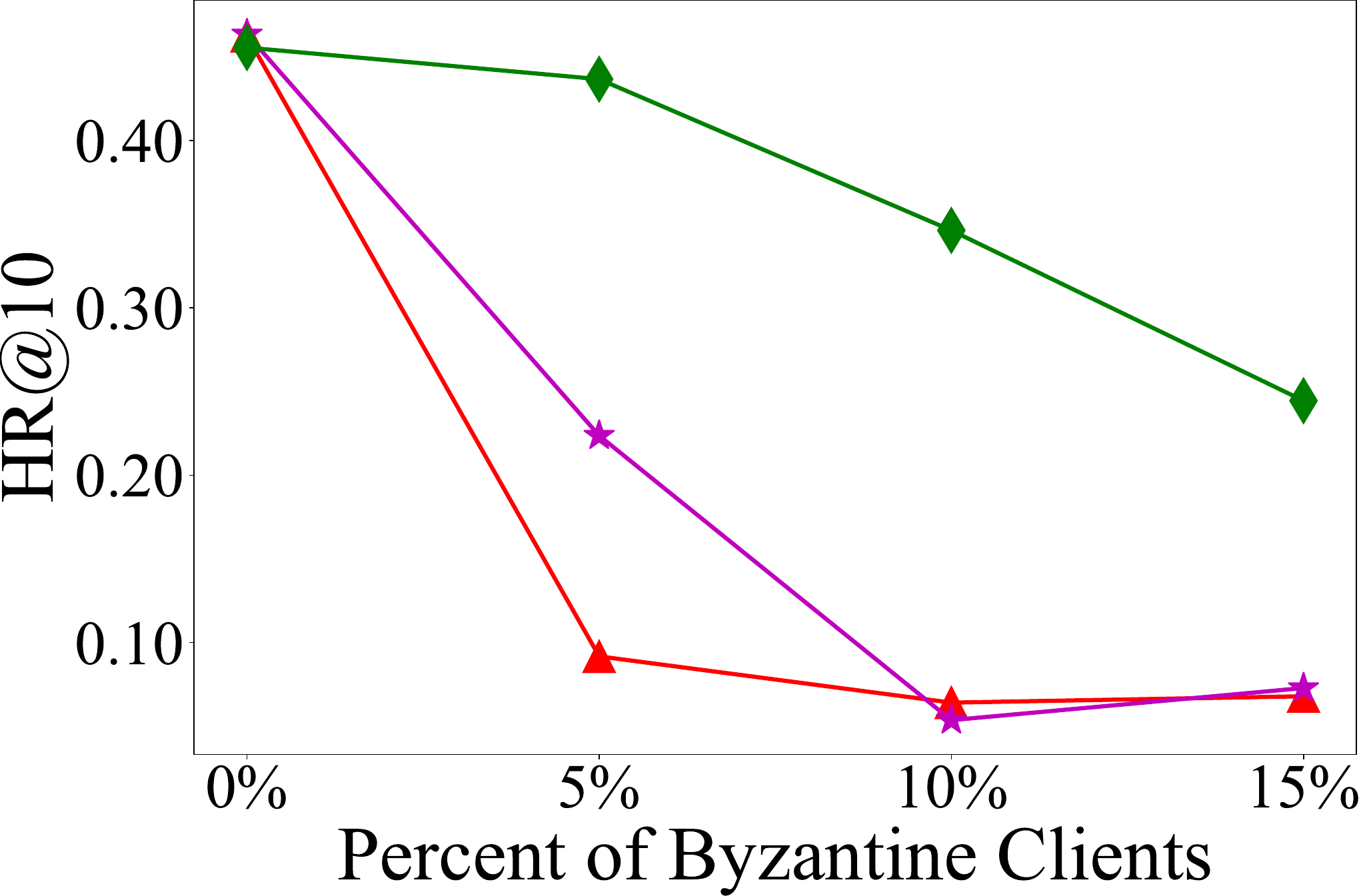}}
\endminipage\hfill
\vskip -0.15in
\caption{Performance of Spattack against multiple defense strategies under different ratios of Byzantine clients.}
\vskip -0.2in
\label{fig:full_defenses}
\end{figure*}

\subsection{Attack Performance Evaluation (RQ1)}
We compare the proposed Spattack against existing SOTA attack baselines under $3\%$ malicious ratio. The experimental results are reported in Tab.~\ref{tab:sota}, we find:\\
$\bullet$ Spattack can prevent FR convergence by controlling a few malicious clients. For example, Spattack can achieve a 47\%-98\% performance drop under $3\%$ malicious clients, demonstrating that FR is extremely vulnerable to Spattack.\\
$\bullet$ Spattack significantly outperforms other baselines. The explanation is that Spattack fully utilizes the sparse aggregation vulnerability by greedily breaking more items. Specifically, LabelFlip and FedAttack only indirectly manipulate the gradient by modifying data, while LIE, Cluster and Fang directly manipulate the gradients and thus can achieve higher attack impacts. Although Fang also perturbs in opposite directions of benign gradients, the malicious gradients are skewed to zero vector without considering the sparse aggregation of FR, leading to less effective attacks. \\
$\bullet$ The results on Steam overall drop more than ML100K and ML1M. A possible reason is that Steam involves fewer interactions on average (referring to the sparsity in Tab.~\ref{tab:dataset}), meaning there are more tailed items, which makes the model more susceptible to attacks.

\subsection{Attack Effectiveness under Defense (RQ2)}
We also evaluate the effectiveness of Spattack under different defenses. We set malicious ratio $\rho$ as 1\%, 3\% and 5\% for omniscient Spattack-O, and set the higher 5\%, 10\% and 15\% for the harder non-omniscient Spattack-L. We equip FR with Mean, Median and Norm aggregators for all attacks. Since TrimM and Krum assume the number of malicious updates is fixed for each item, but Spattack-O/L-S uploads different numbers of updates for each item, making them cannot be applied. 
As shown in Fig.~\ref{fig:full_defenses}, more results and analysis are in the Appendix. We have observations as follows: \\
$\bullet$ With only $5\%$ malicious clients, Spattack can dramatically degrade recommendation performance and even prevent convergence. The explanation is that different items have varied amounts of updates, and the defense of tailed items can be more easily broken than head items. \\
$\bullet$ When the attacker's knowledge and capability are limited, with the increasing malicious ratio $\rho$, the performance of defense FR consistently decreases even reaching an untrained model. The results also demonstrate that hiding the gradients of benign clients cannot protect FR, because the attacker can break the defense using only random noise.
\begin{figure}[tp]
\centering
\minipage{0.23\textwidth}
\centering
\includegraphics[width=\textwidth]{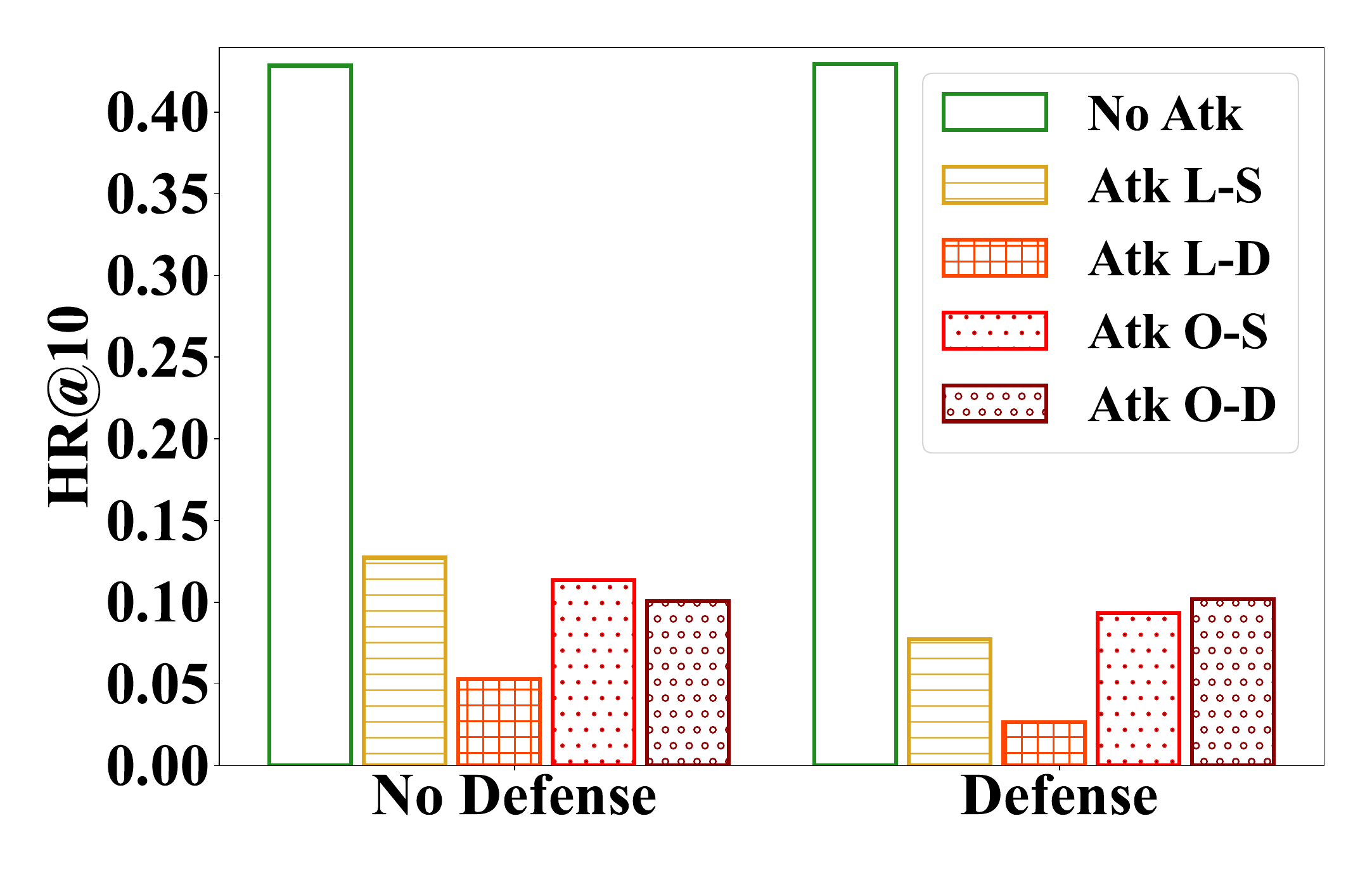}
\vspace{-22pt} 
\caption*{\small (a) ML100K}
\endminipage\hfill
\minipage{0.23\textwidth}
\centering
\includegraphics[width=\textwidth]{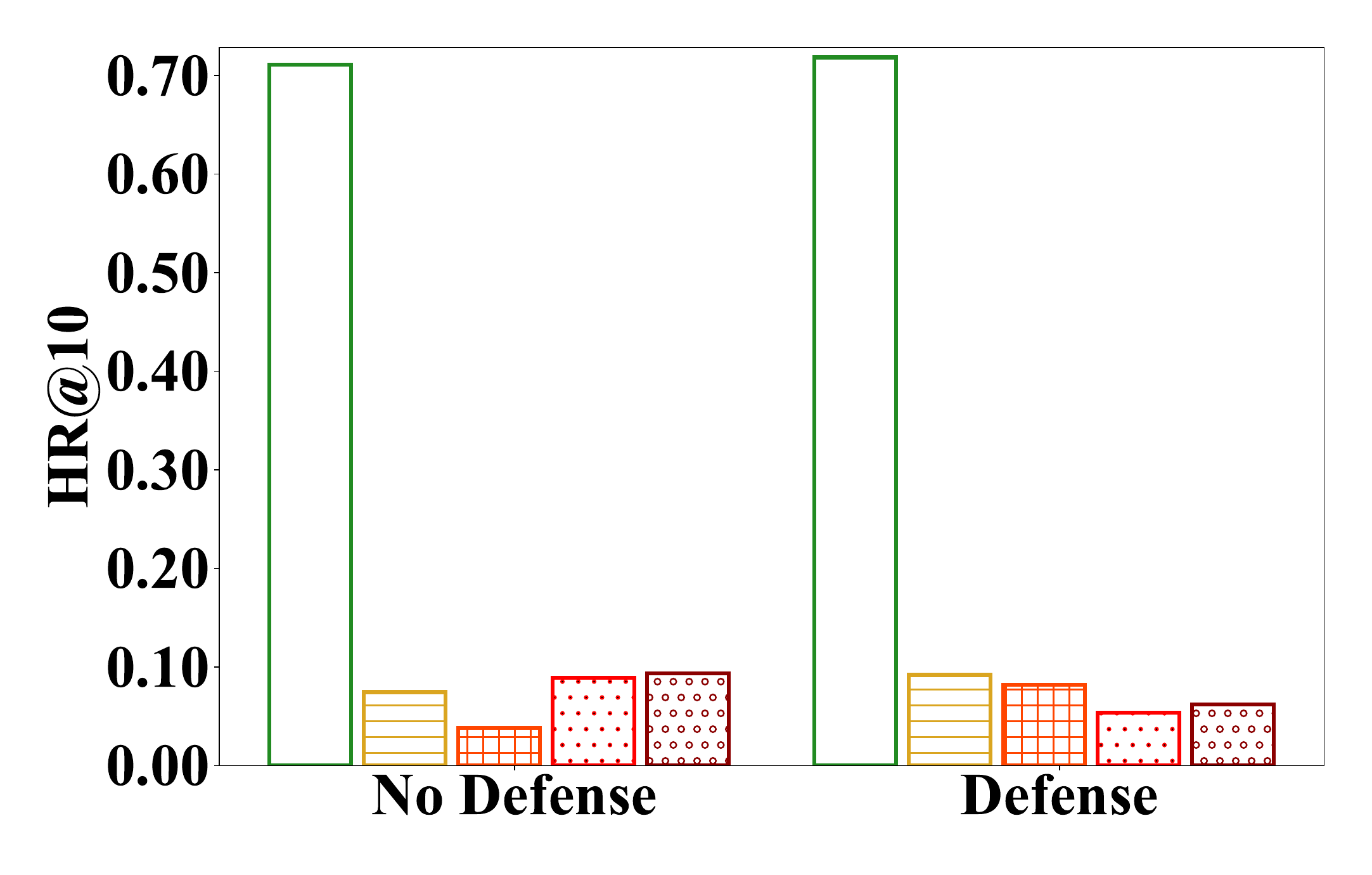}
\vspace{-22pt} 
\caption*{\small (b) Steam}
\endminipage\hfill
\vskip -0.05in
\caption{Attack performance on FedGNN model.}
\vskip -0.2in
\label{fig:fedgnn}
\end{figure}
\subsection{Transferability of Attack (RQ3)}
To demonstrate the generalizability of Spattack to other federated recommender systems, we perform Spattack on the SOTA FedGNN~\cite{Wu2021FedGNN} by extra uploading the malicious gradients of the GNN model. No Defense and Defense correspond to mean and median aggregators, respectively. The malicious ratio $\rho$ is set to 10\%. Please refer to the Appendix for more results and analysis. As shown in Fig.~\ref{fig:fedgnn}, we have the following observations: \\
$\bullet$ The performance of FedGNN dramatically drops under Spattack, demonstrating the common vulnerability of FedMF and FedGNN. Even though the parameters of GNN are densely aggregated, the attacker can still prevent convergence of model training by poisoning item embeddings. \\
$\bullet$ Spattack can achieve more effective attacks under defense. A possible reason is that for most items, i.e., tailed item, the malicious gradients can easily be the majority in its aggregation, so the Median AGR tends to pick the malicious gradient as output, while the poisoning in Mean AGR will be in remission by averaging malicious and benign gradients.

\subsection{Hyperparameter Analysis (RQ4)}
Lastly, we investigate the impact of the hyper-parameter on Spattack. In Fig.~\ref{fig:startEpoch}, we show the convergence of FR under mean aggregators on Steam, where the malicious ratio is set to 10\%, and the results correspond to starting attacks at 0, 20, 40 and 60. Please refer to the Appendix for more results and analysis. We have the following observations: \\
$\bullet$ Spattack with a small starting epoch tends to have better attack performance because the model has converged under a large starting epoch. \\
$\bullet$ When Spattack is launched, Spattack-O prevents the model from continuing to converge, while Spattack-L causes the performance dramatically drops. The reason is that Spattack-O uploads malicious gradients with an average equal to the negative of the benign gradients' average, resulting in zero gradients after aggregation.
\begin{figure}[tp]
\centering
\minipage{0.23\textwidth}
\centering
\includegraphics[width=\textwidth]{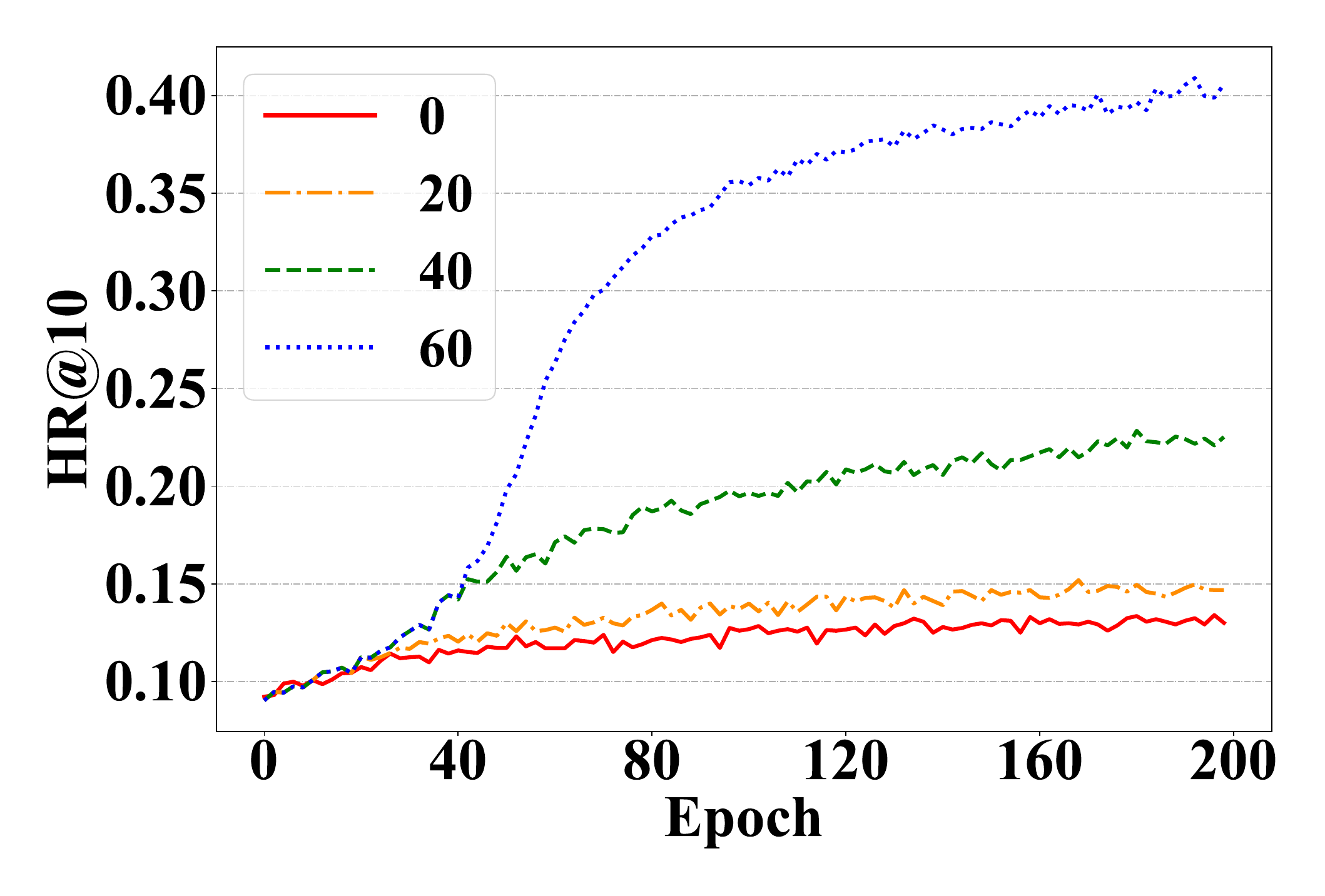}
\vspace{-25pt} 
\caption*{\small (a) Spattack-O-D}
\endminipage\hfill
\minipage{0.23\textwidth}
\centering
\includegraphics[width=\textwidth]{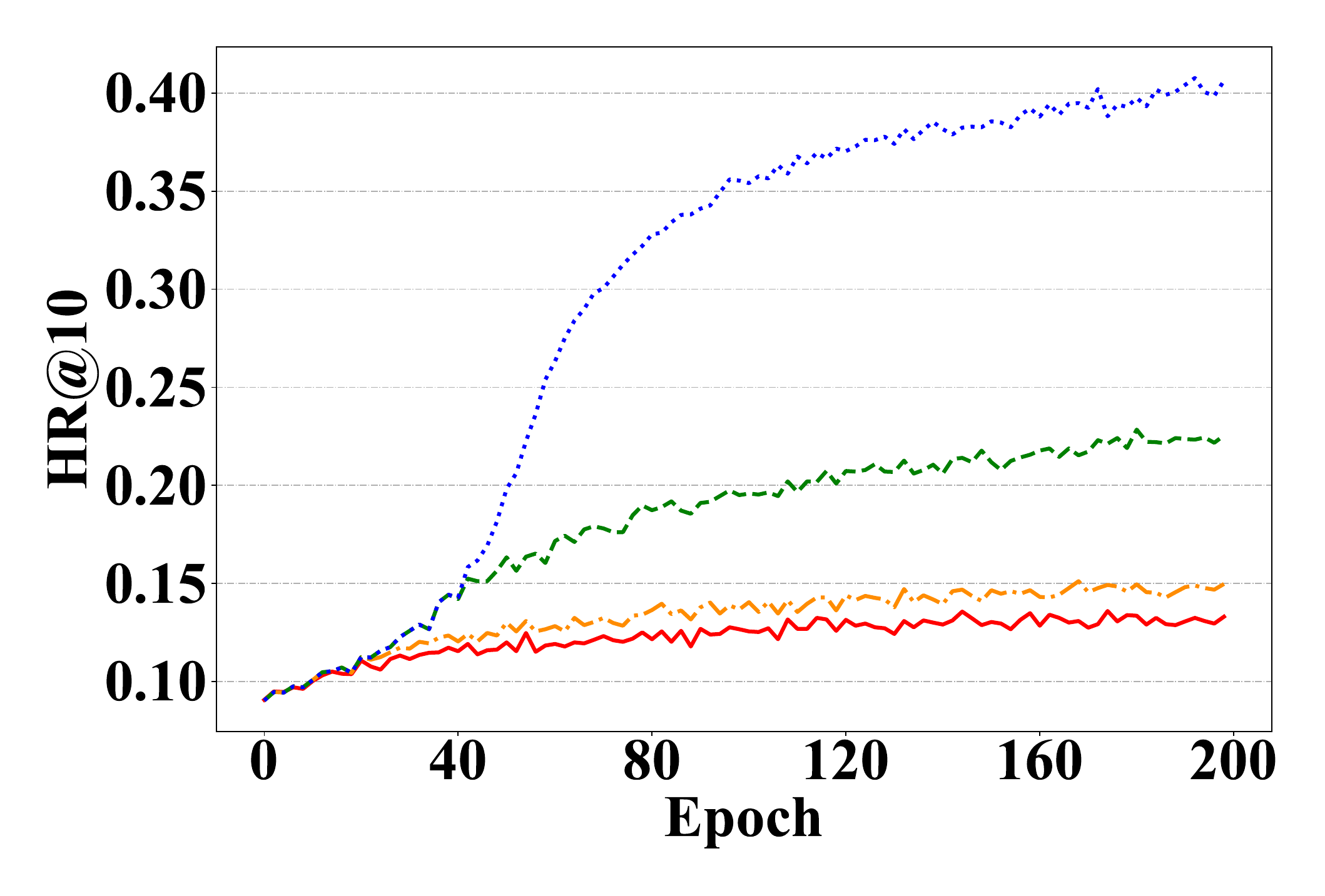}
\vspace{-25pt} 
\caption*{\small (b) Spattack-O-S}
\endminipage\hfill
\centering
\minipage{0.23\textwidth}
\centering
\includegraphics[width=\textwidth]{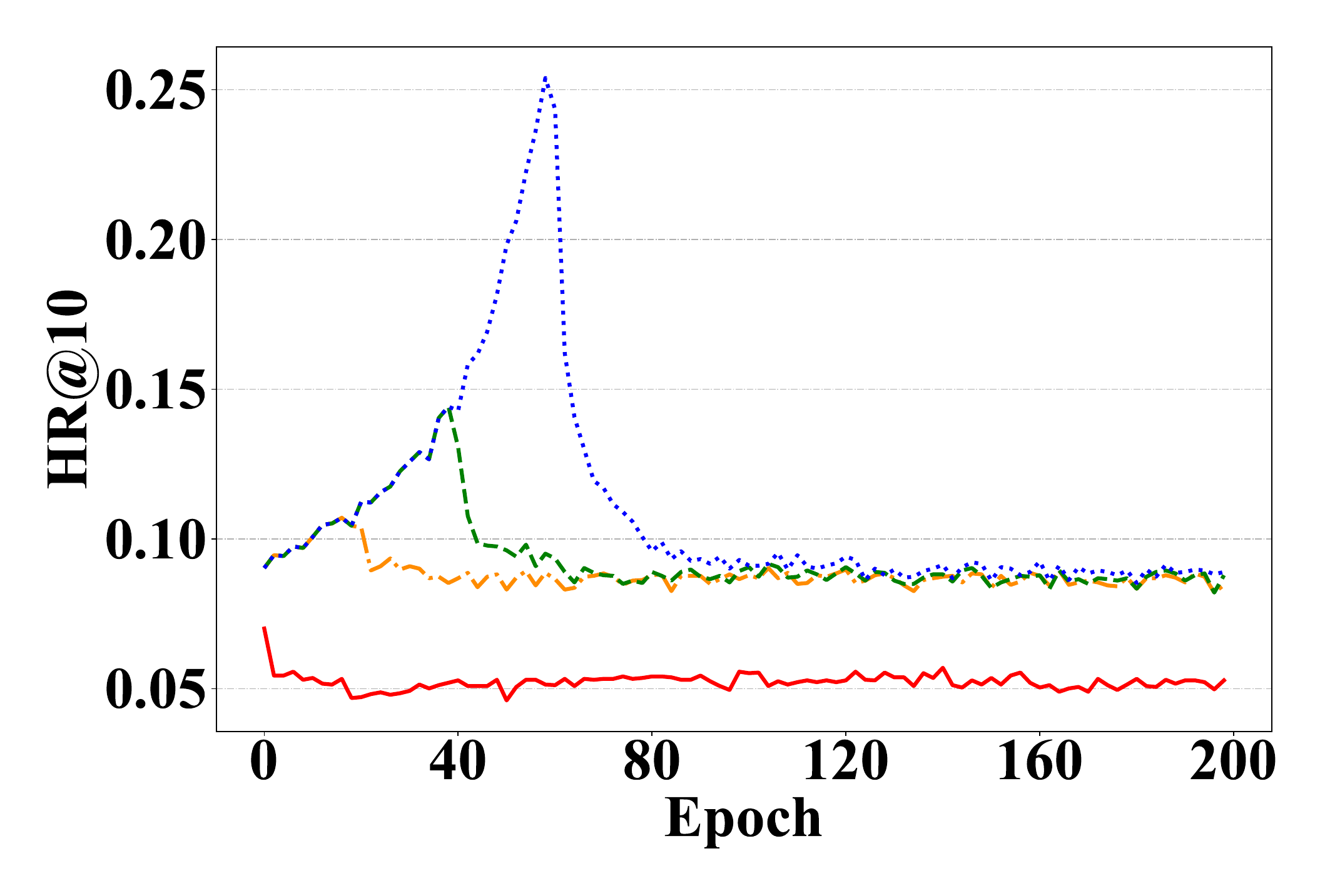}
\vspace{-25pt} 
\caption*{\small (c) Spattack-L-D}
\endminipage\hfill
\minipage{0.23\textwidth}
\centering
\includegraphics[width=\textwidth]{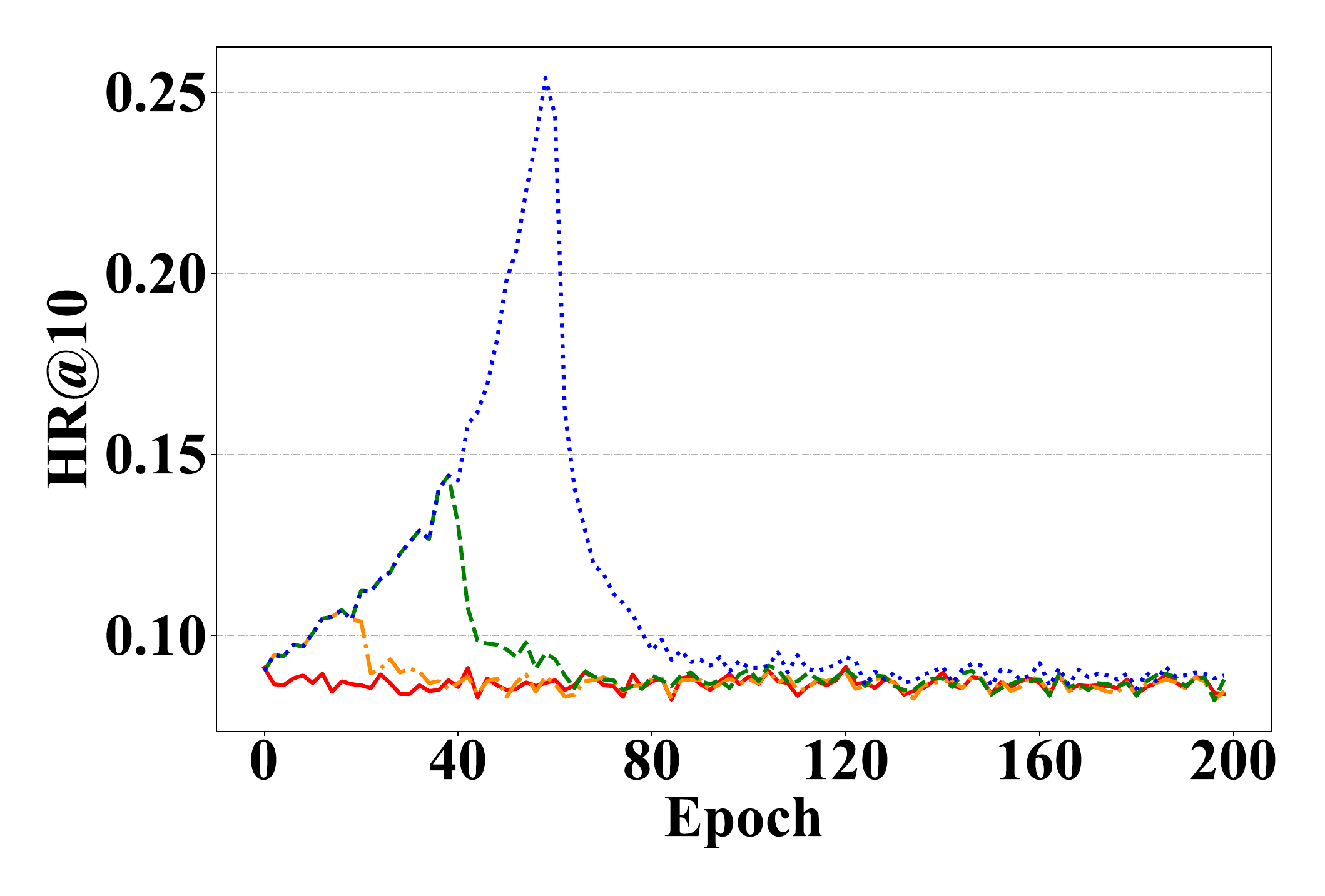}
\vspace{-25pt} 
\caption*{\small (d) Spattack-L-S}
\endminipage\hfill
\vskip -0.1in
\caption{Attack performance on different start epochs.}
\vskip -0.2in
\label{fig:startEpoch}
\end{figure}

\section{Conclusion}
In this paper, we first systematically study the Byzantine robustness of federated recommender from the perspective of sparse aggregation, where the item embedding in FR can only be updated by partial clients, instead of full clients (dense aggregation in general FL). Then we design a series of attack strategies, called Spattack, based on the vulnerability from sparse aggregation in FR. 
Our Spattack can be employed by attackers with different levels of knowledge and capability. Extensive experimental results demonstrated that FR is extremely fragile to Spattack. In the future, we aim to design a more robust aggregator in FR from the perspective of sparse aggregation, which focuses on the robust aggregation for tailed items.

\section*{Acknowledgments}
This work is supported by the National Key Research and Development Program of China (2023YFF0725103), the National Natural Science Foundation of China (U22B2038, 62322203, 62172052, 62192784), and 
the Young Elite Scientists Sponsorship Program (No.2023QNRC001) by CAST.
\bibliography{reference}

\begin{thebibliography}{51}
\providecommand{\natexlab}[1]{#1}

\bibitem[{Abdollahpouri, Burke, and Mobasher(2019)}]{Abdollahpouri2019ManagingPB}
Abdollahpouri, H.; Burke, R.; and Mobasher, B. 2019.
\newblock Managing Popularity Bias in Recommender Systems with Personalized Re-ranking.
\newblock In \emph{The Florida AI Research Society}.

\bibitem[{Ammaduddin et~al.(2019)Ammaduddin, Ivannikova, Khan et~al.}]{MuhammadAmmaduddin2019FederatedCF}
Ammaduddin, M.; Ivannikova, E.; Khan, S.~A.; et~al. 2019.
\newblock Federated Collaborative Filtering for Privacy-Preserving Personalized Recommendation System.
\newblock \emph{arXiv: Information Retrieval}.

\bibitem[{Baruch, Baruch, and Goldberg(2019)}]{GiladBaruch2019ALI}
Baruch, G.; Baruch, M.; and Goldberg, Y. 2019.
\newblock A Little Is Enough: Circumventing Defenses For Distributed Learning.
\newblock In \emph{Neural Information Processing Systems}.

\bibitem[{Blanchard et~al.(2017{\natexlab{a}})Blanchard, Mhamdi, Guerraoui et~al.}]{Blanchard2017MachineLW}
Blanchard, P.; Mhamdi, E. M.~E.; Guerraoui, R.; et~al. 2017{\natexlab{a}}.
\newblock Machine Learning with Adversaries: Byzantine Tolerant Gradient Descent.
\newblock In \emph{NIPS}.

\bibitem[{Blanchard et~al.(2017{\natexlab{b}})Blanchard, Mhamdi, Guerraoui et~al.}]{PevaBlanchard2017MachineLW}
Blanchard, P.; Mhamdi, E. M.~E.; Guerraoui, R.; et~al. 2017{\natexlab{b}}.
\newblock Machine learning with adversaries: byzantine tolerant gradient descent.
\newblock \emph{Neural Information Processing Systems}.

\bibitem[{Chai et~al.(2019)Chai, Wang, Chen et~al.}]{Chai2019SecureFM}
Chai, D.; Wang, L.; Chen, K.; et~al. 2019.
\newblock Secure Federated Matrix Factorization.
\newblock \emph{IEEE Intelligent Systems}.

\bibitem[{Chen et~al.(2020)Chen, Zhang, Tung, Kankanhalli, and Chen}]{chen2020robust}
Chen, C.; Zhang, J.; Tung, A.~K.; Kankanhalli, M.; and Chen, G. 2020.
\newblock Robust federated recommendation system.
\newblock \emph{arXiv preprint arXiv:2006.08259}.

\bibitem[{Cheuque, Guzm{\'a}n, and Parra(2019)}]{Cheuque2019RecommenderSF}
Cheuque, G.; Guzm{\'a}n, J.; and Parra, D. 2019.
\newblock Recommender Systems for Online Video Game Platforms: the Case of STEAM.
\newblock \emph{Companion Proceedings of The 2019 World Wide Web Conference}.

\bibitem[{Elkahky, Song, and He(2015)}]{Elkahky2015AMD}
Elkahky, A.~M.; Song, Y.; and He, X. 2015.
\newblock A Multi-View Deep Learning Approach for Cross Domain User Modeling in Recommendation Systems.
\newblock \emph{Proceedings of the 24th International Conference on World Wide Web}.

\bibitem[{Fan et~al.(2019)Fan, Ma, Yin et~al.}]{Fan2019DeepSC}
Fan, W.; Ma, Y.; Yin, D.; et~al. 2019.
\newblock Deep social collaborative filtering.
\newblock \emph{Proceedings of the 13th ACM Conference on Recommender Systems}.

\bibitem[{Fang et~al.(2019)Fang, Cao, Jia et~al.}]{Fang2019LocalMP}
Fang, M.; Cao, X.; Jia, J.; et~al. 2019.
\newblock Local Model Poisoning Attacks to Byzantine-Robust Federated Learning.
\newblock In \emph{USENIX Security Symposium}.

\bibitem[{Fang et~al.(2024)Fang, Zhang, Hairi, Khanduri, Liu, Lu, Liu, and Gong}]{fang2024byzantine}
Fang, M.; Zhang, Z.; Hairi; Khanduri, P.; Liu, J.; Lu, S.; Liu, Y.; and Gong, N. 2024.
\newblock Byzantine-robust decentralized federated learning.
\newblock In \emph{Proceedings of the 2024 on ACM SIGSAC Conference on Computer and Communications Security}, 2874--2888.

\bibitem[{Fu et~al.(2019)Fu, Xie, Li et~al.}]{Fu2019AttackResistantFL}
Fu, S.; Xie, C.; Li, B.; et~al. 2019.
\newblock Attack-Resistant Federated Learning with Residual-based Reweighting.
\newblock \emph{ArXiv}.

\bibitem[{Harper and Konstan(2016)}]{Harper2016TheMD}
Harper, F.~M.; and Konstan, J.~A. 2016.
\newblock The MovieLens Datasets: History and Context.
\newblock \emph{ACM Trans. Interact. Intell. Syst.}

\bibitem[{He et~al.(2017)He, Liao, Zhang et~al.}]{He2017NeuralCF}
He, X.; Liao, L.; Zhang, H.; et~al. 2017.
\newblock Neural Collaborative Filtering.
\newblock \emph{Proceedings of the 26th International Conference on World Wide Web}.

\bibitem[{Liu et~al.(2021)Liu, Yang, Fan et~al.}]{Liu2021FederatedSR}
Liu, Z.; Yang, L.; Fan, Z.; et~al. 2021.
\newblock Federated Social Recommendation with Graph Neural Network.
\newblock \emph{ACM Transactions on Intelligent Systems and Technology (TIST)}.

\bibitem[{Luo, Xiao, and Song(2022)}]{Luo2022PersonalizedFR}
Luo, S.; Xiao, Y.; and Song, L. 2022.
\newblock Personalized Federated Recommendation via Joint Representation Learning, User Clustering, and Model Adaptation.
\newblock \emph{Proceedings of the 31st ACM International Conference on Information \& Knowledge Management}.

\bibitem[{Lyu et~al.(2022)Lyu, Yu, Ma et~al.}]{lyu2022privacy}
Lyu, L.; Yu, H.; Ma, X.; et~al. 2022.
\newblock Privacy and robustness in federated learning: Attacks and defenses.
\newblock \emph{IEEE transactions on neural networks and learning systems}.

\bibitem[{Lyu et~al.(2020)Lyu, Yu, Zhao et~al.}]{lyu2020threats}
Lyu, L.; Yu, H.; Zhao, J.; et~al. 2020.
\newblock Threats to federated learning.
\newblock \emph{Federated Learning: Privacy and Incentive}.

\bibitem[{Ma et~al.(2008)Ma, Yang, Lyu et~al.}]{Ma2008SoRecSR}
Ma, H.; Yang, H.; Lyu, M.~R.; et~al. 2008.
\newblock SoRec: social recommendation using probabilistic matrix factorization.
\newblock In \emph{International Conference on Information and Knowledge Management}.

\bibitem[{McMahan et~al.(2016)McMahan, Moore, Ramage et~al.}]{McMahan2016CommunicationEfficientLO}
McMahan, H.~B.; Moore, E.; Ramage, D.; et~al. 2016.
\newblock Communication-Efficient Learning of Deep Networks from Decentralized Data.
\newblock In \emph{International Conference on Artificial Intelligence and Statistics}.

\bibitem[{Mhamdi, Guerraoui, and Rouault(2018)}]{Mhamdi2018TheHV}
Mhamdi, E. M.~E.; Guerraoui, R.; and Rouault, S. 2018.
\newblock The Hidden Vulnerability of Distributed Learning in Byzantium.
\newblock In \emph{International Conference on Machine Learning}.

\bibitem[{Pillutla, Kakade, and Harchaoui(2019)}]{Pillutla2019RobustAF}
Pillutla, K.; Kakade, S.~M.; and Harchaoui, Z. 2019.
\newblock Robust Aggregation for Federated Learning.
\newblock \emph{IEEE Transactions on Signal Processing}.

\bibitem[{Rendle et~al.(2009)Rendle, Freudenthaler, Gantner et~al.}]{Rendle2009BPRBP}
Rendle, S.; Freudenthaler, C.; Gantner, Z.; et~al. 2009.
\newblock BPR: Bayesian Personalized Ranking from Implicit Feedback.
\newblock \emph{ArXiv}.

\bibitem[{Rodr'iguez-Barroso et~al.(2022)Rodr'iguez-Barroso, L'opez, Luz'on et~al.}]{RodriguezBarroso2022SurveyOF}
Rodr'iguez-Barroso, N.; L'opez, D.~J.; Luz'on, M.~V.; et~al. 2022.
\newblock Survey on Federated Learning Threats: concepts, taxonomy on attacks and defences, experimental study and challenges.
\newblock \emph{ArXiv}.

\bibitem[{Rong et~al.(2022)Rong, Ye, Zhao et~al.}]{Rong2022FedRecAttackMP}
Rong, D.; Ye, S.; Zhao, R.; et~al. 2022.
\newblock FedRecAttack: Model Poisoning Attack to Federated Recommendation.
\newblock \emph{2022 IEEE 38th International Conference on Data Engineering (ICDE)}.

\bibitem[{Sun et~al.(2024)Sun, Xu, Liu, He, Kong, Wu, Jiang, and Cui}]{sun2024survey}
Sun, Z.; Xu, Y.; Liu, Y.; He, W.; Kong, L.; Wu, F.; Jiang, Y.; and Cui, L. 2024.
\newblock A survey on federated recommendation systems.
\newblock \emph{IEEE Transactions on Neural Networks and Learning Systems}.

\bibitem[{Suresh et~al.(2019)Suresh, McMahan, Kairouz et~al.}]{AnandaTheerthaSuresh2019CanYR}
Suresh, A.~T.; McMahan, B.; Kairouz, P.; et~al. 2019.
\newblock Can You Really Backdoor Federated Learning.
\newblock \emph{arXiv: Learning}.

\bibitem[{Tolpegin et~al.(2020)Tolpegin, Truex, Gursoy et~al.}]{ValeTolpegin2020DataPA}
Tolpegin, V.; Truex, S.; Gursoy, M.~E.; et~al. 2020.
\newblock Data Poisoning Attacks Against Federated Learning Systems.
\newblock \emph{Cornell University - arXiv}.

\bibitem[{Wang et~al.(2018)Wang, Zhang, Xie et~al.}]{Wang2018DKNDK}
Wang, H.; Zhang, F.; Xie, X.; et~al. 2018.
\newblock DKN: Deep Knowledge-Aware Network for News Recommendation.
\newblock \emph{Proceedings of the 2018 World Wide Web Conference}.

\bibitem[{Wu et~al.(2019{\natexlab{a}})Wu, Wu, An et~al.}]{Wu2019NPANN}
Wu, C.; Wu, F.; An, M.; et~al. 2019{\natexlab{a}}.
\newblock NPA: Neural News Recommendation with Personalized Attention.
\newblock \emph{Proceedings of the 25th ACM SIGKDD International Conference on Knowledge Discovery \& Data Mining}.

\bibitem[{Wu et~al.(2021)Wu, Wu, Lyu et~al.}]{Wu2021FedGNN}
Wu, C.; Wu, F.; Lyu, L.; et~al. 2021.
\newblock A federated graph neural network framework for privacy-preserving personalization.
\newblock \emph{Nature Communications}.

\bibitem[{Wu et~al.(2022)Wu, Wu, Qi et~al.}]{Wu2022FedAttackEA}
Wu, C.; Wu, F.; Qi, T.; et~al. 2022.
\newblock FedAttack: Effective and Covert Poisoning Attack on Federated Recommendation via Hard Sampling.
\newblock \emph{Proceedings of the 28th ACM SIGKDD Conference on Knowledge Discovery and Data Mining}.

\bibitem[{Wu et~al.(2024{\natexlab{a}})Wu, Hu, Sun, Zhou, Zhu, Rao, Schiele, and Yang}]{wu2024number}
Wu, Y.; Hu, X.; Sun, Y.; Zhou, Y.; Zhu, W.; Rao, F.; Schiele, B.; and Yang, X. 2024{\natexlab{a}}.
\newblock Number it: Temporal Grounding Videos like Flipping Manga.
\newblock \emph{arXiv preprint arXiv:2411.10332}.

\bibitem[{Wu et~al.(2024{\natexlab{b}})Wu, Zhou, Yang, Wang, Zhu, Chang, Zhou, and Yang}]{wu2024unlearning}
Wu, Y.; Zhou, S.; Yang, M.; Wang, L.; Zhu, W.; Chang, H.; Zhou, X.; and Yang, X. 2024{\natexlab{b}}.
\newblock Unlearning Concepts in Diffusion Model via Concept Domain Correction and Concept Preserving Gradient.
\newblock \emph{arXiv preprint arXiv:2405.15304}.

\bibitem[{Wu et~al.(2019{\natexlab{b}})Wu, Ling, Chen et~al.}]{Wu2019FederatedVS}
Wu, Z.; Ling, Q.; Chen, T.; et~al. 2019{\natexlab{b}}.
\newblock Federated Variance-Reduced Stochastic Gradient Descent With Robustness to Byzantine Attacks.
\newblock \emph{IEEE Transactions on Signal Processing}.

\bibitem[{Xie, Koyejo, and Gupta(2020)}]{xie2020fall}
Xie, C.; Koyejo, O.; and Gupta, I. 2020.
\newblock Fall of empires: Breaking byzantine-tolerant sgd by inner product manipulation.
\newblock In \emph{Uncertainty in Artificial Intelligence}, 261--270. PMLR.

\bibitem[{Xu et~al.(2021)Xu, Huang, Song et~al.}]{Xu2021ByzantinerobustFL}
Xu, J.; Huang, S.-L.; Song, L.; et~al. 2021.
\newblock Byzantine-robust Federated Learning through Collaborative Malicious Gradient Filtering.
\newblock \emph{2022 IEEE 42nd International Conference on Distributed Computing Systems (ICDCS)}.

\bibitem[{Yan(2024)}]{yan2024federated2}
Yan, B. 2024.
\newblock Federated Graph Condensation with Information Bottleneck Principles.
\newblock \emph{arXiv preprint arXiv:2405.03911}.

\bibitem[{Yan et~al.(2024)Yan, Cao, Wang, Yang, Du, and Shi}]{yan2024federated}
Yan, B.; Cao, Y.; Wang, H.; Yang, W.; Du, J.; and Shi, C. 2024.
\newblock Federated heterogeneous graph neural network for privacy-preserving recommendation.
\newblock In \emph{Proceedings of the ACM on Web Conference 2024}, 3919--3929.

\bibitem[{Yin et~al.(2018)Yin, Chen, Ramchandran et~al.}]{DongYin2018ByzantineRobustDL}
Yin, D.; Chen, Y.; Ramchandran, K.; et~al. 2018.
\newblock Byzantine-Robust Distributed Learning: Towards Optimal Statistical Rates.
\newblock \emph{arXiv: Learning}.

\bibitem[{Ying et~al.(2018)Ying, He, Chen et~al.}]{Ying2018GraphCN}
Ying, R.; He, R.; Chen, K.; et~al. 2018.
\newblock Graph Convolutional Neural Networks for Web-Scale Recommender Systems.
\newblock \emph{Proceedings of the 24th ACM SIGKDD International Conference on Knowledge Discovery \& Data Mining}.

\bibitem[{Ying(2020)}]{Ying2020SharedMA}
Ying, S. 2020.
\newblock Shared MF: A privacy-preserving recommendation system.
\newblock \emph{ArXiv}.

\bibitem[{Yu et~al.(2023)Yu, Liu, Wu et~al.}]{Yu2022UntargetedAA}
Yu, Y.; Liu, Q.; Wu, L.; et~al. 2023.
\newblock Untargeted attack against federated recommendation systems via poisonous item embeddings and the defense.
\newblock In \emph{Proceedings of the AAAI Conference on Artificial Intelligence}.

\bibitem[{Yuan et~al.(2023)Yuan, Nguyen, He et~al.}]{yuan2023manipulating}
Yuan, W.; Nguyen, Q. V.~H.; He, T.; et~al. 2023.
\newblock Manipulating Federated Recommender Systems: Poisoning with Synthetic Users and Its Countermeasures.
\newblock In \emph{SIGIR}.

\bibitem[{Zhang et~al.(2023{\natexlab{a}})Zhang, Luo, Wu, He, and Li}]{zhang2023lightfr}
Zhang, H.; Luo, F.; Wu, J.; He, X.; and Li, Y. 2023{\natexlab{a}}.
\newblock LightFR: Lightweight federated recommendation with privacy-preserving matrix factorization.
\newblock \emph{ACM Transactions on Information Systems}.

\bibitem[{Zhang et~al.(2023{\natexlab{b}})Zhang, Wang, Shi, Lyu, Yang, and Du}]{zhang2023minimum}
Zhang, M.; Wang, X.; Shi, C.; Lyu, L.; Yang, T.; and Du, J. 2023{\natexlab{b}}.
\newblock Minimum topology attacks for graph neural networks.
\newblock In \emph{Proceedings of the ACM Web Conference 2023}, 630--640.

\bibitem[{Zhang et~al.(2022)Zhang, Wang, Zhu, Shi, Zhang, and Zhou}]{zhang2022robust}
Zhang, M.; Wang, X.; Zhu, M.; Shi, C.; Zhang, Z.; and Zhou, J. 2022.
\newblock Robust heterogeneous graph neural networks against adversarial attacks.
\newblock In \emph{Proceedings of the AAAI Conference on Artificial Intelligence}, 4, 4363--4370.

\bibitem[{Zhang et~al.(2021)Zhang, Yin, Chen et~al.}]{Zhang2021PipAttackPF}
Zhang, S.; Yin, H.; Chen, T.; et~al. 2021.
\newblock PipAttack: Poisoning Federated Recommender Systems for Manipulating Item Promotion.
\newblock \emph{Proceedings of the Fifteenth ACM International Conference on Web Search and Data Mining}.

\bibitem[{Zhang et~al.(2024{\natexlab{a}})Zhang, Wang, Zhou, Yu, Zhang, Yang, and Shi}]{zhang2024can}
Zhang, Z.; Wang, X.; Zhou, H.; Yu, Y.; Zhang, M.; Yang, C.; and Shi, C. 2024{\natexlab{a}}.
\newblock Can Large Language Models Improve the Adversarial Robustness of Graph Neural Networks?
\newblock \emph{arXiv preprint arXiv:2408.08685}.

\bibitem[{Zhang et~al.(2024{\natexlab{b}})Zhang, Zhang, Yu, Yang, Liu, and Shi}]{zhang2024endowing}
Zhang, Z.; Zhang, M.; Yu, Y.; Yang, C.; Liu, J.; and Shi, C. 2024{\natexlab{b}}.
\newblock Endowing Pre-trained Graph Models with Provable Fairness.
\newblock In \emph{Proceedings of the ACM on Web Conference 2024}, 1045--1056.

\end{thebibliography}
\clearpage
\section{Related Work}
\textbf{Federated Recommender.}
Federated learning aims to collaboratively train a shared model based on the distributed data in a privacy-preserving manner~\cite{yan2024federated, McMahan2016CommunicationEfficientLO, zhang2023lightfr}. Accordingly, a federated recommender ensures individual users' historical data is locally stored and only uploads intermediate data to the server for collaborative training. In this process, the user's rating behaviors (the set of interacted items or rating scores) are private information. \citet{MuhammadAmmaduddin2019FederatedCF} first proposed a federated collaborative filter framework for the privacy-preserving recommendation. The user embeddings are stored and updated locally while the gradients of item embeddings are uploaded to the server for aggregation. Moreover, for better privacy, \cite{Chai2019SecureFM} applied homomorphic encryption; \cite{Ying2020SharedMA} further improved the efficiency by utilizing secret sharing instead of homomorphic encryption. Recently, federated recommendations based on graph neural networks have emerged~\cite{Wu2021FedGNN, Liu2021FederatedSR, Luo2022PersonalizedFR, yan2024federated2, zhang2023minimum}, further incorporating high-order user-item interactions into local training data. Overall, most existing federated recommender systems follow the paradigm where the gradients of item embeddings are uploaded to the server for aggregation. So they are all sparse aggregations where each item embedding can only be updated by partial clients, leading to varied robustness of each item. This paper sheds the first light on this unique vulnerability in the view of sparse aggregation.

\textbf{Robustness of Federated Learning.}
The security of federated learning has drawn increasing attention in recent years~\cite{lyu2022privacy,lyu2020threats, zhang2024can, zhang2022robust, zhang2024endowing, wu2024unlearning, wu2024number}, and a large number of attacks against FL have been proposed. Among the attack strategies, the Byzantine attack is one of the most popular attacks~\cite{RodriguezBarroso2022SurveyOF}. The classic mean aggregator can be easily skewed by arbitrary updates from Byzantine clients. Therefore, a lot of Byzantine defense based on statistics has been proposed in recent year~\cite{DongYin2018ByzantineRobustDL, PevaBlanchard2017MachineLW, Mhamdi2018TheHV, Xu2021ByzantinerobustFL, RodriguezBarroso2022SurveyOF, Pillutla2019RobustAF, Wu2019FederatedVS, Fu2019AttackResistantFL, chen2020robust}, which aimed to filter the Byzantine updates and guarantee the convergence of federated learning. Although the Byzantine robustness problem is well-studied in FL, existing Byzantine attacks and defenses of FL are defined based on the dense aggregation and cannot apply to our sparse aggregation in FR. More recently, a few attacks have been proposed against federated recommender, among which, \cite{Rong2022FedRecAttackMP, Zhang2021PipAttackPF} focus on targeted attacks aiming to promote target items by increasing their exposure chances, and~\cite{Wu2022FedAttackEA} employs improper positive/negative samples to manipulate model parameters indirectly. \cite{Yu2022UntargetedAA} uploads poisonous gradients that collapse all item embeddings to several clusters to confuse different items. However, they all neglect the unique sparsity in federated recommender.

\section{Notations}
We present all notations relevant to our paper in Tab.~\ref{tab:notation}.
\begin{table}[ht]
    \centering
    \vskip -0.1in
    \caption{Notations}
    \vskip -0.125in
    \setlength{\extrarowheight}{1pt}
    \scalebox{0.9}{
    \begin{tabular}{c|p{0.9\linewidth}}
    \hline 
    $\mathcal{D}$ & all interactions \\
    $\mathcal{D}_i$ & local interaction of user $u_i$  \\
    $\mathcal{U}$ & set of $n$ benign users\\
    $\tilde{\mathcal{U}}$ & set of $\tilde{n}$ malicious users\\
    $\mathcal{V}$ & set of $m$ items\\
    $\mathcal{V}_{u_i}$ & set of items interacting with $u_{i}$ \\
    $\mathcal{U}_{v_j}$ & set of users interacting with $v_j$ \\
    \hline
    $\boldsymbol{U}$ & the user embeddings where $\boldsymbol{U}=\{\boldsymbol{u}_1,...,\boldsymbol{u}_n\}$\\
    $\boldsymbol{V}$ & the item embeddings where $\boldsymbol{V}=\{\boldsymbol{v}_1,...,\boldsymbol{v}_m\}$\\
    $\nabla \boldsymbol{V}^{i,t}$ & the embedding gradient of $\mathcal{V}$ from benign user $u_i$ at epoch $t$\\
    $\nabla \tilde{\boldsymbol{V}}^{i,t}$ & the embedding gradient of $\mathcal{V}$ from malicious user $\tilde{u}_i$ at epoch $t$\\
    $\nabla \boldsymbol{v}^{i,t}_j$ & the embedding gradient of $v_j$ from benign user $u_i$ at epoch $t$\\
    $\nabla \tilde{\boldsymbol{v}}^{i,t}_j$ & the embedding gradient of $v_j$ from malicious user $\tilde{u}_i$ at epoch $t$\\
    $\boldsymbol{\Theta}$ & the parameters of neural network model\\
    \hline
    $[n]$ & Set of integers $\{1,\cdots,n\}$\\
    $\eta$ & learning rate\\
    $\rho$ & the proportion of malicious users \\
    $\alpha$ & the breaking point of statistically robust aggregator\\
    \hline
    \end{tabular}
    }
    \label{tab:notation}
    \vskip -0.1in
\end{table}

\section{Detailed Proof for Proposition~\ref{prop2}}
\begin{proof}
Consider the robust aggregator with breaking point $\alpha$, the model convergence can be guaranteed when the number of malicious clients meets $\frac{\tilde{n}}{n+\tilde{n}}<\alpha$ in FL. While in FR,  given an item $v$ with degree $d_v$, the aggregator only collects $d_v$ benign gradients in sparse aggregation. In Byzantine attacks, all the malicious clients can easily collude to send consistent malicious gradients to certain item $v$. Once $\tilde{n}/(d_v + \tilde{n}) > \alpha$, namely $v$'s degree meets $d_v < (\tilde{n} - \alpha \tilde{n})/\alpha$, the malicious updates will ultimately dominate and hence the statically robust aggregator will be tricked to pick the malicious updates. 
Let $p(x) = Cx^{-\beta}$ be the power-law distribution of items' degrees. Its cumulative distribution function $P(x)$ is defined as the probability that the quantity of the item's degree is larger than $x$:
\begin{align}
    P(x) &= \operatorname{Pr}(X>x)=C \int_{x}^{+\infty} p(X) dX \notag \\
         &= C \int_{x}^{+\infty} X^{-\beta} dX = \frac{C x^{(1-\beta)}}{\beta-1} .
\end{align}
So the probability that item's degree is smaller than $(\tilde{n} - \alpha\tilde{n})/\alpha$ can be calculated as following:
\begin{equation}
        1-P(X>\frac{1-\alpha}{\alpha} \tilde{n}) = 1-
        \frac{C}{\beta-1} (\frac{1-\alpha}{\alpha} \tilde{n})^{(1-\beta)}.
\end{equation}
\end{proof}

\section{Reproducibility Supplement}\label{app:Reproducibility} 
\subsection{Dataset and Evaluation Metric}
Following~\cite{Rong2022FedRecAttackMP}, Spattack is evaluated on three widely used datasets, including a movie recommendation dataset {MovieLens-1M (ML1M)}~\cite{Harper2016TheMD}, the small version {MovieLens-100K (ML100K)}, and game recommendation dataset {Steam-200K (Steam)}~\cite{Cheuque2019RecommenderSF}. For all datasets, we closely follow the dataset configurations in previous works~\cite{He2017NeuralCF, Rong2022FedRecAttackMP} by unifying interactions as implicit feedback and removing duplicate interactions. To evaluate the ranking quality of the test item recommendation, we adopt two common evaluation protocols, i.e., {hit ratio (HR)} and normalized discounted cumulative gain at rank $K$ (nDCG@$K$). 

Here, we use $K$ as 5 and 10.  For each user, since ranking the test item among all items is time-consuming, following the widely-used strategy~\cite{Elkahky2015AMD,He2017NeuralCF}, we randomly sample 100 items from the items that have not interacted with the user, then rank the test item among the 100 items. The HR@$K$ indicates whether the test item is ranked in the top-$K$, and the nNDCG@$K$ takes position significance into account. The higher HR@$K$ and nNDCG@$K$ indicate better recommendation performance. Note that these metrics are only calculated on benign clients. 

\subsection{Federated Recommender Systems} 
We use FedMF~\cite{Rong2022FedRecAttackMP} as our target model's architecture in evaluation. We outline the details for reproducibility below.
Following~\cite{Rong2022FedRecAttackMP}, with BPR loss, the FedMF updates user embeddings locally and optimizes item embeddings on the server. We set the unit size of embeddings to 32.
To evaluate the generalization ability of Spattack, we also conduct attacks on the state-of-the-art FedGNN~\cite{Wu2021FedGNN}, which can collaboratively train GNN models meanwhile exploiting high-order user-item interaction information with privacy well protected.
We use BPR loss to train a 2-layer FedGNN model, where both item embeddings and GNN parameters are globally optimized. For the input user and item embedding, the dimension is set to 32. For the hidden layer, we set the hidden unit size to 64. Stochastic gradient descent is selected as the default optimization algorithm, and its learning rate is 0.01.

\subsection{Baselines}
$\bullet$ LabelFlip~\cite{ValeTolpegin2020DataPA} poisons data by flipping the training labels of malicious clients and does not require knowledge of the training data distribution. Each malicious client uses positive samples as negative samples and uses negative samples as positive samples.\\
$\bullet$ FedAttack~\cite{Wu2022FedAttackEA} conducts data poisoning by employing improper positive/negative samples. Each malicious client selects the items that are most similar to the user's interest as negative samples while regarding items that are most dissimilar to the user's interest as positive samples.\\
$\bullet$ Gaussian~\cite{Fang2019LocalMP} estimates a Gaussian distribution of benign gradients and then uploads samples from it.\\
$\bullet$ LIE~\cite{GiladBaruch2019ALI} adds small amounts of noise towards the average of benign gradients. We assign 0.1 as the scaling factor that affects the standard deviation of model parameters.\\
$\bullet$ Fang~\cite{Fang2019LocalMP} adds noise to opposite directions of the average normal gradient. We select the attack scaling factor from randomly uniform samples from [3, 4].\\
$\bullet$ Cluster~\cite{Yu2022UntargetedAA} uploads malicious gradients that aim to make item embeddings collapse into several dense clusters. We set the initial number of clusters as 1. The range of the number of clusters and the threshold is set to [1, 10].
\begin{table}[t]
\centering
\caption{The number of malicious clients under different attack ratios $\rho$.}\label{tab:attacker_num}
\vskip -0.15in
\begin{tabular}{l|c|c|c|c|c}
\hline
\centering
Dataset & 1\% & 3\% & 5\% & 10\% & 15\% \\ \hline
ML100K & 9  & 29  & 49  & 105 & 166 \\ 
ML1M   & 61 & 186 & 317 & 671 & 1066 \\ 
Steam   & 37 & 116 & 197 & 417 & 662 \\ 
\hline
\end{tabular}
\vspace{-15pt}
\end{table}

\begin{table*}[t]
\caption{
Recommendation performance under Spattack-O-D. We also report the performance drop rate w.r.t. clean model. 
}
\vskip -0.125in
\resizebox{\linewidth}{!}{
\begin{tabular}{c|c|cccc|cccc|cccc}
\hline
\multirow{2}{*}{Dataset}   & \multirow{2}{*}{Defense} & \multicolumn{4}{c|}{1\%}                                                                                                                                                                                                                                       & \multicolumn{4}{c|}{3\%}                                                                                                                                                                                                                  & \multicolumn{4}{c}{5\%}                                                                                                                                                                                                                   \\ \cline{3-14} 
                        &                          & HR@5                                                                           & nDCG@5                                                    & HR@10                                                     & nDCG@10                                                   & HR@5                                                      & nDCG@5                                                    & HR@10                                                     & nDCG@10                                                   & HR@5                                                      & nDCG@5                                                    & HR@10                                                     & nDCG@10                                                   \\ \hline
\multirow{10}{*}{ML100K} & Mean                     & \begin{tabular}[c]{@{}c@{}}0.0551\\ (-78\%)\end{tabular}                      & \begin{tabular}[c]{@{}c@{}}0.0354\\ (-78\%)\end{tabular} & \begin{tabular}[c]{@{}c@{}}0.0986\\ (-76\%)\end{tabular} & \begin{tabular}[c]{@{}c@{}}0.0491\\ (-77\%)\end{tabular} & \begin{tabular}[c]{@{}c@{}}0.0530\\ (-79\%)\end{tabular}  & \begin{tabular}[c]{@{}c@{}}0.0339\\ (-79\%)\end{tabular} & \begin{tabular}[c]{@{}c@{}}0.0944\\ (-77\%)\end{tabular} & \begin{tabular}[c]{@{}c@{}}0.0470\\ (-78\%)\end{tabular}  & \begin{tabular}[c]{@{}c@{}}0.0573\\ (-77\%)\end{tabular} & \begin{tabular}[c]{@{}c@{}}0.0352\\ (-79\%)\end{tabular} & \begin{tabular}[c]{@{}c@{}}0.0944\\ (-77\%)\end{tabular} & \begin{tabular}[c]{@{}c@{}}0.0470\\ (-78\%)\end{tabular}  \\
                        & Median                   & \begin{tabular}[c]{@{}c@{}}0.2312\\ (-9\%)\end{tabular}                       & \begin{tabular}[c]{@{}c@{}}0.1545\\ (-10\%)\end{tabular} & \begin{tabular}[c]{@{}c@{}}0.3510\\ (-9\%)\end{tabular}   & \begin{tabular}[c]{@{}c@{}}0.1924\\ (-10\%)\end{tabular} & \begin{tabular}[c]{@{}c@{}}0.1485\\ (-42\%)\end{tabular} & \begin{tabular}[c]{@{}c@{}}0.0943\\ (-45\%)\end{tabular} & \begin{tabular}[c]{@{}c@{}}0.2725\\ (-29\%)\end{tabular} & \begin{tabular}[c]{@{}c@{}}0.1339\\ (-37\%)\end{tabular} & \begin{tabular}[c]{@{}c@{}}0.0371\\ (-85\%)\end{tabular} & \begin{tabular}[c]{@{}c@{}}0.0233\\ (-87\%)\end{tabular} & \begin{tabular}[c]{@{}c@{}}0.0732\\ (-81\%)\end{tabular} & \begin{tabular}[c]{@{}c@{}}0.0346\\ (-84\%)\end{tabular} \\
                        & Norm                     & \begin{tabular}[c]{@{}c@{}}0.1972\\ (-17\%)\end{tabular}                      & \begin{tabular}[c]{@{}c@{}}0.1305\\ (-18\%)\end{tabular} & \begin{tabular}[c]{@{}c@{}}0.3181\\ (-18\%)\end{tabular} & \begin{tabular}[c]{@{}c@{}}0.1691\\ (-18\%)\end{tabular} & \begin{tabular}[c]{@{}c@{}}0.1410\\ (-41\%)\end{tabular}  & \begin{tabular}[c]{@{}c@{}}0.0981\\ (-38\%)\end{tabular} & \begin{tabular}[c]{@{}c@{}}0.2153\\ (-45\%)\end{tabular} & \begin{tabular}[c]{@{}c@{}}0.1216\\ (-41\%)\end{tabular} & \begin{tabular}[c]{@{}c@{}}0.0530\\ (-78\%)\end{tabular}  & \begin{tabular}[c]{@{}c@{}}0.0340\\ (-79\%)\end{tabular}  & \begin{tabular}[c]{@{}c@{}}0.1018\\ (-74\%)\end{tabular} & \begin{tabular}[c]{@{}c@{}}0.0496\\ (-76\%)\end{tabular} \\
                        & TrimM                    & \begin{tabular}[c]{@{}c@{}}0.2269\\ (-10\%)\end{tabular}                      & \begin{tabular}[c]{@{}c@{}}0.1488\\ (-9\%)\end{tabular}  & \begin{tabular}[c]{@{}c@{}}0.3489\\ (-14\%)\end{tabular} & \begin{tabular}[c]{@{}c@{}}0.1876\\ (-12\%)\end{tabular} & \begin{tabular}[c]{@{}c@{}}0.0647\\ (-74\%)\end{tabular} & \begin{tabular}[c]{@{}c@{}}0.0407\\ (-75\%)\end{tabular} & \begin{tabular}[c]{@{}c@{}}0.1198\\ (-70\%)\end{tabular} & \begin{tabular}[c]{@{}c@{}}0.0582\\ (-73\%)\end{tabular} & \begin{tabular}[c]{@{}c@{}}0.0361\\ (-86\%)\end{tabular} & \begin{tabular}[c]{@{}c@{}}0.0213\\ (-87\%)\end{tabular} & \begin{tabular}[c]{@{}c@{}}0.0721\\ (-82\%)\end{tabular} & \begin{tabular}[c]{@{}c@{}}0.0328\\ (-85\%)\end{tabular} \\
                        & Krum                     & \begin{tabular}[c]{@{}c@{}}0.1941\\ (+1\%)\end{tabular}                       & \begin{tabular}[c]{@{}c@{}}0.1235\\ (+3\%)\end{tabular}  & \begin{tabular}[c]{@{}c@{}}0.3065\\ (0\%)\end{tabular}   & \begin{tabular}[c]{@{}c@{}}0.1596\\ (+1\%)\end{tabular}  & \begin{tabular}[c]{@{}c@{}}0.0255\\ (-87\%)\end{tabular} & \begin{tabular}[c]{@{}c@{}}0.0134\\ (-89\%)\end{tabular} & \begin{tabular}[c]{@{}c@{}}0.0456\\ (-85\%)\end{tabular} & \begin{tabular}[c]{@{}c@{}}0.0199\\ (-87\%)\end{tabular} & \begin{tabular}[c]{@{}c@{}}0.0509\\ (-73\%)\end{tabular} & \begin{tabular}[c]{@{}c@{}}0.0295\\ (-75\%)\end{tabular} & \begin{tabular}[c]{@{}c@{}}0.0891\\ (-71\%)\end{tabular} & \begin{tabular}[c]{@{}c@{}}0.0418\\ (-73\%)\end{tabular} \\ \hline
\multirow{10}{*}{ML1M}   & Mean                     & \begin{tabular}[c]{@{}c@{}}0.1151\\ (-63\%)\end{tabular}                      & \begin{tabular}[c]{@{}c@{}}0.0702\\ (-66\%)\end{tabular} & \begin{tabular}[c]{@{}c@{}}0.2149\\ (-54\%)\end{tabular} & \begin{tabular}[c]{@{}c@{}}0.1022\\ (-60\%)\end{tabular} & \begin{tabular}[c]{@{}c@{}}0.0907\\ (-71\%)\end{tabular} & \begin{tabular}[c]{@{}c@{}}0.0549\\ (-73\%)\end{tabular} & \begin{tabular}[c]{@{}c@{}}0.1679\\ (-64\%)\end{tabular} & \begin{tabular}[c]{@{}c@{}}0.0793\\ (-69\%)\end{tabular} & \begin{tabular}[c]{@{}c@{}}0.0730\\ (-77\%)\end{tabular}  & \begin{tabular}[c]{@{}c@{}}0.0437\\ (-79\%)\end{tabular} & \begin{tabular}[c]{@{}c@{}}0.1366\\ (-70\%)\end{tabular} & \begin{tabular}[c]{@{}c@{}}0.0640\\ (-75\%)\end{tabular}  \\
                        & Median                   & \begin{tabular}[c]{@{}c@{}}0.2955\\ (-5\%)\end{tabular}                       & \begin{tabular}[c]{@{}c@{}}0.1975\\ (-4\%)\end{tabular}  & \begin{tabular}[c]{@{}c@{}}0.4422\\ (-5\%)\end{tabular}  & \begin{tabular}[c]{@{}c@{}}0.2446\\ (-4\%)\end{tabular}  & \begin{tabular}[c]{@{}c@{}}0.0394\\ (-87\%)\end{tabular} & \begin{tabular}[c]{@{}c@{}}0.0228\\ (-89\%)\end{tabular} & \begin{tabular}[c]{@{}c@{}}0.0839\\ (-82\%)\end{tabular} & \begin{tabular}[c]{@{}c@{}}0.0370\\ (-85\%)\end{tabular}  & \begin{tabular}[c]{@{}c@{}}0.0457\\ (-85\%)\end{tabular} & \begin{tabular}[c]{@{}c@{}}0.0270\\ (-87\%)\end{tabular}  & \begin{tabular}[c]{@{}c@{}}0.0919\\ (-80\%)\end{tabular} & \begin{tabular}[c]{@{}c@{}}0.0418\\ (-84\%)\end{tabular} \\
                        & Norm                     & \begin{tabular}[c]{@{}c@{}}0.3000\\ (-2\%)\end{tabular}                          & \begin{tabular}[c]{@{}c@{}}0.1981\\ (-2\%)\end{tabular}  & \begin{tabular}[c]{@{}c@{}}0.4465\\ (-2\%)\end{tabular}  & \begin{tabular}[c]{@{}c@{}}0.2453\\ (-2\%)\end{tabular}  & \begin{tabular}[c]{@{}c@{}}0.2901\\ (-5\%)\end{tabular}  & \begin{tabular}[c]{@{}c@{}}0.1893\\ (-7\%)\end{tabular}  & \begin{tabular}[c]{@{}c@{}}0.4306\\ (-5\%)\end{tabular}  & \begin{tabular}[c]{@{}c@{}}0.2347\\ (-6\%)\end{tabular}  & \begin{tabular}[c]{@{}c@{}}0.1442\\ (-53\%)\end{tabular} & \begin{tabular}[c]{@{}c@{}}0.0989\\ (-51\%)\end{tabular} & \begin{tabular}[c]{@{}c@{}}0.2104\\ (-54\%)\end{tabular} & \begin{tabular}[c]{@{}c@{}}0.1202\\ (-52\%)\end{tabular} \\
                        & TrimM                    & \begin{tabular}[c]{@{}c@{}}0.2593\\ (-17\%)\end{tabular}                      & \begin{tabular}[c]{@{}c@{}}0.1765\\ (-14\%)\end{tabular} & \begin{tabular}[c]{@{}c@{}}0.4151\\ (-10\%)\end{tabular} & \begin{tabular}[c]{@{}c@{}}0.2262\\ (-11\%)\end{tabular} & \begin{tabular}[c]{@{}c@{}}0.0391\\ (-87\%)\end{tabular} & \begin{tabular}[c]{@{}c@{}}0.0222\\ (-89\%)\end{tabular} & \begin{tabular}[c]{@{}c@{}}0.0838\\ (-82\%)\end{tabular} & \begin{tabular}[c]{@{}c@{}}0.0364\\ (-86\%)\end{tabular} & \begin{tabular}[c]{@{}c@{}}0.0445\\ (-86\%)\end{tabular} & \begin{tabular}[c]{@{}c@{}}0.0255\\ (-88\%)\end{tabular} & \begin{tabular}[c]{@{}c@{}}0.0863\\ (-81\%)\end{tabular} & \begin{tabular}[c]{@{}c@{}}0.0390\\ (-85\%)\end{tabular}  \\
                        & Krum                     & \begin{tabular}[c]{@{}c@{}}0.2361\\ (0\%)\end{tabular}                        & \begin{tabular}[c]{@{}c@{}}0.1504\\ (0\%)\end{tabular}   & \begin{tabular}[c]{@{}c@{}}0.3586\\ (-4\%)\end{tabular}  & \begin{tabular}[c]{@{}c@{}}0.1899\\ (-3\%)\end{tabular}  & \begin{tabular}[c]{@{}c@{}}0.0368\\ (-84\%)\end{tabular} & \begin{tabular}[c]{@{}c@{}}0.0216\\ (-86\%)\end{tabular} & \begin{tabular}[c]{@{}c@{}}0.0776\\ (-79\%)\end{tabular} & \begin{tabular}[c]{@{}c@{}}0.0346\\ (-82\%)\end{tabular} & \begin{tabular}[c]{@{}c@{}}0.0462\\ (-80\%)\end{tabular} & \begin{tabular}[c]{@{}c@{}}0.0268\\ (-82\%)\end{tabular} & \begin{tabular}[c]{@{}c@{}}0.0929\\ (-75\%)\end{tabular} & \begin{tabular}[c]{@{}c@{}}0.0418\\ (-79\%)\end{tabular} \\ \hline
\multirow{10}{*}{Steam}  & Mean                     & \multicolumn{1}{c}{\begin{tabular}[c]{@{}c@{}}0.0677\\ (-88\%)\end{tabular}} & \begin{tabular}[c]{@{}c@{}}0.0403\\ (-89\%)\end{tabular} & \begin{tabular}[c]{@{}c@{}}0.1276\\ (-82\%)\end{tabular} & \begin{tabular}[c]{@{}c@{}}0.0596\\ (-86\%)\end{tabular} & \begin{tabular}[c]{@{}c@{}}0.0685\\ (-88\%)\end{tabular} & \begin{tabular}[c]{@{}c@{}}0.0408\\ (-89\%)\end{tabular} & \begin{tabular}[c]{@{}c@{}}0.1287\\ (-81\%)\end{tabular} & \begin{tabular}[c]{@{}c@{}}0.0601\\ (-86\%)\end{tabular} & \begin{tabular}[c]{@{}c@{}}0.069\\ (-88\%)\end{tabular}  & \begin{tabular}[c]{@{}c@{}}0.0411\\ (-89\%)\end{tabular} & \begin{tabular}[c]{@{}c@{}}0.129\\ (-81\%)\end{tabular}  & \begin{tabular}[c]{@{}c@{}}0.0603\\ (-86\%)\end{tabular} \\
                        & Median                   & \multicolumn{1}{c}{\begin{tabular}[c]{@{}c@{}}0.1719\\ (-38\%)\end{tabular}} & \begin{tabular}[c]{@{}c@{}}0.1205\\ (-38\%)\end{tabular} & \begin{tabular}[c]{@{}c@{}}0.2323\\ (-54\%)\end{tabular} & \begin{tabular}[c]{@{}c@{}}0.1400\\ (-47\%)\end{tabular}   & \begin{tabular}[c]{@{}c@{}}0.0442\\ (-84\%)\end{tabular} & \begin{tabular}[c]{@{}c@{}}0.0265\\ (-86\%)\end{tabular} & \begin{tabular}[c]{@{}c@{}}0.0791\\ (-84\%)\end{tabular} & \begin{tabular}[c]{@{}c@{}}0.0376\\ (-86\%)\end{tabular} & \begin{tabular}[c]{@{}c@{}}0.0290\\ (-90\%)\end{tabular}  & \begin{tabular}[c]{@{}c@{}}0.0175\\ (-91\%)\end{tabular} & \begin{tabular}[c]{@{}c@{}}0.0568\\ (-89\%)\end{tabular} & \begin{tabular}[c]{@{}c@{}}0.0262\\ (-90\%)\end{tabular} \\
                        & Norm                     & \multicolumn{1}{c}{\begin{tabular}[c]{@{}c@{}}0.0717\\ (-87\%)\end{tabular}} & \begin{tabular}[c]{@{}c@{}}0.0428\\ (-88\%)\end{tabular} & \begin{tabular}[c]{@{}c@{}}0.1322\\ (-81\%)\end{tabular} & \begin{tabular}[c]{@{}c@{}}0.0622\\ (-84\%)\end{tabular} & \begin{tabular}[c]{@{}c@{}}0.0690\\ (-87\%)\end{tabular}  & \begin{tabular}[c]{@{}c@{}}0.0409\\ (-88\%)\end{tabular} & \begin{tabular}[c]{@{}c@{}}0.1292\\ (-81\%)\end{tabular} & \begin{tabular}[c]{@{}c@{}}0.0602\\ (-85\%)\end{tabular} & \begin{tabular}[c]{@{}c@{}}0.0682\\ (-87\%)\end{tabular} & \begin{tabular}[c]{@{}c@{}}0.0408\\ (-88\%)\end{tabular} & \begin{tabular}[c]{@{}c@{}}0.1300\\ (-81\%)\end{tabular}   & \begin{tabular}[c]{@{}c@{}}0.0605\\ (-85\%)\end{tabular} \\
                        & TrimM                    & \multicolumn{1}{c}{\begin{tabular}[c]{@{}c@{}}0.2001\\ (-65\%)\end{tabular}} & \begin{tabular}[c]{@{}c@{}}0.1622\\ (-57\%)\end{tabular} & \begin{tabular}[c]{@{}c@{}}0.2502\\ (-64\%)\end{tabular} & \begin{tabular}[c]{@{}c@{}}0.1783\\ (-58\%)\end{tabular} & \begin{tabular}[c]{@{}c@{}}0.0378\\ (-93\%)\end{tabular} & \begin{tabular}[c]{@{}c@{}}0.0228\\ (-94\%)\end{tabular} & \begin{tabular}[c]{@{}c@{}}0.0714\\ (-90\%)\end{tabular} & \begin{tabular}[c]{@{}c@{}}0.0335\\ (-92\%)\end{tabular} & \begin{tabular}[c]{@{}c@{}}0.0288\\ (-95\%)\end{tabular} & \begin{tabular}[c]{@{}c@{}}0.0175\\ (-95\%)\end{tabular} & \begin{tabular}[c]{@{}c@{}}0.0584\\ (-92\%)\end{tabular} & \begin{tabular}[c]{@{}c@{}}0.0269\\ (-94\%)\end{tabular} \\
                        & Krum                     & \multicolumn{1}{c}{\begin{tabular}[c]{@{}c@{}}0.1607\\ (-37\%)\end{tabular}} & \begin{tabular}[c]{@{}c@{}}0.1253\\ (-29\%)\end{tabular} & \begin{tabular}[c]{@{}c@{}}0.2118\\ (-55\%)\end{tabular} & \begin{tabular}[c]{@{}c@{}}0.1416\\ (-42\%)\end{tabular} & \begin{tabular}[c]{@{}c@{}}0.0381\\ (-85\%)\end{tabular} & \begin{tabular}[c]{@{}c@{}}0.0237\\ (-87\%)\end{tabular} & \begin{tabular}[c]{@{}c@{}}0.0709\\ (-85\%)\end{tabular} & \begin{tabular}[c]{@{}c@{}}0.0341\\ (-86\%)\end{tabular} & \begin{tabular}[c]{@{}c@{}}0.0290\\ (-89\%)\end{tabular}  & \begin{tabular}[c]{@{}c@{}}0.0175\\ (-90\%)\end{tabular} & \begin{tabular}[c]{@{}c@{}}0.0570\\ (-88\%)\end{tabular}  & \begin{tabular}[c]{@{}c@{}}0.0264\\ (-89\%)\end{tabular} \\ \hline
\end{tabular}}
\label{tab:main_type1_D}
\vskip -0.05in
\end{table*}

\begin{table*}[t]
\caption{
Recommendation performance under Spattack-O-S. We also report the performance drop rate w.r.t. clean model. 
}
\vskip -0.125in
\resizebox{\linewidth}{!}{
\begin{tabular}{c|c|cccc|cccc|cccc}
\hline
\multirow{2}{*}{Dataset}   & \multirow{2}{*}{Defense} & \multicolumn{4}{c|}{1\%}                                                                                                                                                                                                                  & \multicolumn{4}{c|}{3\%}                                                                                                                                                                                                                  & \multicolumn{4}{c}{5\%}                                                                                                                                                                                                                   \\ \cline{3-14} 
                        &                          & HR@5                                                      & nDCG@5                                                    & HR@10                                                     & nDCG@10                                                   & HR@5                                                      & nDCG@5                                                    & HR@10                                                     & nDCG@10                                                   & HR@5                                                      & nDCG@5                                                    & HR@10                                                     & nDCG@10                                                   \\ \hline
\multirow{5}{*}{ML100K} & Mean                     & \begin{tabular}[c]{@{}c@{}}0.0647\\ (-74\%)\end{tabular} & \begin{tabular}[c]{@{}c@{}}0.0373\\ (-77\%)\end{tabular} & \begin{tabular}[c]{@{}c@{}}0.1421\\ (-65\%)\end{tabular} & \begin{tabular}[c]{@{}c@{}}0.0618\\ (-71\%)\end{tabular} & \begin{tabular}[c]{@{}c@{}}0.0594\\ (-76\%)\end{tabular} & \begin{tabular}[c]{@{}c@{}}0.0362\\ (-78\%)\end{tabular} & \begin{tabular}[c]{@{}c@{}}0.0997\\ (-75\%)\end{tabular} & \begin{tabular}[c]{@{}c@{}}0.0492\\ (-77\%)\end{tabular} & \begin{tabular}[c]{@{}c@{}}0.0541\\ (-78\%)\end{tabular} & \begin{tabular}[c]{@{}c@{}}0.0318\\ (-81\%)\end{tabular} & \begin{tabular}[c]{@{}c@{}}0.1039\\ (-74\%)\end{tabular} & \begin{tabular}[c]{@{}c@{}}0.0479\\ (-78\%)\end{tabular} \\
                        & Median                   & \begin{tabular}[c]{@{}c@{}}0.2365\\ (-7\%)\end{tabular}  & \begin{tabular}[c]{@{}c@{}}0.1591\\ (-8\%)\end{tabular}  & \begin{tabular}[c]{@{}c@{}}0.3606\\ (-6\%)\end{tabular}  & \begin{tabular}[c]{@{}c@{}}0.1989\\ (-7\%)\end{tabular}  & \begin{tabular}[c]{@{}c@{}}0.1994\\ (-22\%)\end{tabular} & \begin{tabular}[c]{@{}c@{}}0.1312\\ (-24\%)\end{tabular} & \begin{tabular}[c]{@{}c@{}}0.3075\\ (-20\%)\end{tabular} & \begin{tabular}[c]{@{}c@{}}0.1660\\ (-22\%)\end{tabular} & \begin{tabular}[c]{@{}c@{}}0.1198\\ (-53\%)\end{tabular} & \begin{tabular}[c]{@{}c@{}}0.0720\\ (-58\%)\end{tabular} & \begin{tabular}[c]{@{}c@{}}0.2068\\ (-46\%)\end{tabular} & \begin{tabular}[c]{@{}c@{}}0.0996\\ (-54\%)\end{tabular} \\
                        & Norm                     & \begin{tabular}[c]{@{}c@{}}0.2216\\ (-7\%)\end{tabular}  & \begin{tabular}[c]{@{}c@{}}0.1417\\ (-11\%)\end{tabular} & \begin{tabular}[c]{@{}c@{}}0.3446\\ (-11\%)\end{tabular} & \begin{tabular}[c]{@{}c@{}}0.1811\\ (-13\%)\end{tabular} & \begin{tabular}[c]{@{}c@{}}0.1994\\ (-16\%)\end{tabular} & \begin{tabular}[c]{@{}c@{}}0.1249\\ (-22\%)\end{tabular} & \begin{tabular}[c]{@{}c@{}}0.3404\\ (-12\%)\end{tabular} & \begin{tabular}[c]{@{}c@{}}0.1698\\ (-18\%)\end{tabular} & \begin{tabular}[c]{@{}c@{}}0.1538\\ (-35\%)\end{tabular} & \begin{tabular}[c]{@{}c@{}}0.1032\\ (-35\%)\end{tabular} & \begin{tabular}[c]{@{}c@{}}0.2418\\ (-38\%)\end{tabular} & \begin{tabular}[c]{@{}c@{}}0.1314\\ (-37\%)\end{tabular} \\ \hline
\multirow{5}{*}{ML1M}   & Mean                     & \begin{tabular}[c]{@{}c@{}}0.1204\\ (-61\%)\end{tabular} & \begin{tabular}[c]{@{}c@{}}0.0738\\ (-64\%)\end{tabular} & \begin{tabular}[c]{@{}c@{}}0.2230\\ (-52\%)\end{tabular} & \begin{tabular}[c]{@{}c@{}}0.1065\\ (-58\%)\end{tabular} & \begin{tabular}[c]{@{}c@{}}0.0925\\ (-70\%)\end{tabular} & \begin{tabular}[c]{@{}c@{}}0.0553\\ (-73\%)\end{tabular} & \begin{tabular}[c]{@{}c@{}}0.1753\\ (-62\%)\end{tabular} & \begin{tabular}[c]{@{}c@{}}0.0817\\ (-68\%)\end{tabular} & \begin{tabular}[c]{@{}c@{}}0.0805\\ (-74\%)\end{tabular} & \begin{tabular}[c]{@{}c@{}}0.0493\\ (-76\%)\end{tabular} & \begin{tabular}[c]{@{}c@{}}0.1568\\ (-66\%)\end{tabular} & \begin{tabular}[c]{@{}c@{}}0.0736\\ (-71\%)\end{tabular} \\
                        & Median                   & \begin{tabular}[c]{@{}c@{}}0.2995\\ (-4\%)\end{tabular}  & \begin{tabular}[c]{@{}c@{}}0.2001\\ (-2\%)\end{tabular}  & \begin{tabular}[c]{@{}c@{}}0.4452\\ (-4\%)\end{tabular}  & \begin{tabular}[c]{@{}c@{}}0.2468\\ (-3\%)\end{tabular}  & \begin{tabular}[c]{@{}c@{}}0.2439\\ (-22\%)\end{tabular} & \begin{tabular}[c]{@{}c@{}}0.1570\\ (-23\%)\end{tabular} & \begin{tabular}[c]{@{}c@{}}0.3641\\ (-21\%)\end{tabular} & \begin{tabular}[c]{@{}c@{}}0.1957\\ (-23\%)\end{tabular} & \begin{tabular}[c]{@{}c@{}}0.0447\\ (-86\%)\end{tabular} & \begin{tabular}[c]{@{}c@{}}0.0264\\ (-87\%)\end{tabular} & \begin{tabular}[c]{@{}c@{}}0.0897\\ (-81\%)\end{tabular} & \begin{tabular}[c]{@{}c@{}}0.0407\\ (-84\%)\end{tabular} \\
                        & Norm                     & \begin{tabular}[c]{@{}c@{}}0.3028\\ (-1\%)\end{tabular}  & \begin{tabular}[c]{@{}c@{}}0.1986\\ (-2\%)\end{tabular}  & \begin{tabular}[c]{@{}c@{}}0.4520\\ (-1\%)\end{tabular}  & \begin{tabular}[c]{@{}c@{}}0.2468\\ (-2\%)\end{tabular}  & \begin{tabular}[c]{@{}c@{}}0.2902\\ (-5\%)\end{tabular}  & \begin{tabular}[c]{@{}c@{}}0.1898\\ (-6\%)\end{tabular}  & \begin{tabular}[c]{@{}c@{}}0.4382\\ (-4\%)\end{tabular}  & \begin{tabular}[c]{@{}c@{}}0.2375\\ (-5\%)\end{tabular}  & \begin{tabular}[c]{@{}c@{}}0.2023\\ (-34\%)\end{tabular} & \begin{tabular}[c]{@{}c@{}}0.1402\\ (-31\%)\end{tabular} & \begin{tabular}[c]{@{}c@{}}0.2793\\ (-39\%)\end{tabular} & \begin{tabular}[c]{@{}c@{}}0.1648\\ (-34\%)\end{tabular} \\ \hline
\multirow{5}{*}{Steam}  & Mean                     & \begin{tabular}[c]{@{}c@{}}0.0701\\ (-88\%)\end{tabular} & \begin{tabular}[c]{@{}c@{}}0.0410\\ (-89\%)\end{tabular} & \begin{tabular}[c]{@{}c@{}}0.1404\\ (-80\%)\end{tabular} & \begin{tabular}[c]{@{}c@{}}0.0635\\ (-85\%)\end{tabular} & \begin{tabular}[c]{@{}c@{}}0.0671\\ (-88\%)\end{tabular} & \begin{tabular}[c]{@{}c@{}}0.0390\\ (-90\%)\end{tabular} & \begin{tabular}[c]{@{}c@{}}0.1308\\ (-81\%)\end{tabular} & \begin{tabular}[c]{@{}c@{}}0.0593\\ (-86\%)\end{tabular} & \begin{tabular}[c]{@{}c@{}}0.0695\\ (-88\%)\end{tabular} & \begin{tabular}[c]{@{}c@{}}0.0410\\ (-89\%)\end{tabular} & \begin{tabular}[c]{@{}c@{}}0.1348\\ (-81\%)\end{tabular} & \begin{tabular}[c]{@{}c@{}}0.0617\\ (-85\%)\end{tabular} \\
                        & Median                   & \begin{tabular}[c]{@{}c@{}}0.3333\\ (+20\%)\end{tabular} & \begin{tabular}[c]{@{}c@{}}0.2448\\ (+27\%)\end{tabular} & \begin{tabular}[c]{@{}c@{}}0.4602\\ (-9\%)\end{tabular}  & \begin{tabular}[c]{@{}c@{}}0.2855\\ (+8\%)\end{tabular}  & \begin{tabular}[c]{@{}c@{}}0.0266\\ (-90\%)\end{tabular} & \begin{tabular}[c]{@{}c@{}}0.0152\\ (-92\%)\end{tabular} & \begin{tabular}[c]{@{}c@{}}0.0600\\ (-88\%)\end{tabular} & \begin{tabular}[c]{@{}c@{}}0.0258\\ (-90\%)\end{tabular} & \begin{tabular}[c]{@{}c@{}}0.0218\\ (-92\%)\end{tabular} & \begin{tabular}[c]{@{}c@{}}0.0115\\ (-94\%)\end{tabular} & \begin{tabular}[c]{@{}c@{}}0.0498\\ (-90\%)\end{tabular} & \begin{tabular}[c]{@{}c@{}}0.0204\\ (-92\%)\end{tabular} \\
                        & Norm                     & \begin{tabular}[c]{@{}c@{}}0.1761\\ (-67\%)\end{tabular} & \begin{tabular}[c]{@{}c@{}}0.1226\\ (-64\%)\end{tabular} & \begin{tabular}[c]{@{}c@{}}0.2673\\ (-61\%)\end{tabular} & \begin{tabular}[c]{@{}c@{}}0.1520\\ (-61\%)\end{tabular} & \begin{tabular}[c]{@{}c@{}}0.0685\\ (-87\%)\end{tabular} & \begin{tabular}[c]{@{}c@{}}0.0398\\ (-88\%)\end{tabular} & \begin{tabular}[c]{@{}c@{}}0.1324\\ (-81\%)\end{tabular} & \begin{tabular}[c]{@{}c@{}}0.0602\\ (-85\%)\end{tabular} & \begin{tabular}[c]{@{}c@{}}0.0703\\ (-87\%)\end{tabular} & \begin{tabular}[c]{@{}c@{}}0.0414\\ (-88\%)\end{tabular} & \begin{tabular}[c]{@{}c@{}}0.1359\\ (-80\%)\end{tabular} & \begin{tabular}[c]{@{}c@{}}0.0623\\ (-84\%)\end{tabular} \\ \hline
\end{tabular}}
\label{tab:main_type1_S}
\vskip -0.2in
\end{table*}
\subsection{Additional Implementation Details}
$\bullet$ \textbf{General settings}. The default optimization algorithm is stochastic gradient descent, and the learning rate is 0.01. We set the epoch number to 200 and the malicious clients conduct attacks from the first epoch. The number of malicious clients under different attack ratios $\rho$ are shown in Tab.~\ref{tab:attacker_num}. \\
$\bullet$ \textbf{FR models}. For the input user and item embedding, the dimension is set to 32. The hidden layer unit size is 64 in FedGNN. In FedMF, following~\cite{Rong2022FedRecAttackMP}, we initialize the representations of users and items with normal distribution with a mean value of 0 and a standard deviation of 0.01. In FedGNN, we initialize the GNN weight matrices with Xavier Glorot's initialization with a gain value of 1. \\
$\bullet$ \textbf{Spattack-O-S/D}. For omniscient attackers, we generate malicious gradients in the opposite direction of the benign ones. Such that the mean aggregator will output zero gradients, preventing the convergence of item embeddings. \\
$\bullet$ \textbf{Spattack-L-S/D}. For non-omniscient attackers, we sample the Gaussian noise with a mean value of 0 and a standard deviation of 1 as the current gradient. And all malicious clients will upload the same malicious gradient.\\
$\bullet$ \textbf{Spattack-O/L-S}. When the maximum number of poisoned items $\tilde{m}_{max}$ is limited, we sample the poisoned item list for each malicious client from the distribution of item degree. The items with more benign updates will receive more malicious updates. And the sampling operation of the malicious clients is non-repeatable. 

\subsection{Experiment Environment and Source Code}
All experiments are conducted on a Linux server with one GPU (NVIDIA GeForce RTX 3090 GPU) and CPU (Intel Xeon Gold 6348), and its operating system is Ubuntu 18.04.5. We implement Spattack with the deep learning library PyTorch. The main source code of Spattack can be found at https://github.com/zhongjian-zhang/Spattack.
\section{Supplemental Experimental Results}\label{app:exp} 
\begin{table*}[t]
\caption{
Recommendation performance under Spattack-L-D. We also report the performance drop ratio w.r.t. clean model. 
}
\vskip -0.125in
\resizebox{\linewidth}{!}{
\begin{tabular}{c|c|cccc|cccc|cccc}
\hline
\multirow{2}{*}{Dataset}   & \multirow{2}{*}{Defense} & \multicolumn{4}{c|}{5\%}                                                                                                                                                                                                                  & \multicolumn{4}{c|}{10\%}                                                                                                                                                                                                                 & \multicolumn{4}{c}{15\%}                                                                                                                                                                                                                  \\ \cline{3-14} 
                        &                          & HR@5                                                      & nDCG@5                                                    & HR@10                                                     & nDCG@10                                                   & HR@5                                                      & nDCG@5                                                    & HR@10                                                     & nDCG@10                                                   & HR@5                                                      & nDCG@5                                                    & HR@10                                                     & nDCG@10                                                   \\ \hline
\multirow{10}{*}{ML100K} & Mean                     & \begin{tabular}[c]{@{}c@{}}0.0318\\ (-87\%)\end{tabular} & \begin{tabular}[c]{@{}c@{}}0.0160\\ (-90\%)\end{tabular}  & \begin{tabular}[c]{@{}c@{}}0.0785\\ (-81\%)\end{tabular} & \begin{tabular}[c]{@{}c@{}}0.0309\\ (-85\%)\end{tabular} & \begin{tabular}[c]{@{}c@{}}0.0233\\ (-91\%)\end{tabular} & \begin{tabular}[c]{@{}c@{}}0.0121\\ (-93\%)\end{tabular} & \begin{tabular}[c]{@{}c@{}}0.0562\\ (-86\%)\end{tabular} & \begin{tabular}[c]{@{}c@{}}0.0225\\ (-89\%)\end{tabular} & \begin{tabular}[c]{@{}c@{}}0.0265\\ (-89\%)\end{tabular} & \begin{tabular}[c]{@{}c@{}}0.0139\\ (-92\%)\end{tabular} & \begin{tabular}[c]{@{}c@{}}0.0562\\ (-86\%)\end{tabular} & \begin{tabular}[c]{@{}c@{}}0.0231\\ (-89\%)\end{tabular} \\
                        & Median                   & \begin{tabular}[c]{@{}c@{}}0.2163\\ (-15\%)\end{tabular} & \begin{tabular}[c]{@{}c@{}}0.1324\\ (-23\%)\end{tabular} & \begin{tabular}[c]{@{}c@{}}0.3637\\ (-6\%)\end{tabular}  & \begin{tabular}[c]{@{}c@{}}0.1797\\ (-16\%)\end{tabular} & \begin{tabular}[c]{@{}c@{}}0.0095\\ (-96\%)\end{tabular} & \begin{tabular}[c]{@{}c@{}}0.0056\\ (-97\%)\end{tabular} & \begin{tabular}[c]{@{}c@{}}0.0233\\ (-94\%)\end{tabular} & \begin{tabular}[c]{@{}c@{}}0.0099\\ (-95\%)\end{tabular} & \begin{tabular}[c]{@{}c@{}}0.0233\\ (-91\%)\end{tabular} & \begin{tabular}[c]{@{}c@{}}0.0139\\ (-92\%)\end{tabular} & \begin{tabular}[c]{@{}c@{}}0.0530\\ (-86\%)\end{tabular}  & \begin{tabular}[c]{@{}c@{}}0.0234\\ (-89\%)\end{tabular} \\
                        & Norm                     & \begin{tabular}[c]{@{}c@{}}0.2418\\ (+2\%)\end{tabular}  & \begin{tabular}[c]{@{}c@{}}0.1480\\ (-7\%)\end{tabular}   & \begin{tabular}[c]{@{}c@{}}0.3659\\ (-6\%)\end{tabular}  & \begin{tabular}[c]{@{}c@{}}0.1876\\ (-10\%)\end{tabular} & \begin{tabular}[c]{@{}c@{}}0.1516\\ (-36\%)\end{tabular} & \begin{tabular}[c]{@{}c@{}}0.0910\\ (-43\%)\end{tabular}  & \begin{tabular}[c]{@{}c@{}}0.2969\\ (-23\%)\end{tabular} & \begin{tabular}[c]{@{}c@{}}0.138\\ (-33\%)\end{tabular}  & \begin{tabular}[c]{@{}c@{}}0.106\\ (-55\%)\end{tabular}  & \begin{tabular}[c]{@{}c@{}}0.0601\\ (-62\%)\end{tabular} & \begin{tabular}[c]{@{}c@{}}0.1877\\ (-52\%)\end{tabular} & \begin{tabular}[c]{@{}c@{}}0.086\\ (-59\%)\end{tabular}  \\
                        & TrimM                    & \begin{tabular}[c]{@{}c@{}}0.2269\\ (-10\%)\end{tabular} & \begin{tabular}[c]{@{}c@{}}0.1511\\ (-8\%)\end{tabular}  & \begin{tabular}[c]{@{}c@{}}0.3690\\ (-9\%)\end{tabular}   & \begin{tabular}[c]{@{}c@{}}0.1968\\ (-8\%)\end{tabular}  & \begin{tabular}[c]{@{}c@{}}0.0074\\ (-97\%)\end{tabular} & \begin{tabular}[c]{@{}c@{}}0.0041\\ (-97\%)\end{tabular} & \begin{tabular}[c]{@{}c@{}}0.0286\\ (-93\%)\end{tabular} & \begin{tabular}[c]{@{}c@{}}0.0110\\ (-95\%)\end{tabular}  & \begin{tabular}[c]{@{}c@{}}0.0276\\ (-89\%)\end{tabular} & \begin{tabular}[c]{@{}c@{}}0.0162\\ (-90\%)\end{tabular} & \begin{tabular}[c]{@{}c@{}}0.0551\\ (-86\%)\end{tabular} & \begin{tabular}[c]{@{}c@{}}0.0249\\ (-88\%)\end{tabular} \\
                        & Krum                     & \begin{tabular}[c]{@{}c@{}}0.1474\\ (-23\%)\end{tabular} & \begin{tabular}[c]{@{}c@{}}0.0879\\ (-27\%)\end{tabular} & \begin{tabular}[c]{@{}c@{}}0.2259\\ (-27\%)\end{tabular} & \begin{tabular}[c]{@{}c@{}}0.1132\\ (-28\%)\end{tabular} & \begin{tabular}[c]{@{}c@{}}0.0159\\ (-92\%)\end{tabular} & \begin{tabular}[c]{@{}c@{}}0.0101\\ (-92\%)\end{tabular} & \begin{tabular}[c]{@{}c@{}}0.0318\\ (-90\%)\end{tabular} & \begin{tabular}[c]{@{}c@{}}0.0152\\ (-90\%)\end{tabular} & \begin{tabular}[c]{@{}c@{}}0.0286\\ (-85\%)\end{tabular} & \begin{tabular}[c]{@{}c@{}}0.0159\\ (-87\%)\end{tabular} & \begin{tabular}[c]{@{}c@{}}0.0615\\ (-80\%)\end{tabular} & \begin{tabular}[c]{@{}c@{}}0.0264\\ (-83\%)\end{tabular} \\ \hline
\multirow{7}{*}{ML1M}   & Mean                     & \begin{tabular}[c]{@{}c@{}}0.0272\\ (-91\%)\end{tabular} & \begin{tabular}[c]{@{}c@{}}0.0153\\ (-93\%)\end{tabular} & \begin{tabular}[c]{@{}c@{}}0.0720\\ (-84\%)\end{tabular}  & \begin{tabular}[c]{@{}c@{}}0.0295\\ (-88\%)\end{tabular} & \begin{tabular}[c]{@{}c@{}}0.0237\\ (-92\%)\end{tabular} & \begin{tabular}[c]{@{}c@{}}0.0128\\ (-94\%)\end{tabular} & \begin{tabular}[c]{@{}c@{}}0.0523\\ (-89\%)\end{tabular} & \begin{tabular}[c]{@{}c@{}}0.0219\\ (-91\%)\end{tabular} & \begin{tabular}[c]{@{}c@{}}0.0260\\ (-92\%)\end{tabular}  & \begin{tabular}[c]{@{}c@{}}0.0146\\ (-93\%)\end{tabular} & \begin{tabular}[c]{@{}c@{}}0.0603\\ (-87\%)\end{tabular} & \begin{tabular}[c]{@{}c@{}}0.0254\\ (-90\%)\end{tabular} \\
                        & Median                   & \begin{tabular}[c]{@{}c@{}}0.0121\\ (-96\%)\end{tabular} & \begin{tabular}[c]{@{}c@{}}0.0055\\ (-97\%)\end{tabular} & \begin{tabular}[c]{@{}c@{}}0.0894\\ (-81\%)\end{tabular} & \begin{tabular}[c]{@{}c@{}}0.0296\\ (-88\%)\end{tabular} & \begin{tabular}[c]{@{}c@{}}0.024\\ (-92\%)\end{tabular}  & \begin{tabular}[c]{@{}c@{}}0.0144\\ (-93\%)\end{tabular} & \begin{tabular}[c]{@{}c@{}}0.046\\ (-90\%)\end{tabular}  & \begin{tabular}[c]{@{}c@{}}0.0215\\ (-92\%)\end{tabular} & \begin{tabular}[c]{@{}c@{}}0.0387\\ (-88\%)\end{tabular} & \begin{tabular}[c]{@{}c@{}}0.0220\\ (-89\%)\end{tabular}  & \begin{tabular}[c]{@{}c@{}}0.0732\\ (-84\%)\end{tabular} & \begin{tabular}[c]{@{}c@{}}0.0330\\ (-87\%)\end{tabular}  \\
                        & Norm                     & \begin{tabular}[c]{@{}c@{}}0.2805\\ (-8\%)\end{tabular}  & \begin{tabular}[c]{@{}c@{}}0.1822\\ (-10\%)\end{tabular} & \begin{tabular}[c]{@{}c@{}}0.4333\\ (-5\%)\end{tabular}  & \begin{tabular}[c]{@{}c@{}}0.2313\\ (-8\%)\end{tabular}  & \begin{tabular}[c]{@{}c@{}}0.1962\\ (-36\%)\end{tabular} & \begin{tabular}[c]{@{}c@{}}0.1185\\ (-42\%)\end{tabular} & \begin{tabular}[c]{@{}c@{}}0.3343\\ (-27\%)\end{tabular} & \begin{tabular}[c]{@{}c@{}}0.1628\\ (-35\%)\end{tabular} & \begin{tabular}[c]{@{}c@{}}0.1119\\ (-63\%)\end{tabular} & \begin{tabular}[c]{@{}c@{}}0.0664\\ (-67\%)\end{tabular} & \begin{tabular}[c]{@{}c@{}}0.2240\\ (-51\%)\end{tabular}  & \begin{tabular}[c]{@{}c@{}}0.1023\\ (-59\%)\end{tabular} \\
                        & TrimM                    & \begin{tabular}[c]{@{}c@{}}0.0232\\ (-93\%)\end{tabular} & \begin{tabular}[c]{@{}c@{}}0.0103\\ (-95\%)\end{tabular} & \begin{tabular}[c]{@{}c@{}}0.1611\\ (-65\%)\end{tabular} & \begin{tabular}[c]{@{}c@{}}0.0537\\ (-79\%)\end{tabular} & \begin{tabular}[c]{@{}c@{}}0.0359\\ (-88\%)\end{tabular} & \begin{tabular}[c]{@{}c@{}}0.0207\\ (-90\%)\end{tabular} & \begin{tabular}[c]{@{}c@{}}0.0717\\ (-85\%)\end{tabular} & \begin{tabular}[c]{@{}c@{}}0.0321\\ (-87\%)\end{tabular} & \begin{tabular}[c]{@{}c@{}}0.0255\\ (-92\%)\end{tabular} & \begin{tabular}[c]{@{}c@{}}0.0152\\ (-93\%)\end{tabular} & \begin{tabular}[c]{@{}c@{}}0.0503\\ (-89\%)\end{tabular} & \begin{tabular}[c]{@{}c@{}}0.0231\\ (-91\%)\end{tabular} \\ \hline
\multirow{10}{*}{Steam}  & Mean                     & \begin{tabular}[c]{@{}c@{}}0.0205\\ (-96\%)\end{tabular} & \begin{tabular}[c]{@{}c@{}}0.0114\\ (-97\%)\end{tabular} & \begin{tabular}[c]{@{}c@{}}0.0384\\ (-94\%)\end{tabular} & \begin{tabular}[c]{@{}c@{}}0.0170\\ (-96\%)\end{tabular}  & \begin{tabular}[c]{@{}c@{}}0.0226\\ (-96\%)\end{tabular} & \begin{tabular}[c]{@{}c@{}}0.0129\\ (-97\%)\end{tabular} & \begin{tabular}[c]{@{}c@{}}0.0520\\ (-93\%)\end{tabular}  & \begin{tabular}[c]{@{}c@{}}0.0221\\ (-95\%)\end{tabular} & \begin{tabular}[c]{@{}c@{}}0.0274\\ (-95\%)\end{tabular} & \begin{tabular}[c]{@{}c@{}}0.0155\\ (-96\%)\end{tabular} & \begin{tabular}[c]{@{}c@{}}0.0592\\ (-91\%)\end{tabular} & \begin{tabular}[c]{@{}c@{}}0.0256\\ (-94\%)\end{tabular} \\
                        & Median                   & \begin{tabular}[c]{@{}c@{}}0.0285\\ (-90\%)\end{tabular} & \begin{tabular}[c]{@{}c@{}}0.0160\\ (-92\%)\end{tabular}  & \begin{tabular}[c]{@{}c@{}}0.0549\\ (-89\%)\end{tabular} & \begin{tabular}[c]{@{}c@{}}0.0245\\ (-91\%)\end{tabular} & \begin{tabular}[c]{@{}c@{}}0.0434\\ (-84\%)\end{tabular} & \begin{tabular}[c]{@{}c@{}}0.0261\\ (-87\%)\end{tabular} & \begin{tabular}[c]{@{}c@{}}0.0874\\ (-83\%)\end{tabular} & \begin{tabular}[c]{@{}c@{}}0.0400\\ (-85\%)\end{tabular}   & \begin{tabular}[c]{@{}c@{}}0.0493\\ (-82\%)\end{tabular} & \begin{tabular}[c]{@{}c@{}}0.0293\\ (-85\%)\end{tabular} & \begin{tabular}[c]{@{}c@{}}0.0906\\ (-82\%)\end{tabular} & \begin{tabular}[c]{@{}c@{}}0.0426\\ (-84\%)\end{tabular} \\
                        & Norm                     & \begin{tabular}[c]{@{}c@{}}0.0171\\ (-97\%)\end{tabular} & \begin{tabular}[c]{@{}c@{}}0.0098\\ (-97\%)\end{tabular} & \begin{tabular}[c]{@{}c@{}}0.0442\\ (-94\%)\end{tabular} & \begin{tabular}[c]{@{}c@{}}0.0184\\ (-95\%)\end{tabular} & \begin{tabular}[c]{@{}c@{}}0.0226\\ (-96\%)\end{tabular} & \begin{tabular}[c]{@{}c@{}}0.0121\\ (-96\%)\end{tabular} & \begin{tabular}[c]{@{}c@{}}0.0560\\ (-92\%)\end{tabular}  & \begin{tabular}[c]{@{}c@{}}0.0227\\ (-94\%)\end{tabular} & \begin{tabular}[c]{@{}c@{}}0.0250\\ (-95\%)\end{tabular}  & \begin{tabular}[c]{@{}c@{}}0.0134\\ (-96\%)\end{tabular} & \begin{tabular}[c]{@{}c@{}}0.0634\\ (-91\%)\end{tabular} & \begin{tabular}[c]{@{}c@{}}0.0256\\ (-94\%)\end{tabular} \\
                        & TrimM                    & \begin{tabular}[c]{@{}c@{}}0.0296\\ (-95\%)\end{tabular} & \begin{tabular}[c]{@{}c@{}}0.0166\\ (-96\%)\end{tabular} & \begin{tabular}[c]{@{}c@{}}0.0552\\ (-92\%)\end{tabular} & \begin{tabular}[c]{@{}c@{}}0.0248\\ (-94\%)\end{tabular} & \begin{tabular}[c]{@{}c@{}}0.0440\\ (-92\%)\end{tabular}  & \begin{tabular}[c]{@{}c@{}}0.0263\\ (-93\%)\end{tabular} & \begin{tabular}[c]{@{}c@{}}0.0879\\ (-87\%)\end{tabular} & \begin{tabular}[c]{@{}c@{}}0.0402\\ (-90\%)\end{tabular} & \begin{tabular}[c]{@{}c@{}}0.0480\\ (-92\%)\end{tabular}  & \begin{tabular}[c]{@{}c@{}}0.0285\\ (-93\%)\end{tabular} & \begin{tabular}[c]{@{}c@{}}0.0903\\ (-87\%)\end{tabular} & \begin{tabular}[c]{@{}c@{}}0.0421\\ (-90\%)\end{tabular} \\
                        & Krum                     & \begin{tabular}[c]{@{}c@{}}0.0288\\ (-89\%)\end{tabular} & \begin{tabular}[c]{@{}c@{}}0.0168\\ (-90\%)\end{tabular} & \begin{tabular}[c]{@{}c@{}}0.0554\\ (-88\%)\end{tabular} & \begin{tabular}[c]{@{}c@{}}0.0253\\ (-90\%)\end{tabular} & \begin{tabular}[c]{@{}c@{}}0.0434\\ (-83\%)\end{tabular} & \begin{tabular}[c]{@{}c@{}}0.0265\\ (-85\%)\end{tabular} & \begin{tabular}[c]{@{}c@{}}0.0866\\ (-81\%)\end{tabular} & \begin{tabular}[c]{@{}c@{}}0.0402\\ (-84\%)\end{tabular} & \begin{tabular}[c]{@{}c@{}}0.0472\\ (-81\%)\end{tabular} & \begin{tabular}[c]{@{}c@{}}0.0281\\ (-84\%)\end{tabular} & \begin{tabular}[c]{@{}c@{}}0.0895\\ (-81\%)\end{tabular} & \begin{tabular}[c]{@{}c@{}}0.0416\\ (-83\%)\end{tabular} \\ \hline
\end{tabular}}
\label{tab:main_type2_D}
\vskip -0.05in
\end{table*}

\begin{table*}[t]
\caption{
Recommendation performance under Spattack-L-S. We also report the performance drop rate w.r.t. clean model. 
}
\vskip -0.125in
\resizebox{\linewidth}{!}{
\begin{tabular}{c|c|cccc|cccc|cccc}
\hline
\multirow{2}{*}{Dataset}   & \multirow{2}{*}{Defense} & \multicolumn{4}{c|}{5\%}                                                                                                                                                                                                                  & \multicolumn{4}{c|}{10\%}                                                                                                                                                                                                                 & \multicolumn{4}{c}{15\%}                                                                                                                                                                                                                  \\ \cline{3-14} 
                        &                          & HR@5                                                      & nDCG@5                                                    & HR@10                                                     & nDCG@10                                                   & HR@5                                                      & nDCG@5                                                    & HR@10                                                     & nDCG@10                                                   & HR@5                                                      & nDCG@5                                                    & HR@10                                                     & nDCG@10                                                   \\ \hline
\multirow{5}{*}{ML100K} & Mean                     & \begin{tabular}[c]{@{}c@{}}0.0742\\ (-70\%)\end{tabular} & \begin{tabular}[c]{@{}c@{}}0.0425\\ (-74\%)\end{tabular} & \begin{tabular}[c]{@{}c@{}}0.1559\\ (-62\%)\end{tabular} & \begin{tabular}[c]{@{}c@{}}0.0683\\ (-68\%)\end{tabular} & \begin{tabular}[c]{@{}c@{}}0.0456\\ (-82\%)\end{tabular} & \begin{tabular}[c]{@{}c@{}}0.0249\\ (-85\%)\end{tabular} & \begin{tabular}[c]{@{}c@{}}0.1124\\ (-72\%)\end{tabular} & \begin{tabular}[c]{@{}c@{}}0.0462\\ (-78\%)\end{tabular} & \begin{tabular}[c]{@{}c@{}}0.0477\\ (-81\%)\end{tabular} & \begin{tabular}[c]{@{}c@{}}0.0264\\ (-84\%)\end{tabular} & \begin{tabular}[c]{@{}c@{}}0.0997\\ (-75\%)\end{tabular} & \begin{tabular}[c]{@{}c@{}}0.0428\\ (-80\%)\end{tabular} \\
                        & Median                   & \begin{tabular}[c]{@{}c@{}}0.2238\\ (-12\%)\end{tabular} & \begin{tabular}[c]{@{}c@{}}0.1433\\ (-17\%)\end{tabular} & \begin{tabular}[c]{@{}c@{}}0.3733\\ (-3\%)\end{tabular}  & \begin{tabular}[c]{@{}c@{}}0.1913\\ (-11\%)\end{tabular} & \begin{tabular}[c]{@{}c@{}}0.0308\\ (-88\%)\end{tabular} & \begin{tabular}[c]{@{}c@{}}0.0179\\ (-90\%)\end{tabular} & \begin{tabular}[c]{@{}c@{}}0.0764\\ (-80\%)\end{tabular} & \begin{tabular}[c]{@{}c@{}}0.0321\\ (-85\%)\end{tabular} & \begin{tabular}[c]{@{}c@{}}0.0339\\ (-87\%)\end{tabular} & \begin{tabular}[c]{@{}c@{}}0.0206\\ (-88\%)\end{tabular} & \begin{tabular}[c]{@{}c@{}}0.0742\\ (-81\%)\end{tabular} & \begin{tabular}[c]{@{}c@{}}0.0336\\ (-84\%)\end{tabular} \\
                        & Norm                     & \begin{tabular}[c]{@{}c@{}}0.2397\\ (+1\%)\end{tabular}  & \begin{tabular}[c]{@{}c@{}}0.1525\\ (-4\%)\end{tabular}  & \begin{tabular}[c]{@{}c@{}}0.3690\\ (-5\%)\end{tabular}   & \begin{tabular}[c]{@{}c@{}}0.1935\\ (-7\%)\end{tabular}  & \begin{tabular}[c]{@{}c@{}}0.2068\\ (-13\%)\end{tabular} & \begin{tabular}[c]{@{}c@{}}0.1243\\ (-22\%)\end{tabular} & \begin{tabular}[c]{@{}c@{}}0.3118\\ (-20\%)\end{tabular} & \begin{tabular}[c]{@{}c@{}}0.1579\\ (-24\%)\end{tabular} & \begin{tabular}[c]{@{}c@{}}0.1315\\ (-45\%)\end{tabular} & \begin{tabular}[c]{@{}c@{}}0.0804\\ (-50\%)\end{tabular} & \begin{tabular}[c]{@{}c@{}}0.2503\\ (-36\%)\end{tabular} & \begin{tabular}[c]{@{}c@{}}0.1187\\ (-43\%)\end{tabular} \\ \hline
\multirow{5}{*}{ML1M}   & Mean                     & \begin{tabular}[c]{@{}c@{}}0.0359\\ (-88\%)\end{tabular} & \begin{tabular}[c]{@{}c@{}}0.0207\\ (-90\%)\end{tabular} & \begin{tabular}[c]{@{}c@{}}0.0917\\ (-80\%)\end{tabular} & \begin{tabular}[c]{@{}c@{}}0.0385\\ (-85\%)\end{tabular} & \begin{tabular}[c]{@{}c@{}}0.0268\\ (-91\%)\end{tabular} & \begin{tabular}[c]{@{}c@{}}0.0156\\ (-92\%)\end{tabular} & \begin{tabular}[c]{@{}c@{}}0.0641\\ (-86\%)\end{tabular} & \begin{tabular}[c]{@{}c@{}}0.0274\\ (-89\%)\end{tabular} & \begin{tabular}[c]{@{}c@{}}0.0288\\ (-91\%)\end{tabular} & \begin{tabular}[c]{@{}c@{}}0.0166\\ (-92\%)\end{tabular} & \begin{tabular}[c]{@{}c@{}}0.0679\\ (-85\%)\end{tabular} & \begin{tabular}[c]{@{}c@{}}0.0290\\ (-89\%)\end{tabular}  \\
                        & Median                   & \begin{tabular}[c]{@{}c@{}}0.0825\\ (-74\%)\end{tabular} & \begin{tabular}[c]{@{}c@{}}0.039\\ (-81\%)\end{tabular}  & \begin{tabular}[c]{@{}c@{}}0.2237\\ (-52\%)\end{tabular} & \begin{tabular}[c]{@{}c@{}}0.0840\\ (-67\%)\end{tabular}  & \begin{tabular}[c]{@{}c@{}}0.0285\\ (-91\%)\end{tabular} & \begin{tabular}[c]{@{}c@{}}0.0169\\ (-92\%)\end{tabular} & \begin{tabular}[c]{@{}c@{}}0.0536\\ (-88\%)\end{tabular} & \begin{tabular}[c]{@{}c@{}}0.0250\\ (-90\%)\end{tabular}  & \begin{tabular}[c]{@{}c@{}}0.0387\\ (-88\%)\end{tabular} & \begin{tabular}[c]{@{}c@{}}0.0229\\ (-89\%)\end{tabular} & \begin{tabular}[c]{@{}c@{}}0.0728\\ (-84\%)\end{tabular} & \begin{tabular}[c]{@{}c@{}}0.0338\\ (-87\%)\end{tabular} \\
                        & Norm                     & \begin{tabular}[c]{@{}c@{}}0.2815\\ (-8\%)\end{tabular}  & \begin{tabular}[c]{@{}c@{}}0.1822\\ (-10\%)\end{tabular} & \begin{tabular}[c]{@{}c@{}}0.4368\\ (-4\%)\end{tabular}  & \begin{tabular}[c]{@{}c@{}}0.2321\\ (-7\%)\end{tabular}  & \begin{tabular}[c]{@{}c@{}}0.2063\\ (-33\%)\end{tabular} & \begin{tabular}[c]{@{}c@{}}0.1246\\ (-38\%)\end{tabular} & \begin{tabular}[c]{@{}c@{}}0.3464\\ (-24\%)\end{tabular} & \begin{tabular}[c]{@{}c@{}}0.1696\\ (-32\%)\end{tabular} & \begin{tabular}[c]{@{}c@{}}0.1281\\ (-58\%)\end{tabular} & \begin{tabular}[c]{@{}c@{}}0.0746\\ (-63\%)\end{tabular} & \begin{tabular}[c]{@{}c@{}}0.2445\\ (-46\%)\end{tabular} & \begin{tabular}[c]{@{}c@{}}0.1118\\ (-55\%)\end{tabular} \\ \hline
\multirow{5}{*}{Steam}  & Mean                     & \begin{tabular}[c]{@{}c@{}}0.0453\\ (-92\%)\end{tabular} & \begin{tabular}[c]{@{}c@{}}0.0259\\ (-93\%)\end{tabular} & \begin{tabular}[c]{@{}c@{}}0.0901\\ (-87\%)\end{tabular} & \begin{tabular}[c]{@{}c@{}}0.0401\\ (-90\%)\end{tabular} & \begin{tabular}[c]{@{}c@{}}0.0440\\ (-92\%)\end{tabular}  & \begin{tabular}[c]{@{}c@{}}0.0258\\ (-93\%)\end{tabular} & \begin{tabular}[c]{@{}c@{}}0.0914\\ (-87\%)\end{tabular} & \begin{tabular}[c]{@{}c@{}}0.0408\\ (-90\%)\end{tabular} & \begin{tabular}[c]{@{}c@{}}0.0410\\ (-93\%)\end{tabular}  & \begin{tabular}[c]{@{}c@{}}0.0239\\ (-94\%)\end{tabular} & \begin{tabular}[c]{@{}c@{}}0.0890\\ (-87\%)\end{tabular}  & \begin{tabular}[c]{@{}c@{}}0.0393\\ (-91\%)\end{tabular} \\
                        & Median                   & \begin{tabular}[c]{@{}c@{}}0.0378\\ (-86\%)\end{tabular} & \begin{tabular}[c]{@{}c@{}}0.0221\\ (-89\%)\end{tabular} & \begin{tabular}[c]{@{}c@{}}0.0791\\ (-84\%)\end{tabular} & \begin{tabular}[c]{@{}c@{}}0.0351\\ (-87\%)\end{tabular} & \begin{tabular}[c]{@{}c@{}}0.0488\\ (-82\%)\end{tabular} & \begin{tabular}[c]{@{}c@{}}0.0285\\ (-85\%)\end{tabular} & \begin{tabular}[c]{@{}c@{}}0.0927\\ (-82\%)\end{tabular} & \begin{tabular}[c]{@{}c@{}}0.0426\\ (-84\%)\end{tabular} & \begin{tabular}[c]{@{}c@{}}0.0450\\ (-84\%)\end{tabular}  & \begin{tabular}[c]{@{}c@{}}0.0264\\ (-86\%)\end{tabular} & \begin{tabular}[c]{@{}c@{}}0.0938\\ (-81\%)\end{tabular} & \begin{tabular}[c]{@{}c@{}}0.0420\\ (-84\%)\end{tabular}  \\
                        & Norm                     & \begin{tabular}[c]{@{}c@{}}0.0679\\ (-87\%)\end{tabular} & \begin{tabular}[c]{@{}c@{}}0.0389\\ (-89\%)\end{tabular} & \begin{tabular}[c]{@{}c@{}}0.1628\\ (-76\%)\end{tabular} & \begin{tabular}[c]{@{}c@{}}0.0692\\ (-82\%)\end{tabular} & \begin{tabular}[c]{@{}c@{}}0.0554\\ (-90\%)\end{tabular} & \begin{tabular}[c]{@{}c@{}}0.0322\\ (-91\%)\end{tabular} & \begin{tabular}[c]{@{}c@{}}0.1138\\ (-84\%)\end{tabular} & \begin{tabular}[c]{@{}c@{}}0.0508\\ (-87\%)\end{tabular} & \begin{tabular}[c]{@{}c@{}}0.0469\\ (-91\%)\end{tabular} & \begin{tabular}[c]{@{}c@{}}0.0265\\ (-92\%)\end{tabular} & \begin{tabular}[c]{@{}c@{}}0.1031\\ (-85\%)\end{tabular} & \begin{tabular}[c]{@{}c@{}}0.0444\\ (-89\%)\end{tabular} \\ \hline
\end{tabular}}
\label{tab:main_type2_S}
\vskip -0.2in
\end{table*}
\subsection{Performance Evaluation of Spattack-O-D}
We report the detailed results of Spattack-O-D in Tab.~\ref{tab:main_type1_D}, where attackers know total benign gradients and with no limitation of maximum number of updating items. We first find that with the increasing malicious ratio $\rho$, the defense performance under Spattack-O-D consistently decreases and even reaches an untrained state, indicating existing defenses are more fragile in FR than expected. When $\rho = 3\%$, the averaged performance drop rate of defenses is about 71\%, and the degradation will further increase to 82\% when $\rho=5\%$. The reason is that these tailed items in FR have lower defense breaking points and can be easily broken. 
 
\subsection{Performance Evaluation of Spattack-O-S}
The detailed results of Spattack-O-S are reported in Tab.~\ref{tab:main_type1_S}, where each malicious client only uploads malicious gradients for partial items to avoid triggering the anomaly detection based on the user's degree. We restrict the $\tilde{m}_{max}$ as the maximum number of uploading items in benign clients. As seen, Spattack-O-S can still significantly degrade recommendation performance under 1\% malicious clients and prevent the model convergence for a 5\% ratio. For example, Spattack-O-S achieves 35\% (ML100K), 31\% (ML1M), and 80\% (Steam) drop rate at least when the malicious client ratio $\rho$ is 5\%. Overall, Spattack-O-S can bypass anomaly detection based on the user's degree and still achieve successful attacks with few malicious clients.
\begin{figure*}[t]
\centering
\minipage{0.23\textwidth}
\centering
\includegraphics[width=\textwidth]{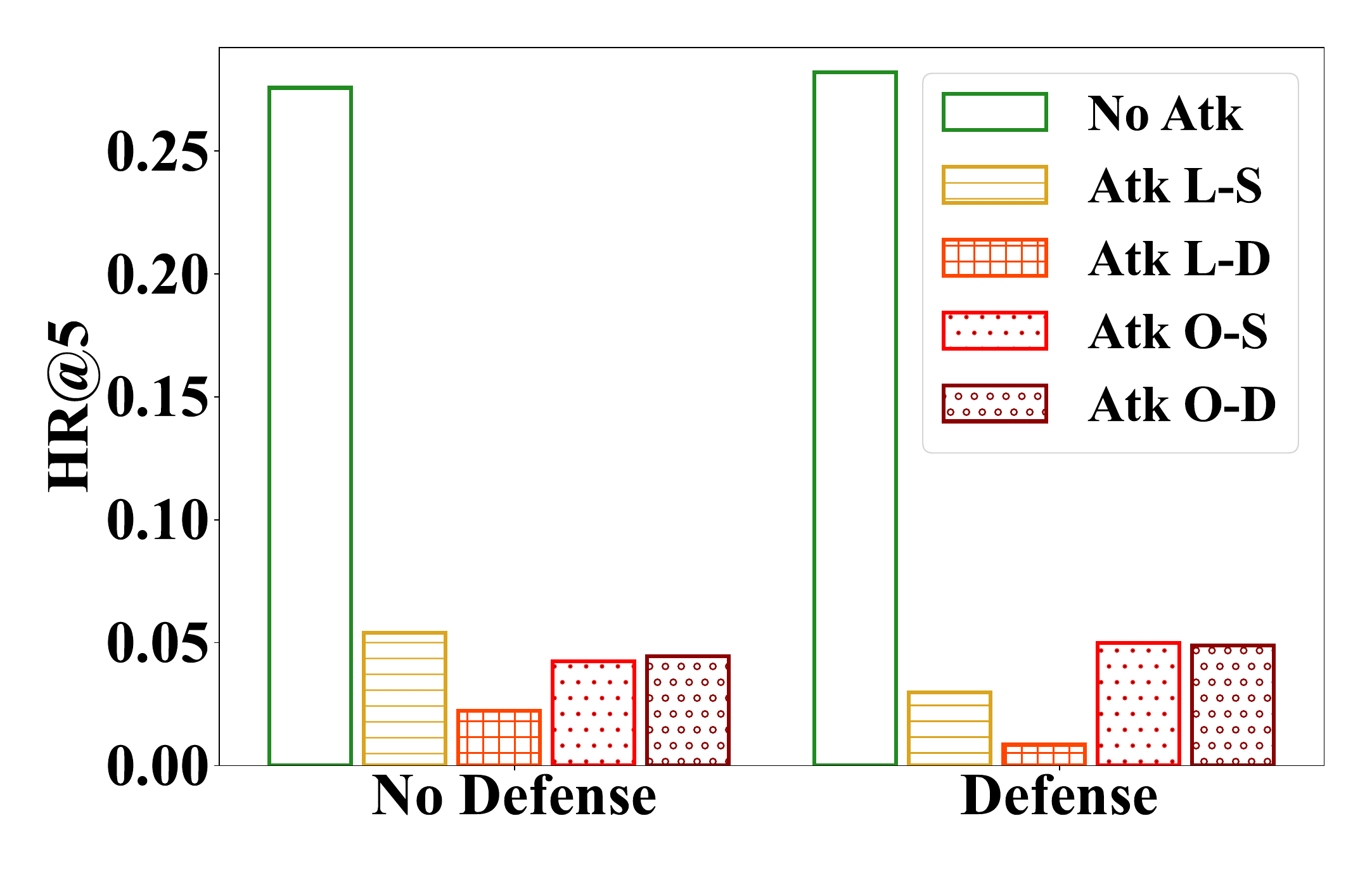}
\vspace{-22pt}
\caption*{\small (a) ML100K HR@5}
\endminipage\hfill
\minipage{0.23\textwidth}
\centering
\includegraphics[width=\textwidth]{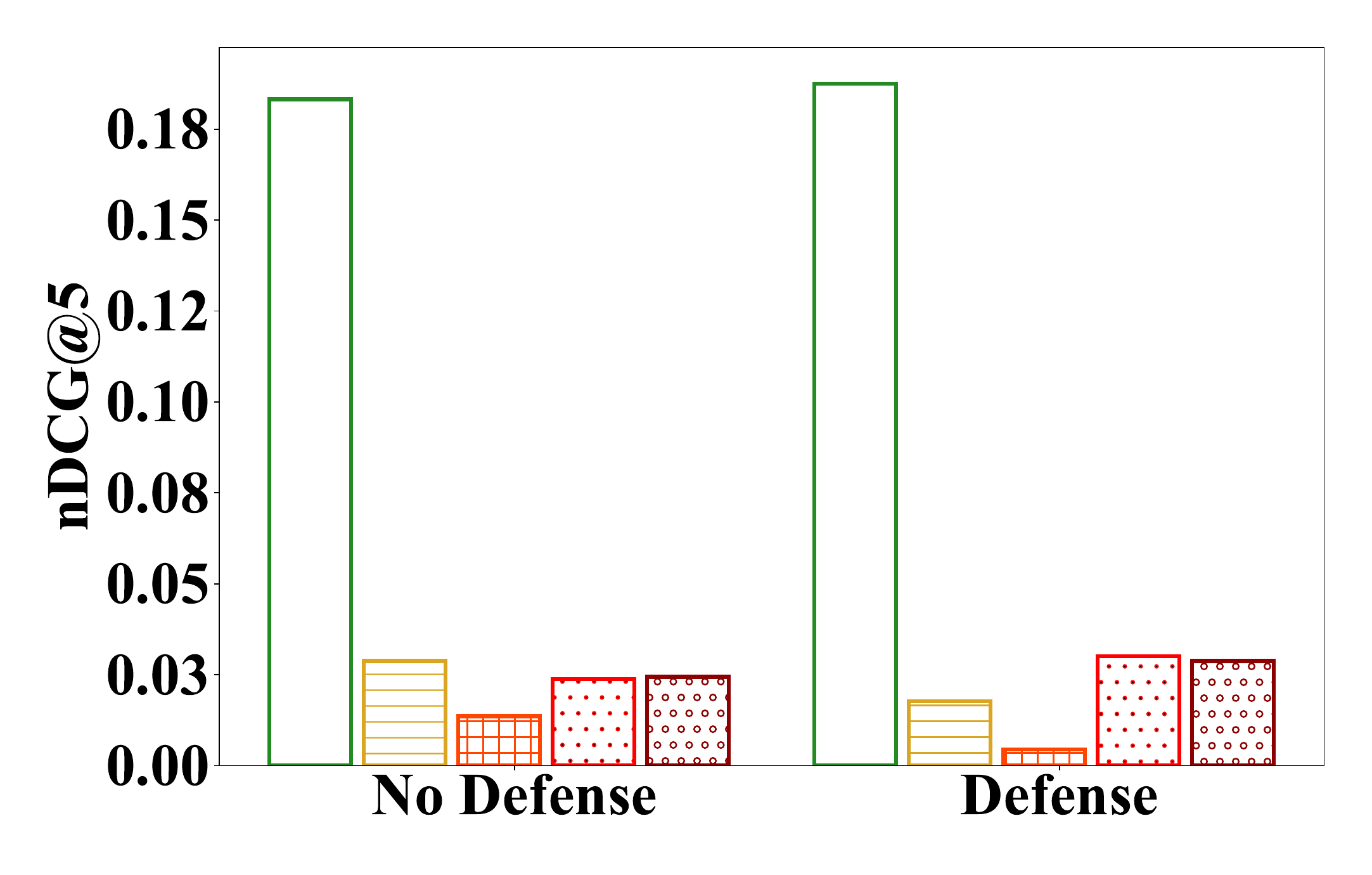}
\vspace{-22pt}
\caption*{\small (b) ML100K nDCG@5}
\endminipage\hfill
\minipage{0.23\textwidth}
\centering
\includegraphics[width=\textwidth]{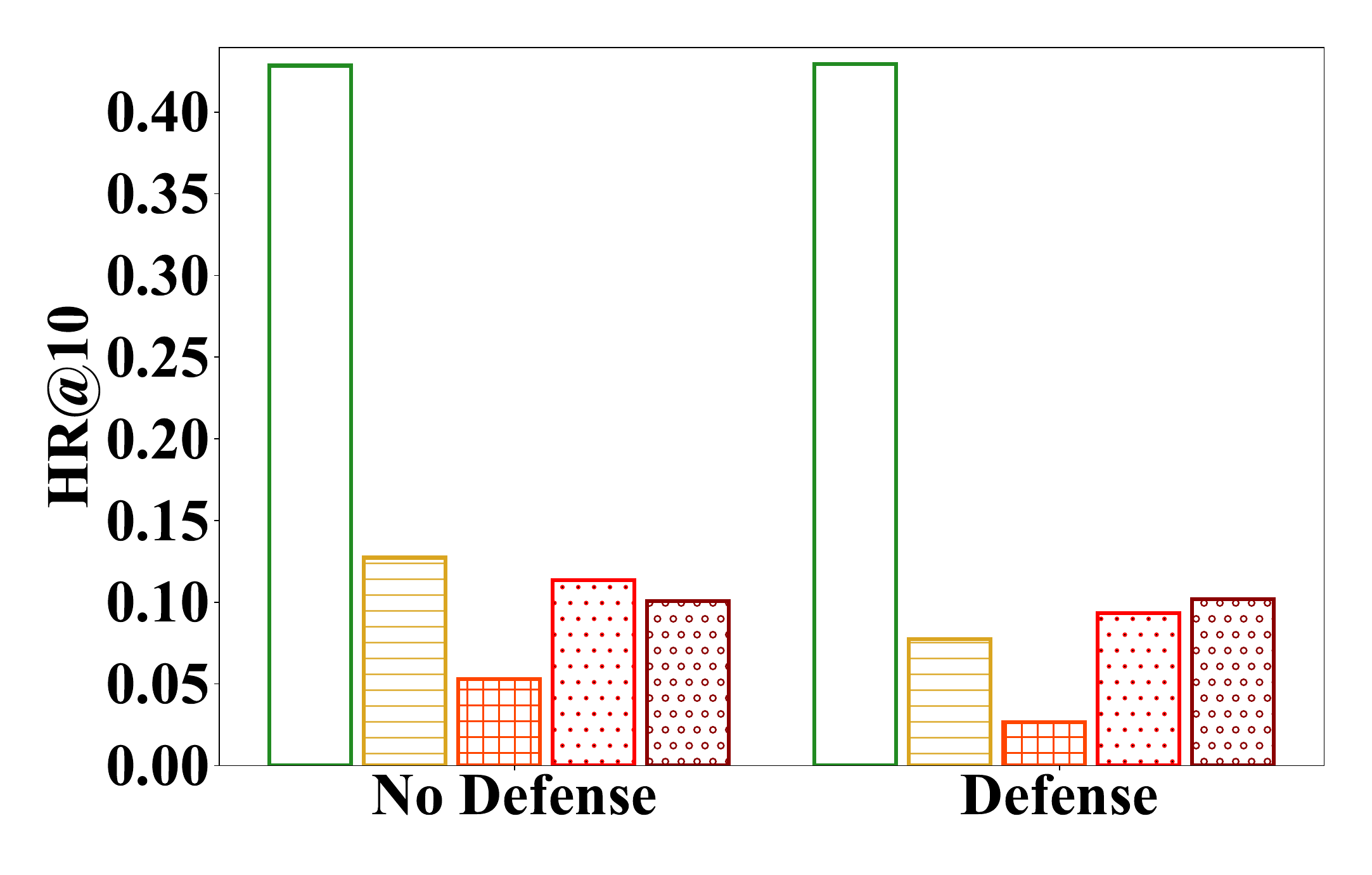}
\vspace{-22pt}
\caption*{\small (c) ML100K HR@10}
\endminipage\hfill
\minipage{0.23\textwidth}
\centering
\includegraphics[width=\textwidth]{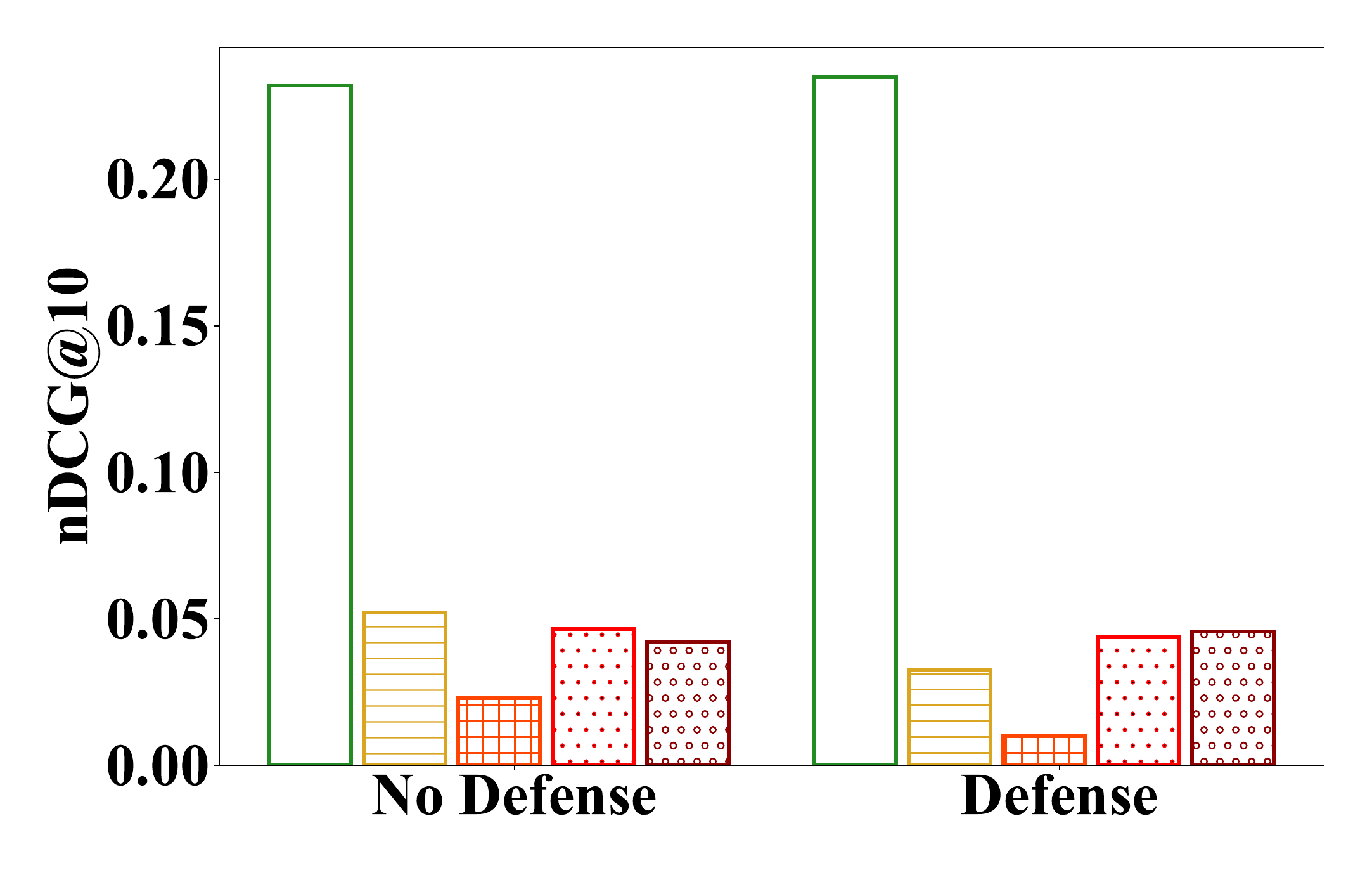}
\vspace{-22pt}
\caption*{\small (d) ML100K nDCG@10}
\endminipage\hfill

\minipage{0.23\textwidth}
\centering
\includegraphics[width=\textwidth]{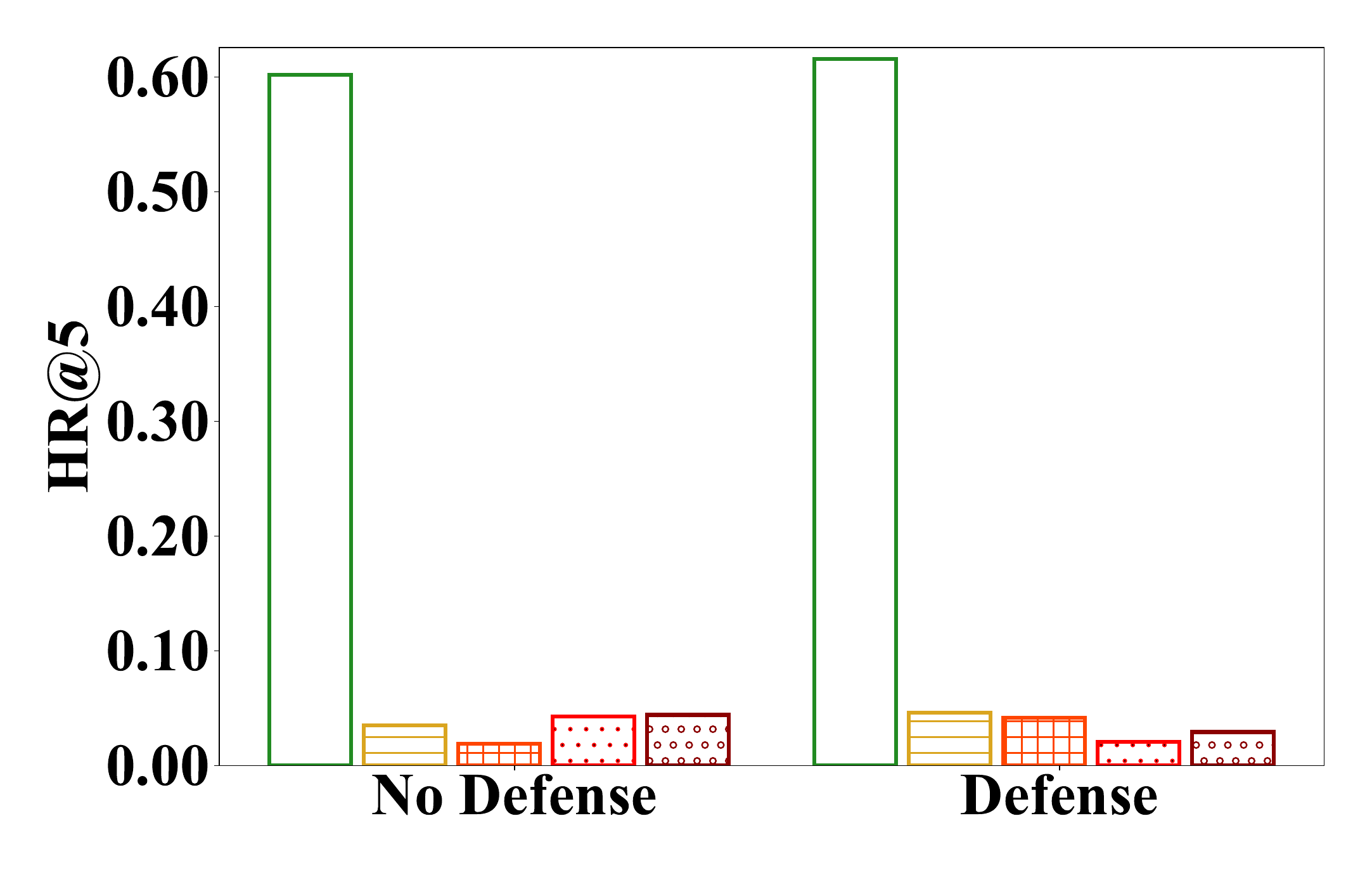}
\vspace{-22pt}
\caption*{\small (e) Steam HR@5}
\endminipage\hfill
\minipage{0.23\textwidth}
\centering
\includegraphics[width=\textwidth]{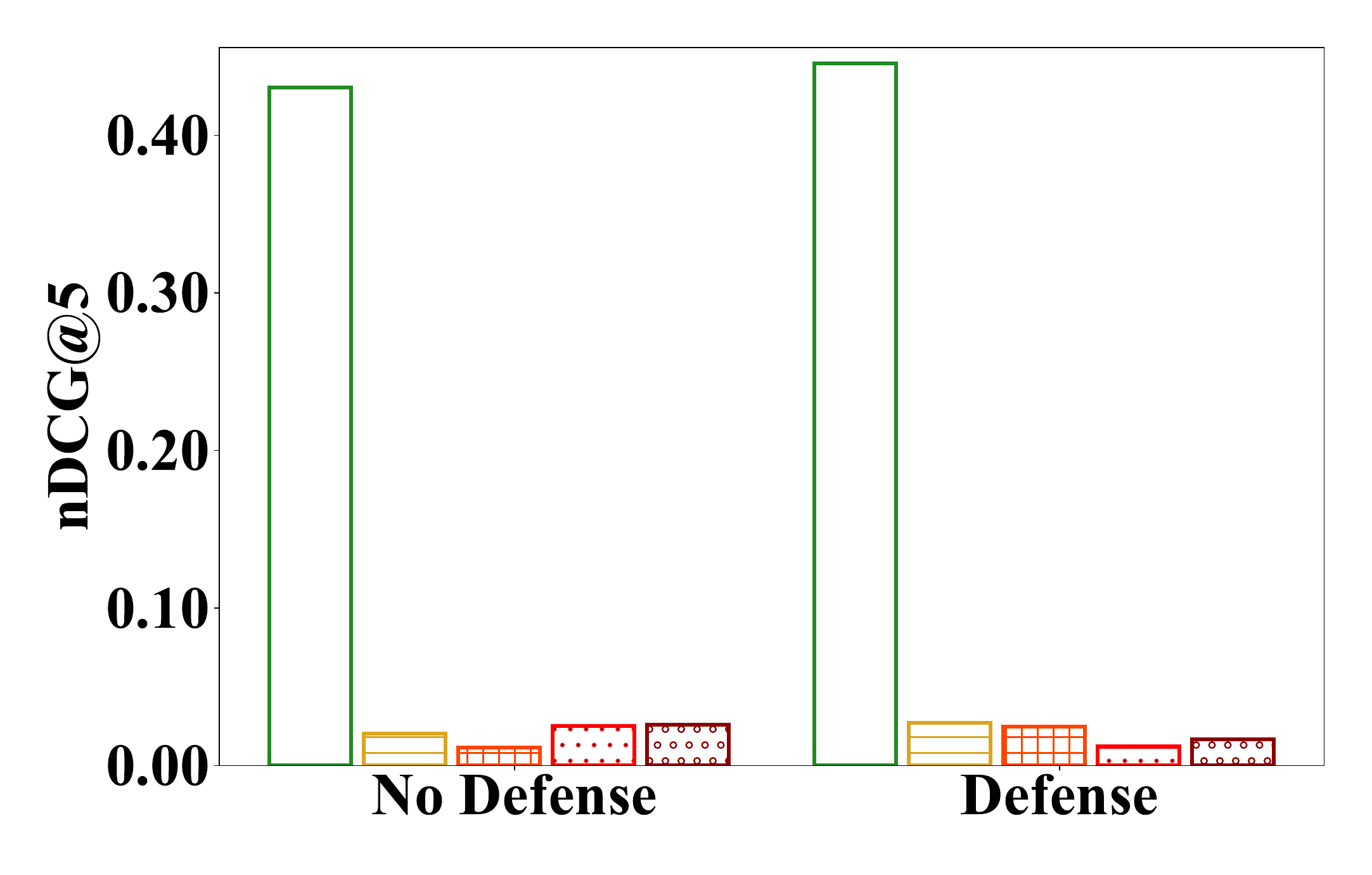}
\vspace{-22pt}
\caption*{\small (f) Steam nDCG@5}
\endminipage\hfill
\minipage{0.23\textwidth}
\centering
\includegraphics[width=\textwidth]{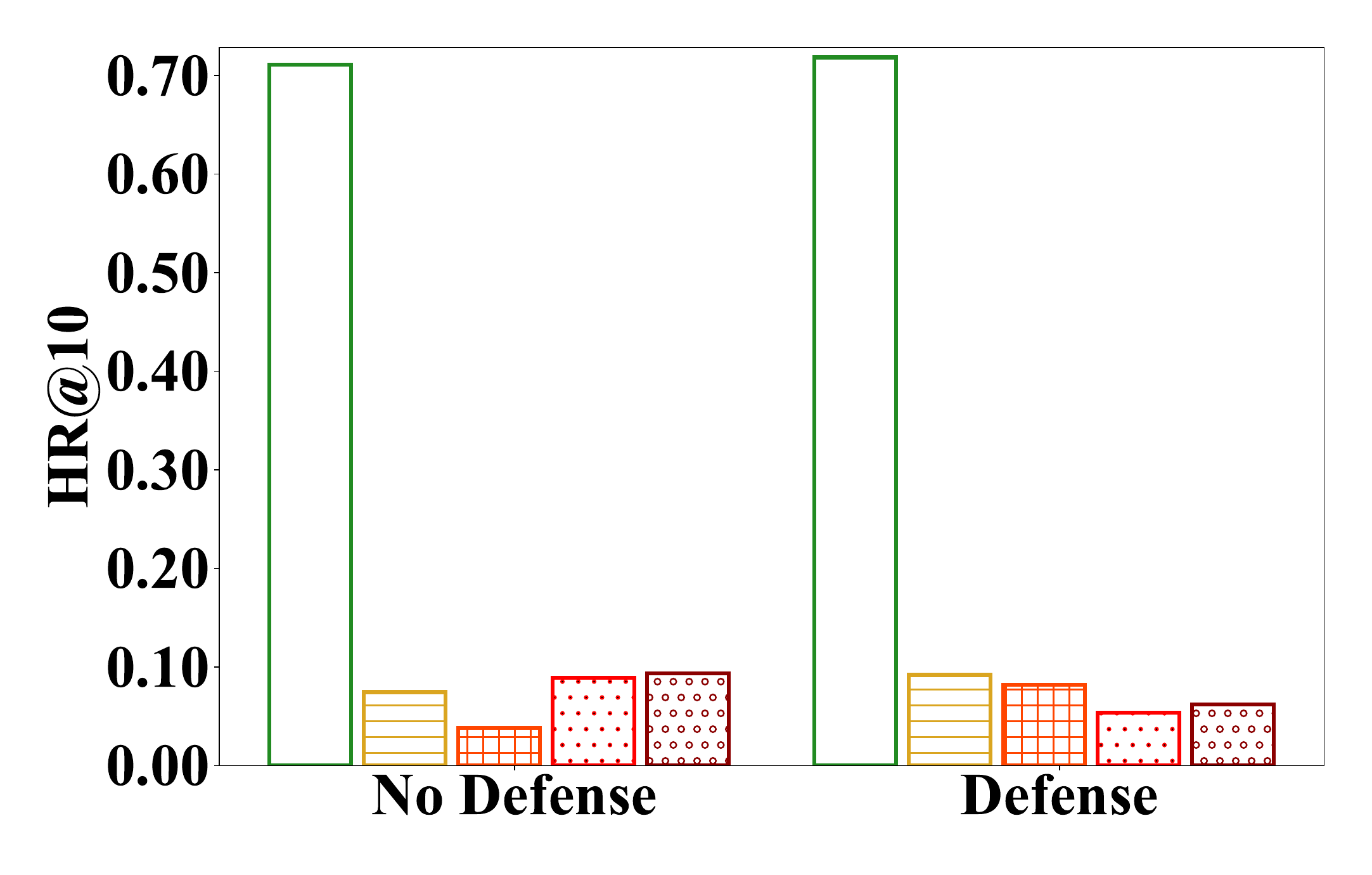}
\vspace{-22pt}
\caption*{\small (g) Steam HR@10}
\endminipage\hfill
\minipage{0.23\textwidth}
\centering
\includegraphics[width=\textwidth]{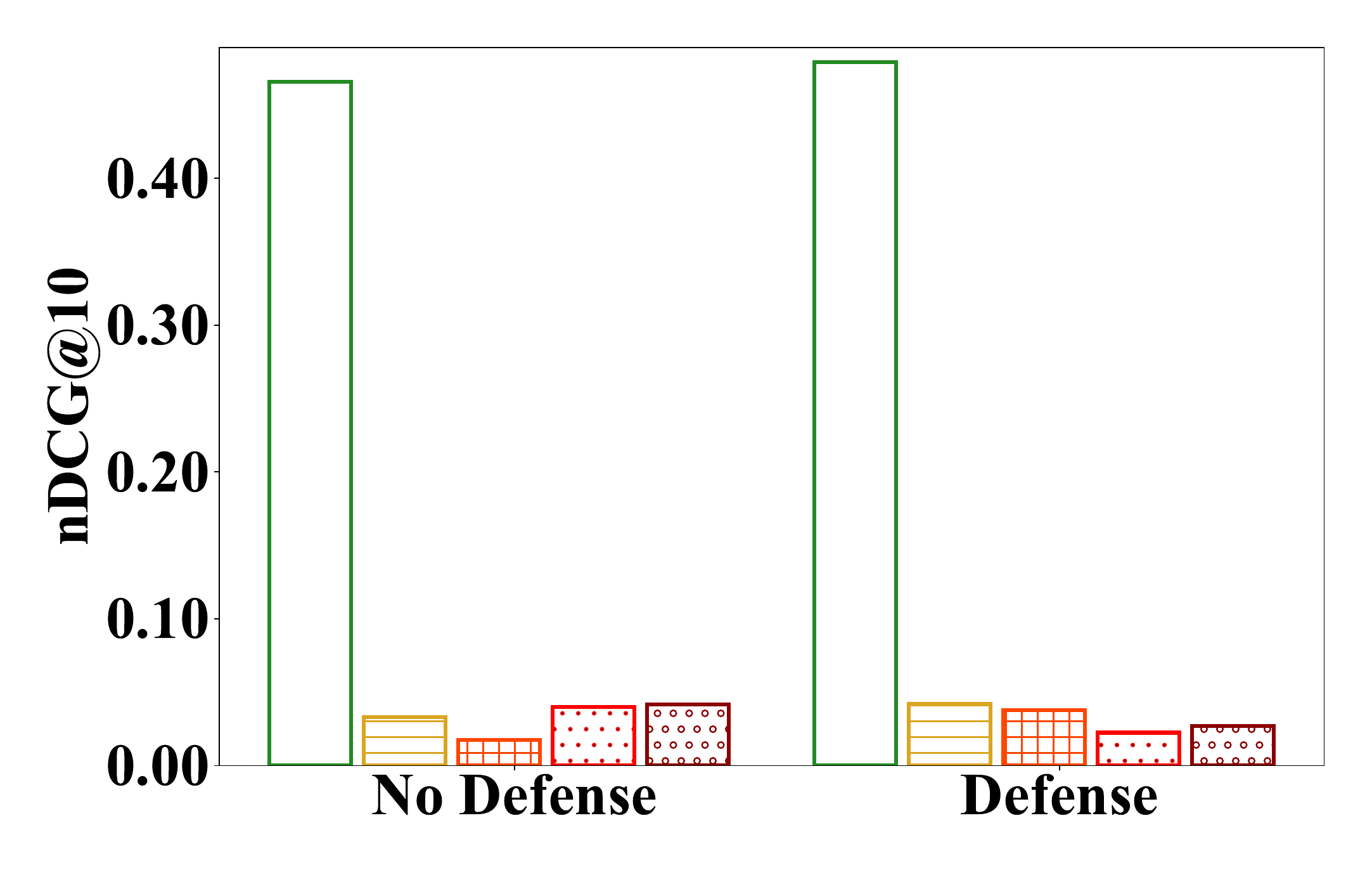}
\vspace{-22pt}
\caption*{\small (h) Steam nDCG@10}
\endminipage\hfill

\minipage{0.23\textwidth}
\centering
\includegraphics[width=\textwidth]{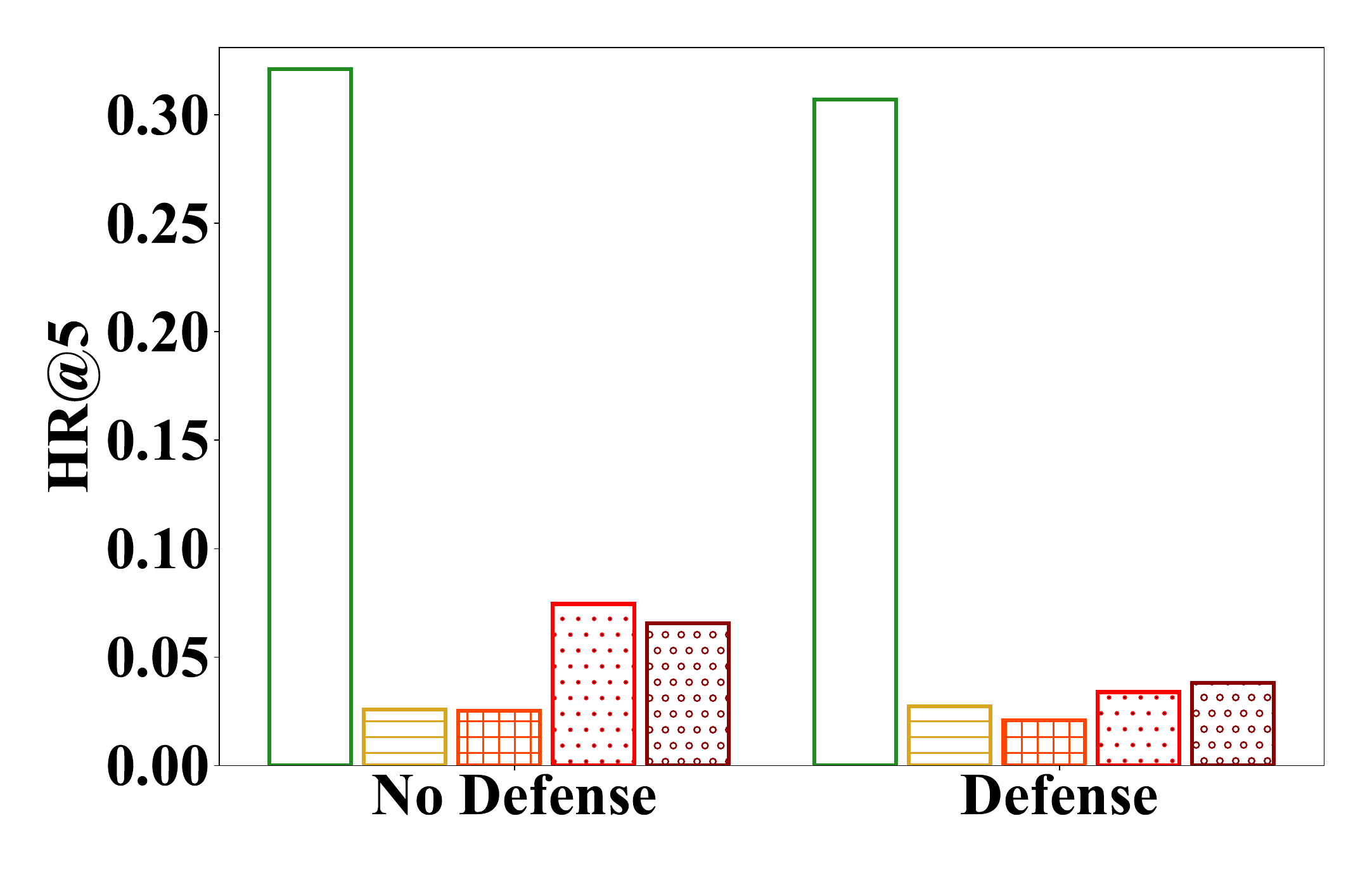}
\vspace{-22pt}
\caption*{\small (i) ML1M HR@5}
\endminipage\hfill
\minipage{0.23\textwidth}
\centering
\includegraphics[width=\textwidth]{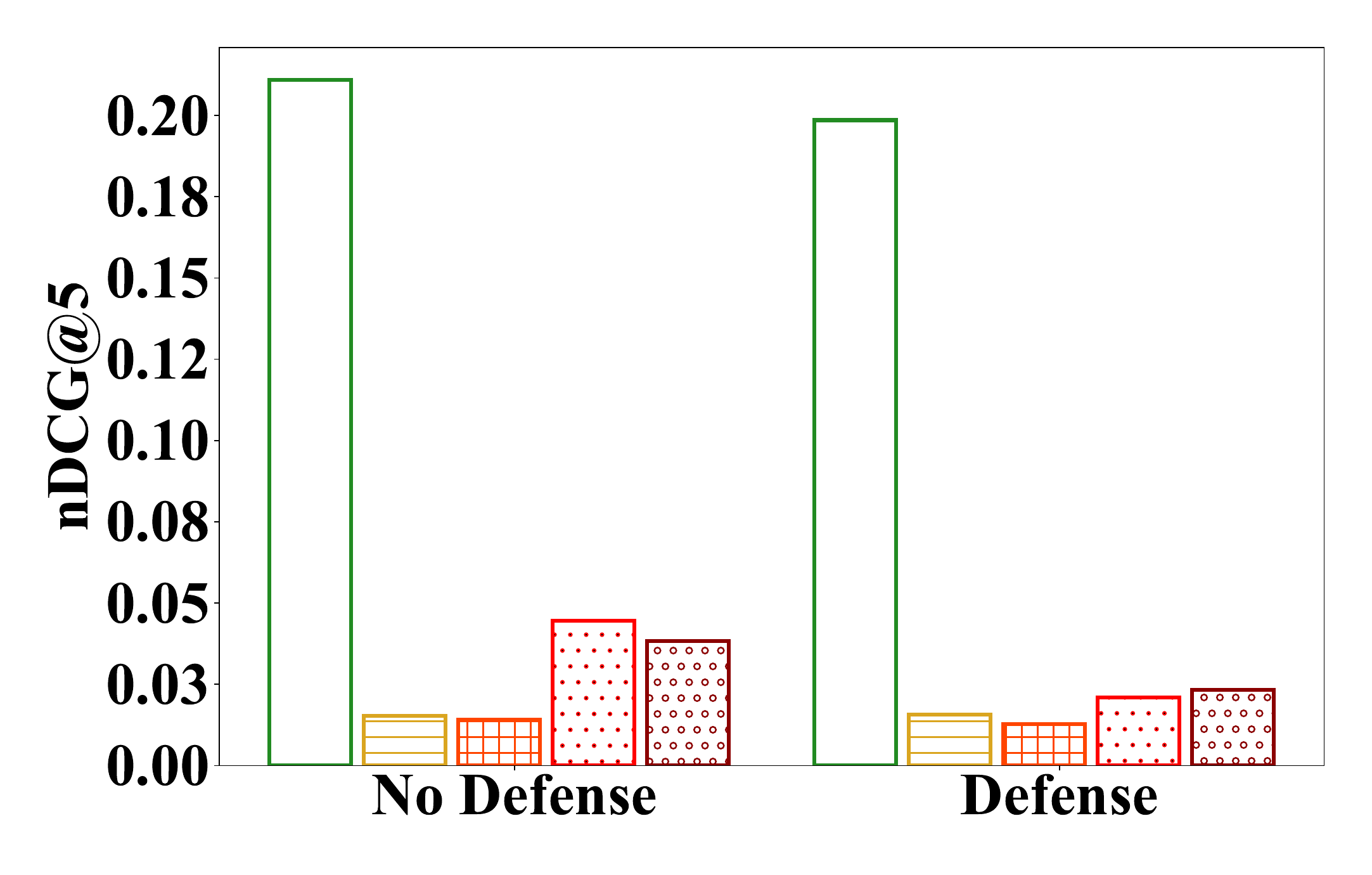}
\vspace{-22pt}
\caption*{\small (j) ML1M nDCG@5}
\endminipage\hfill
\minipage{0.23\textwidth}
\centering
\includegraphics[width=\textwidth]{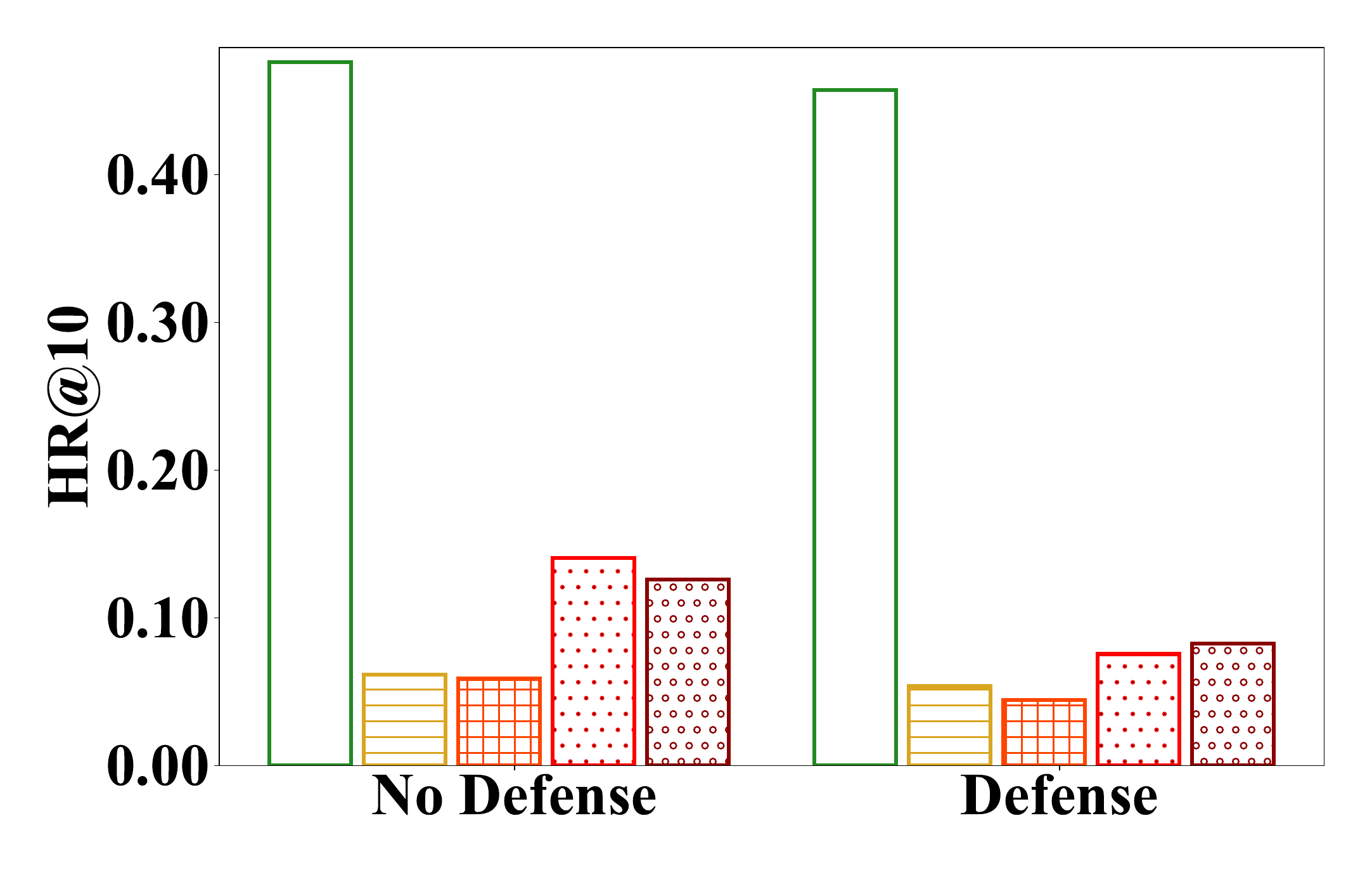}
\vspace{-22pt}
\caption*{\small (k) ML1M HR@10}
\endminipage\hfill
\minipage{0.23\textwidth}
\centering
\includegraphics[width=\textwidth]{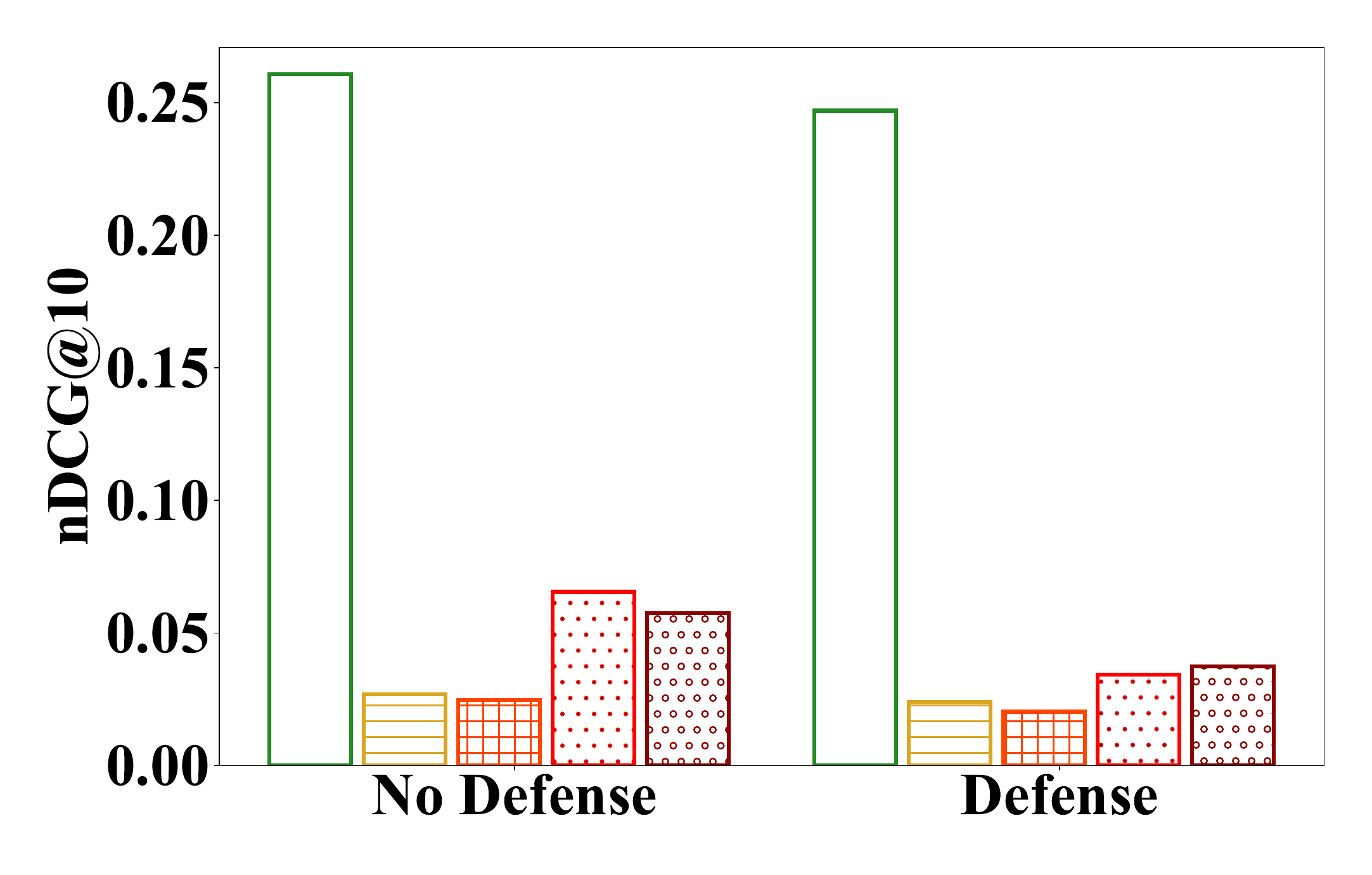}
\vspace{-22pt}
\caption*{\small (l) ML1M nDCG@10}
\endminipage\hfill

\vskip -0.1in
\caption{Experimental Results on FedGNN Model.}
\vskip -0.1in
\label{fig:full_fedgnn}
\end{figure*}

\begin{figure*}[t]
\minipage{0.24\textwidth}
\centering
 \subfigure[Spattack-O-D (Defense)]{\includegraphics[ width=\textwidth]{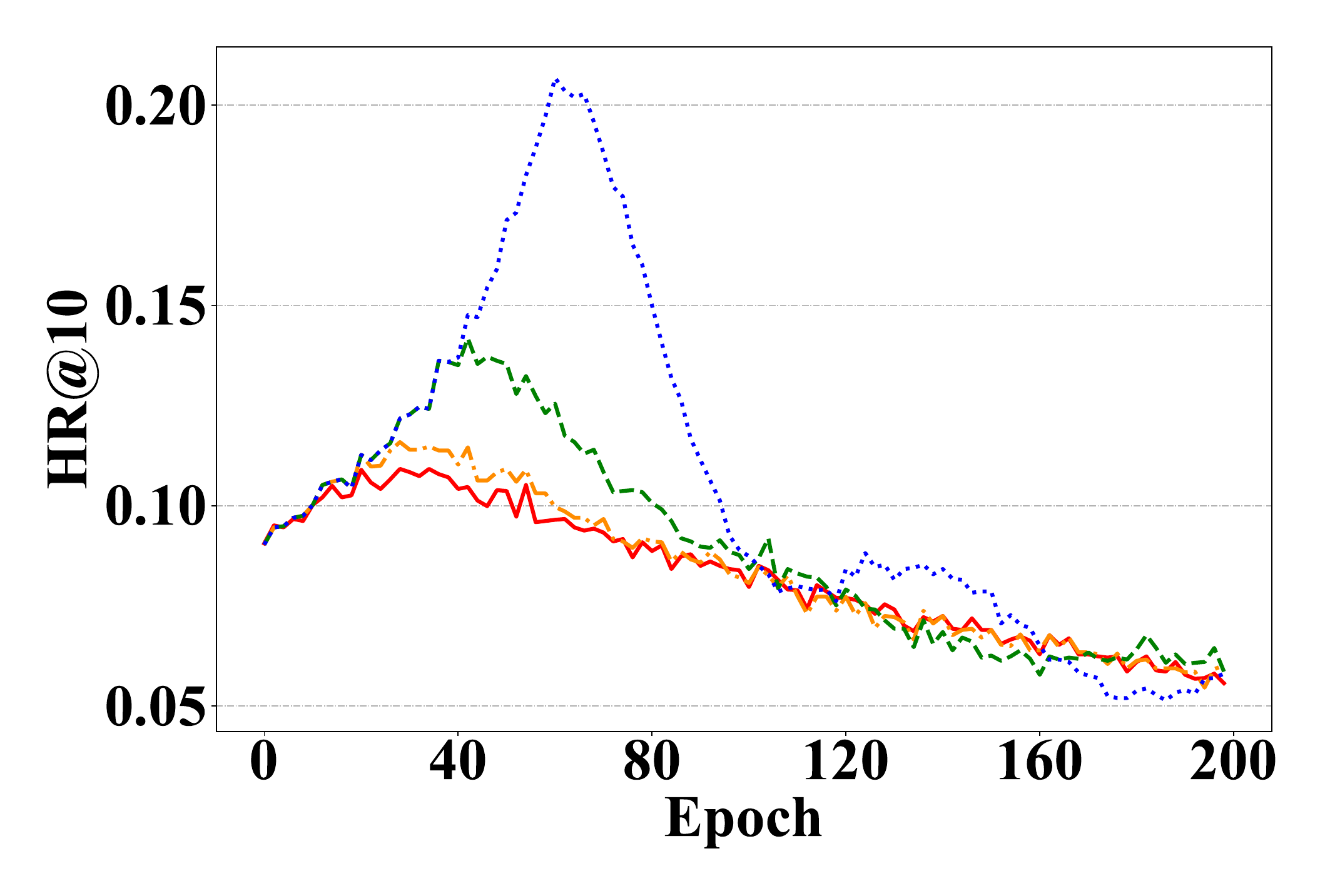}}
\endminipage\hfill
\minipage{0.24\textwidth}
\centering
 \subfigure[Spattack-O-S (Defense)]{\includegraphics[ width=\textwidth]{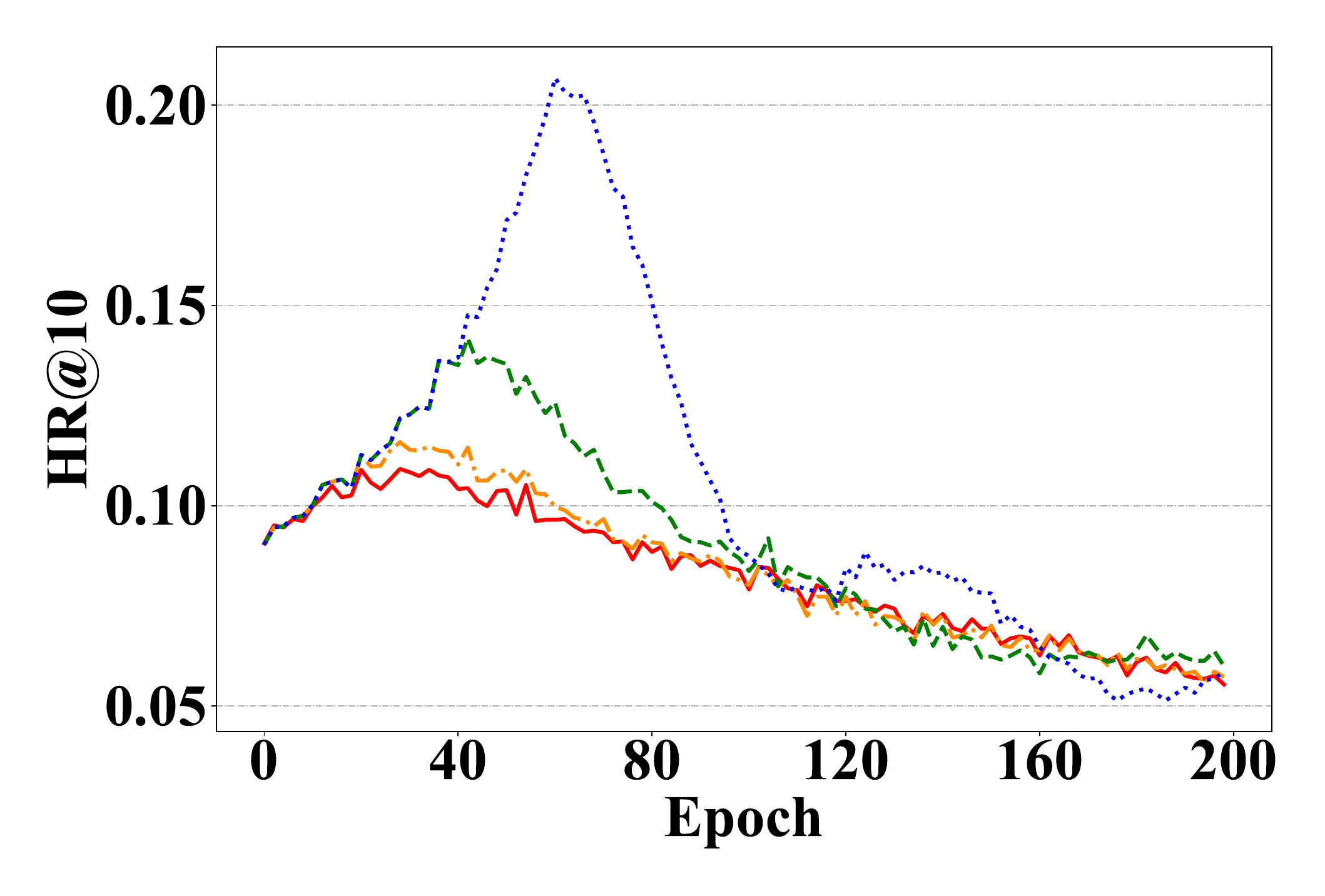}}
\endminipage\hfill
\minipage{0.24\textwidth}
\centering
 \subfigure[Spattack-L-D (Defense)]{\includegraphics[ width=\textwidth]{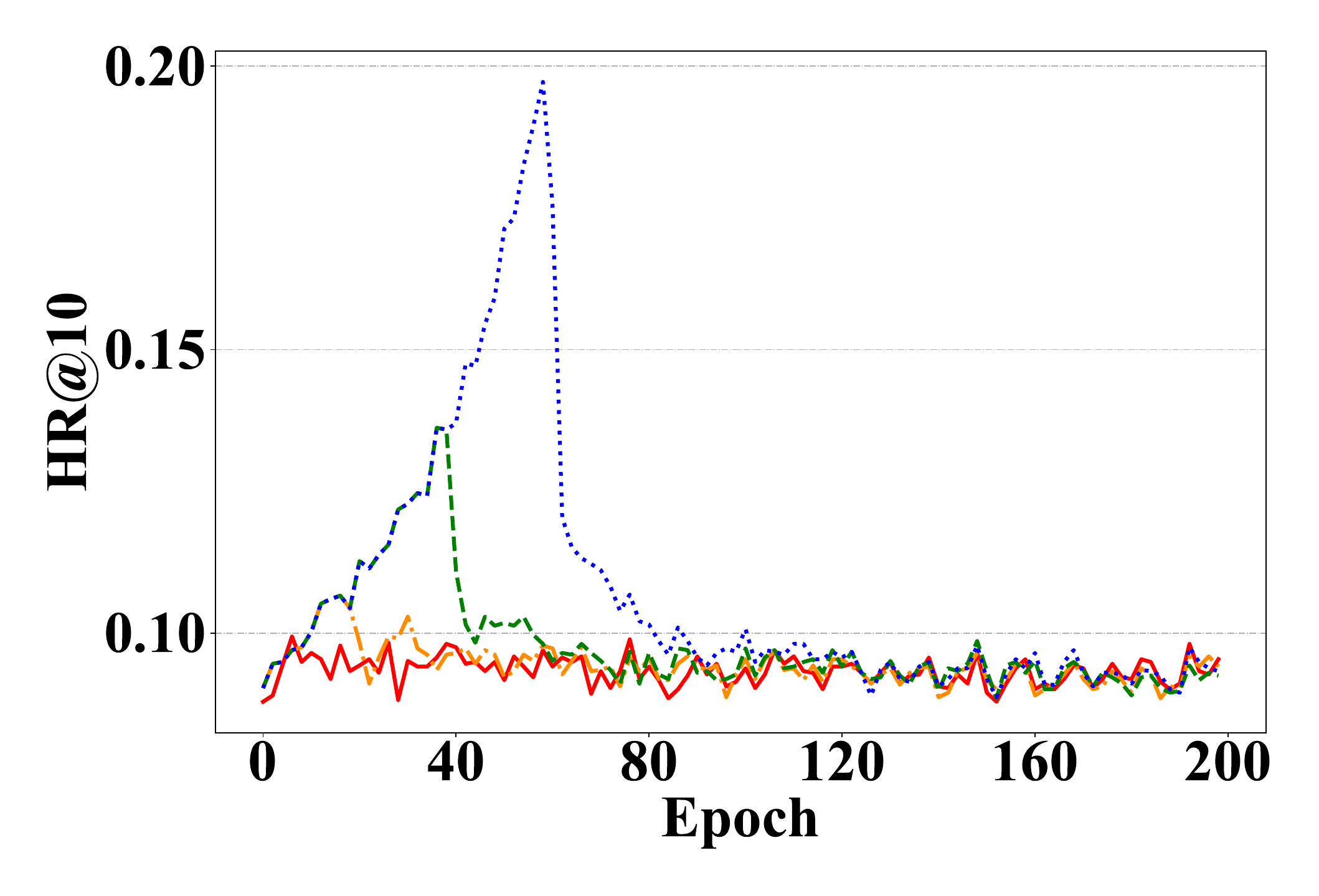}}
\endminipage\hfill
\minipage{0.24\textwidth}
\centering
 \subfigure[Spattack-L-S (Defense)]{ \includegraphics[ width=\textwidth]{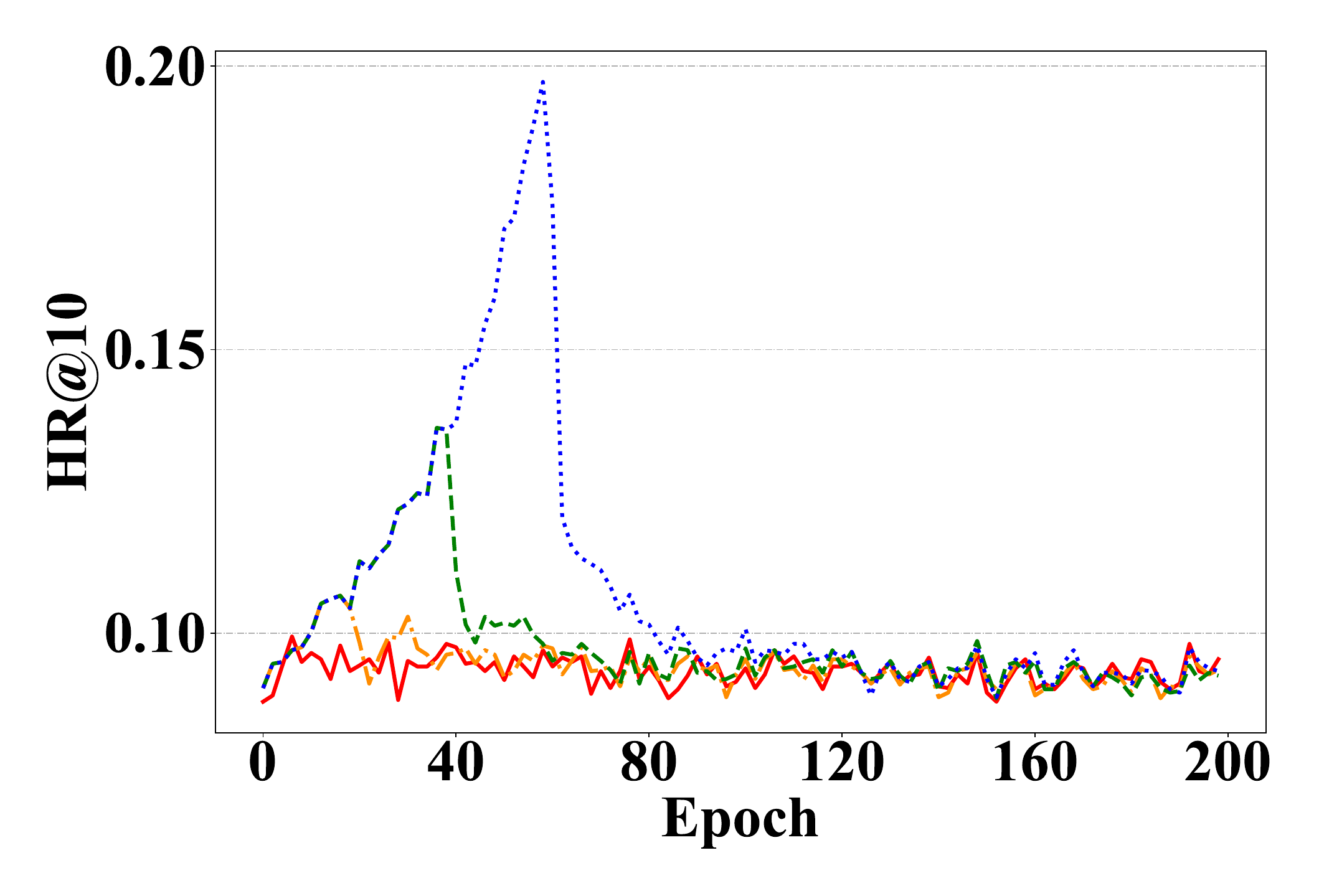}}
\endminipage\hfill
\vskip -0.1in
\caption{Performance of Spattack starting with different epochs on Steam dataset.}
\label{fig:full_startEpoch_steam}
\vskip -0.15in
\end{figure*}
\subsection{Performance Evaluation of Spattack-L-D} 
Considering non-omniscient attackers, where the benign gradients are unavailable, we launch the Spattack-L-D by randomly generating Gaussian noise as malicious gradients. To get comparable results, the ratios of malicious clients in Spattack-L-D are set to higher values of $\{5\%, 10\%, 15\%\}$.
As seen in Tab.~\ref{tab:main_type2_D}, though equipped with statically robust aggregators, Spattack-L-D can still prevent the convergence with degradation of 81\% to 97\% under a 10\% ratio. The results also provide a valuable implication that hiding the gradients of benign clients cannot protect the federated recommender well, because the attackers can break down the model by using random noise as a substitution. Besides, one can observe that Norm aggregators can provide better defense than others. The reason is that the malicious gradients of Spattack-L-D are from a Gaussian noise with the same variance: a large variance will benefit skewing data under statically robust aggregators but can be easily clipped by the norm-based defense. 

\subsection{Performance Evaluation of Spattack-L-S}
When both knowledge and capability are limited, Spattack-L-S still significantly degrades the FR model as shown in Tab.~\ref{tab:main_type2_S}. The performance drops by about 56\%, 73\% and 78\% under 5\%, 10\%, 15\% malicious ratio on average, indicating that the FR system is vulnerable to our attacks, which could hinder its applicability in various domains. 

\subsection{More Results on the Transferability of Attacks}
Here, we evaluate the effectiveness of Spattack on more FR scenarios. First, we perform Spattack with a malicious ratio of 10\% to the SOTA FedGNN~\cite{Wu2021FedGNN}, where the Byzantine clients can only upload malicious gradients of item embeddings. No Defense and Defense correspond to mean and median aggregators, respectively. As shown in Fig.~\ref{fig:full_fedgnn}, we find that FedGNN's performance dramatically drops under Spattack, demonstrating the common vulnerability of FedMF and FedGNN. Even though GNN's parameters are densely aggregated, attackers can still prevent model convergence by only poisoning item embeddings. Moreover, we also show the effectiveness of Spattack under the Adam optimizer in Fig.~\ref{fig:full_adam}. Lastly, we evaluate Spattack's effectiveness when differential privacy is applied to the local gradients in ~\cite{Wu2021FedGNN}. As shown in Fig.~\ref{fig:dp}, Spattack still achieves successful attacks. 
\begin{figure}[t]
\minipage{0.23\textwidth}
\centering
 \subfigure[FedMF.]{\includegraphics[ width=\textwidth]{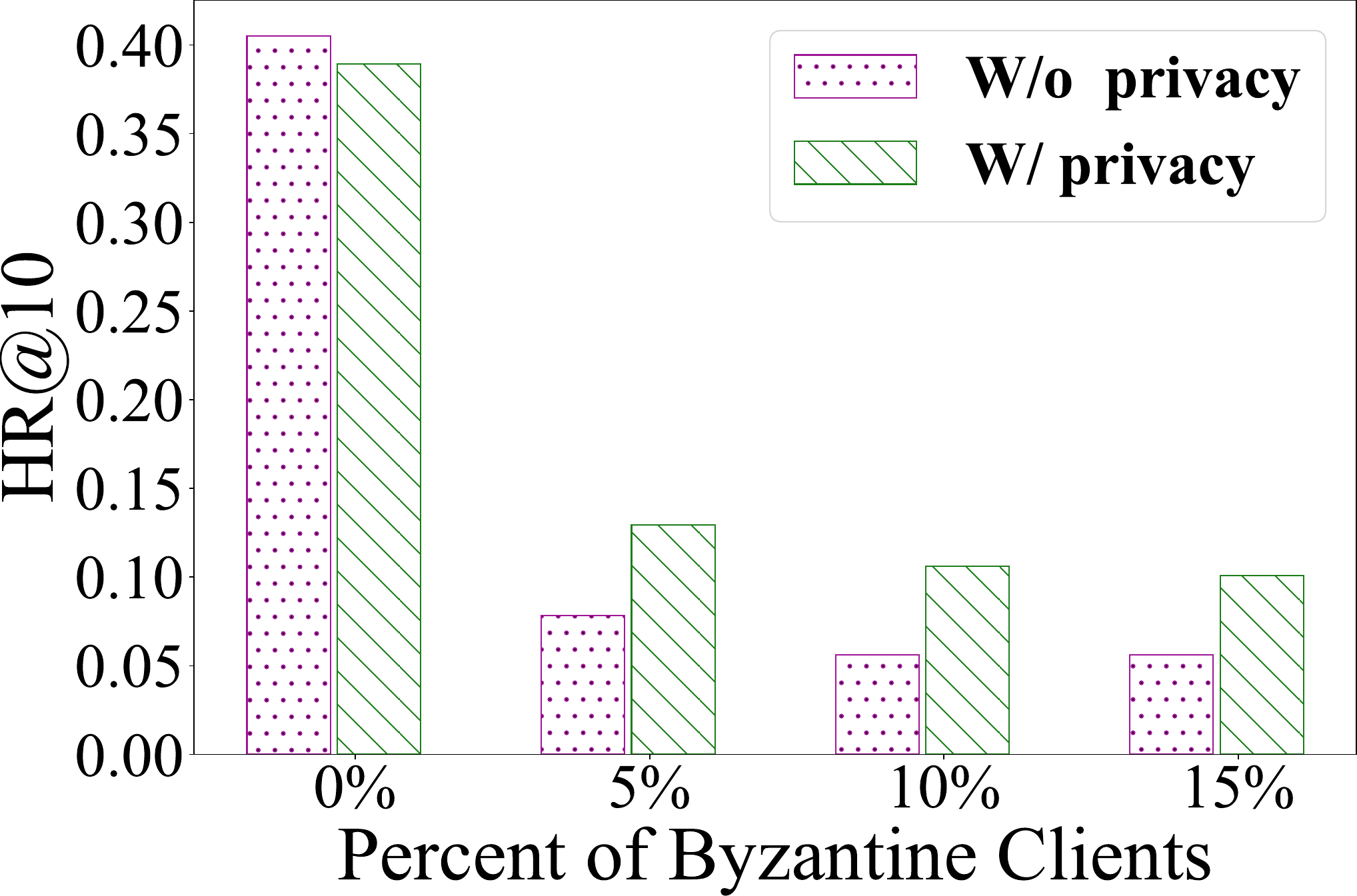}}
\endminipage\hfill
\minipage{0.23\textwidth}
\centering
 \subfigure[FedGNN.]{\includegraphics[ width=\textwidth]{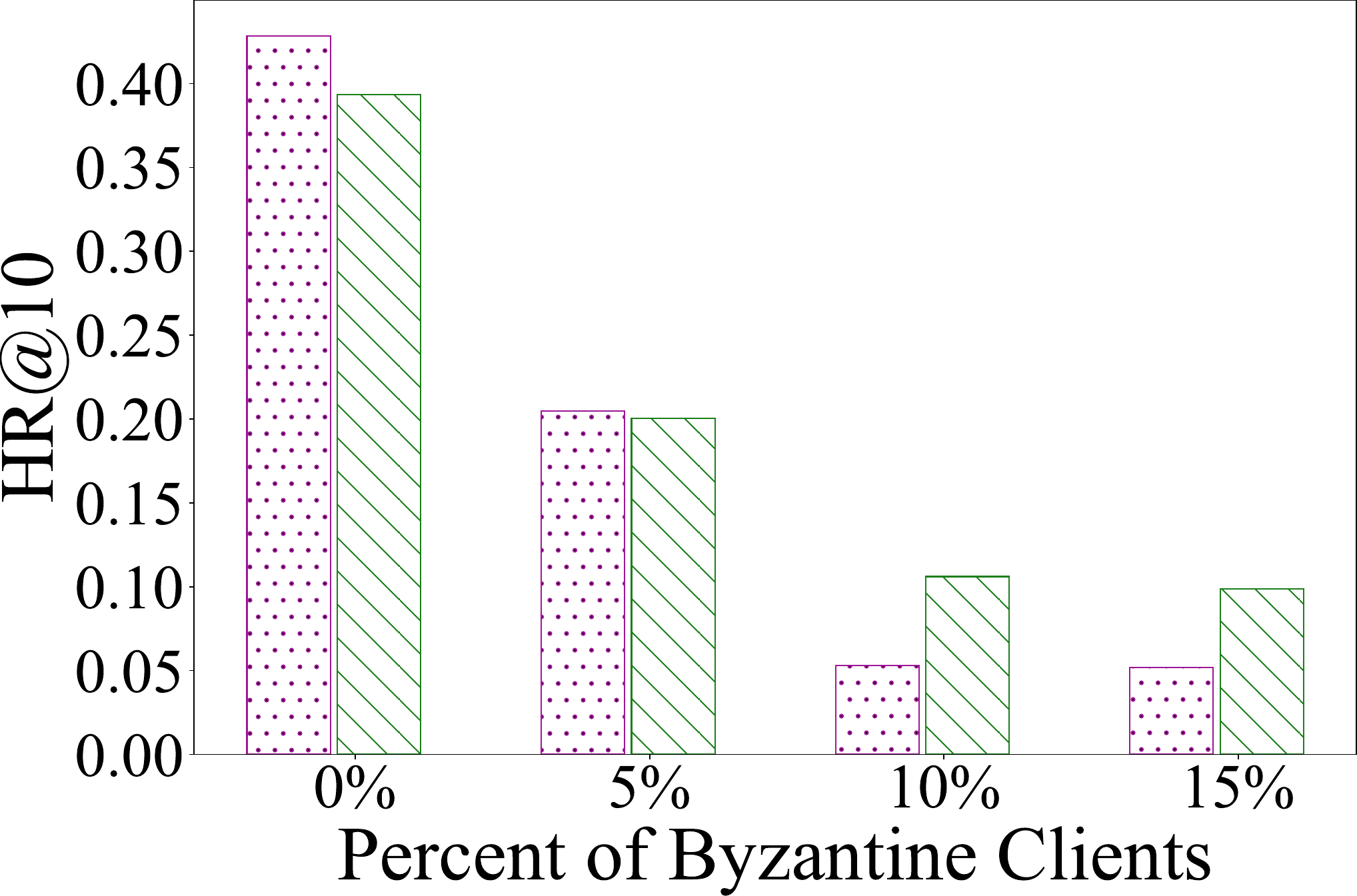}}
\endminipage\hfill
\vskip -0.1in
\caption{Attack performance under differential privacy.}
\label{fig:dp}
\vskip -0.15in
\end{figure}
\subsection{More results on the Hyperparameter Analysis}
In Fig.~\ref{fig:full_startEpoch_steam}, we present the convergence of FR under defense (i.e., Median AGR) against Spattack with a malicious ratio of 10\%. The results correspond to starting attacks at 0, 20, 40, and 60, with a visualization of HR@10 in 200 epochs on Steam. We observe that the Median AGR directly picks the malicious gradient as output and rapidly decreases the model performance along the opposite direction of the true gradient, which once again validates the vulnerability of sparse aggregation on FR.

\begin{table}[h]
\vskip -0.025in
\caption{The performance under lower malicious ratios.} \label{tab:lower_malicios_ratio}
\vskip -0.125in
\Large 
\setlength{\extrarowheight}{2pt}
\resizebox{\linewidth}{!}{
\begin{tabular}{c|c|c|c|c|c|c|c}
\hline
$\text{AGR}(\cdot)$              & Spattack       & Clean      & 1 (0.1\%)        & 5 (0.5\%)        & 9 (1\%)          & 29 (3\%)         & 49 (5\%)         \\ \hline
\multirow{2}{*}{Mean}   & O-D & \begin{tabular}[c]{@{}c@{}}0.4051\end{tabular} & \begin{tabular}[c]{@{}c@{}}0.0997\\(-75.4\%)\end{tabular} & \begin{tabular}[c]{@{}c@{}}0.1029\\(-74.6\%)\end{tabular} & \begin{tabular}[c]{@{}c@{}}0.0986\\(-75.7\%)\end{tabular} & \begin{tabular}[c]{@{}c@{}}0.0944\\(-76.7\%)\end{tabular} & \begin{tabular}[c]{@{}c@{}}\begin{tabular}[c]{@{}c@{}}0.0944\end{tabular}\\(-76.7\%)\end{tabular} \\
                        & O-S & \begin{tabular}[c]{@{}c@{}}0.4051\end{tabular} & \begin{tabular}[c]{@{}c@{}}0.2471\\(-39.0\%)\end{tabular} & \begin{tabular}[c]{@{}c@{}}0.175\\(-56.8\%)\end{tabular}  & \begin{tabular}[c]{@{}c@{}}0.1421\\(-64.9\%)\end{tabular} & \begin{tabular}[c]{@{}c@{}}0.0997\\(-75.4\%)\end{tabular} & \begin{tabular}[c]{@{}c@{}}0.1039\\(-74.4\%)\end{tabular} \\ \hline
\multirow{2}{*}{Median} & O-D & \begin{tabular}[c]{@{}c@{}}0.3849\end{tabular} & \begin{tabular}[c]{@{}c@{}}0.3818\\(-0.8\%)\end{tabular}  & \begin{tabular}[c]{@{}c@{}}0.3743\\(-2.8\%)\end{tabular}  & \begin{tabular}[c]{@{}c@{}}0.3510\\(-8.8\%)\end{tabular}  & \begin{tabular}[c]{@{}c@{}}0.2725\\(-29.2\%)\end{tabular} & \begin{tabular}[c]{@{}c@{}}0.0732\\(-81.0\%)\end{tabular} \\
                        & O-S & \begin{tabular}[c]{@{}c@{}}0.3849\end{tabular} & \begin{tabular}[c]{@{}c@{}}0.3826\\(-0.6\%)\end{tabular}  & \begin{tabular}[c]{@{}c@{}}0.3765\\(-2.2\%)\end{tabular}  & \begin{tabular}[c]{@{}c@{}}0.3606\\(-6.3\%)\end{tabular}  & \begin{tabular}[c]{@{}c@{}}0.3075\\(-20.1\%)\end{tabular} & \begin{tabular}[c]{@{}c@{}}0.2068\\(-46.3\%)\end{tabular} \\ \hline
\end{tabular}}
\vskip -0.175in
\end{table}
\subsection{The effectiveness under lower malicious ratio}
To explore the lowest malicious ratio for the effectiveness of Spattack, we consider the worst-case scenario, Spattack-O, where the attacker have the knowledge of total gradients. We report HR@10 on the ML-100K dataset under varying malicious ratios: 0.1\%, 0.5\%, 1\%, 3\%, and 5\%, corresponding to 1, 5, 9, 29, and 49 malicious clients, respectively.  The experimental results are reported in Tab.~\ref{tab:lower_malicios_ratio}. We have the following observations: \\
$\bullet$ Without defense (i.e., Mean), with only 0.1\% malicious ratio (1 malicious client), Spattack-O-D dramatically degrades the performance by over 75\% and prevents model convergence, which is consistent with Proposition 1. Even if the capability is limited, Spattack-O-S still effectively degrades the performance by 39\% with only 1 malicious client.\\
$\bullet$ For the defense aggregator (i.e., Median), a 3\% malicious ratio (29 malicious clients) can significantly degrade model performance by over 20\%. 

\begin{figure*}[ht]
\minipage{0.24\textwidth}
\centering
 \subfigure[ML100K]{\includegraphics[ width=\textwidth]{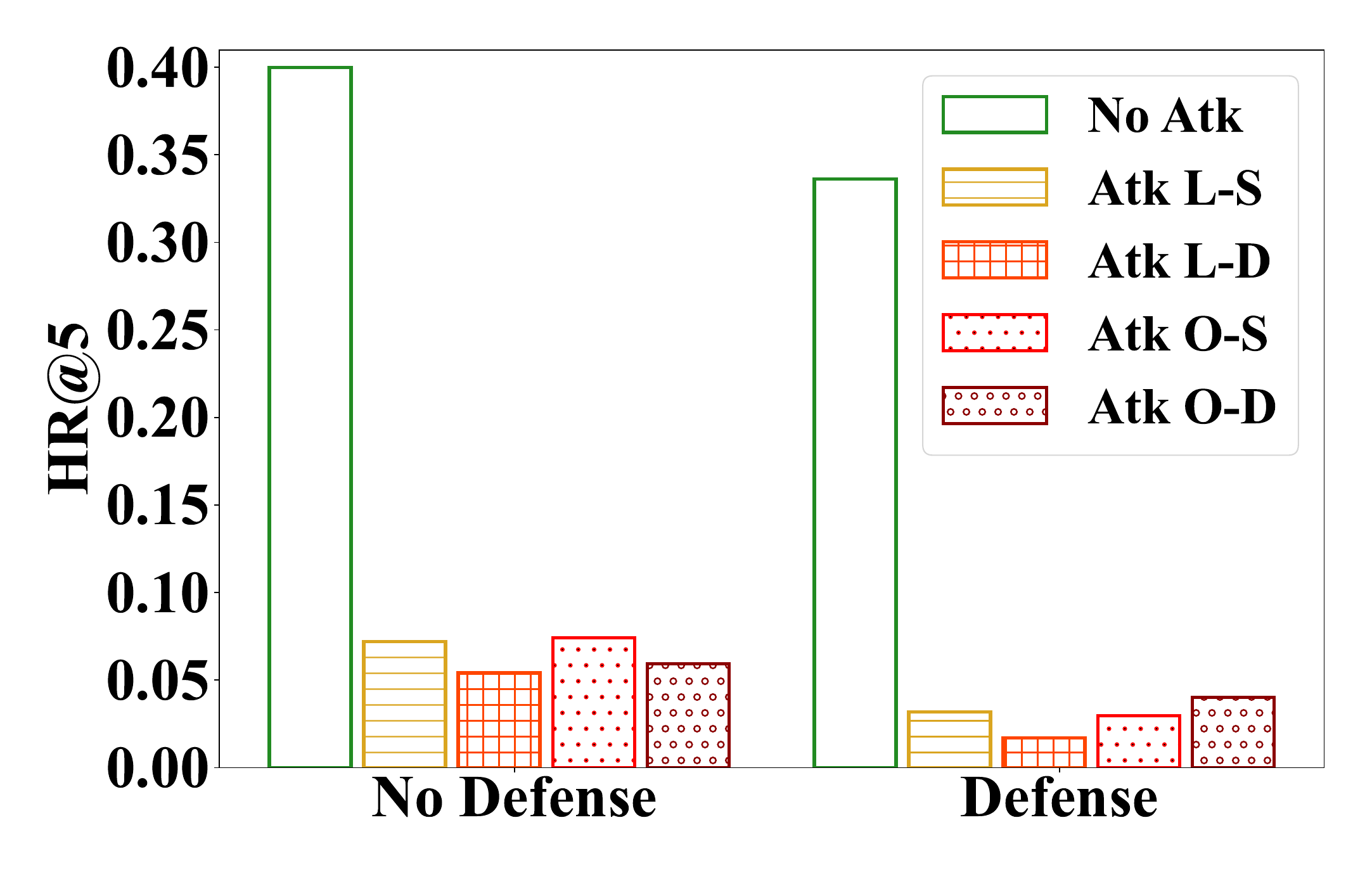}}
\endminipage\hfill
\minipage{0.24\textwidth}
\centering
 \subfigure[ML100K]{\includegraphics[ width=\textwidth]{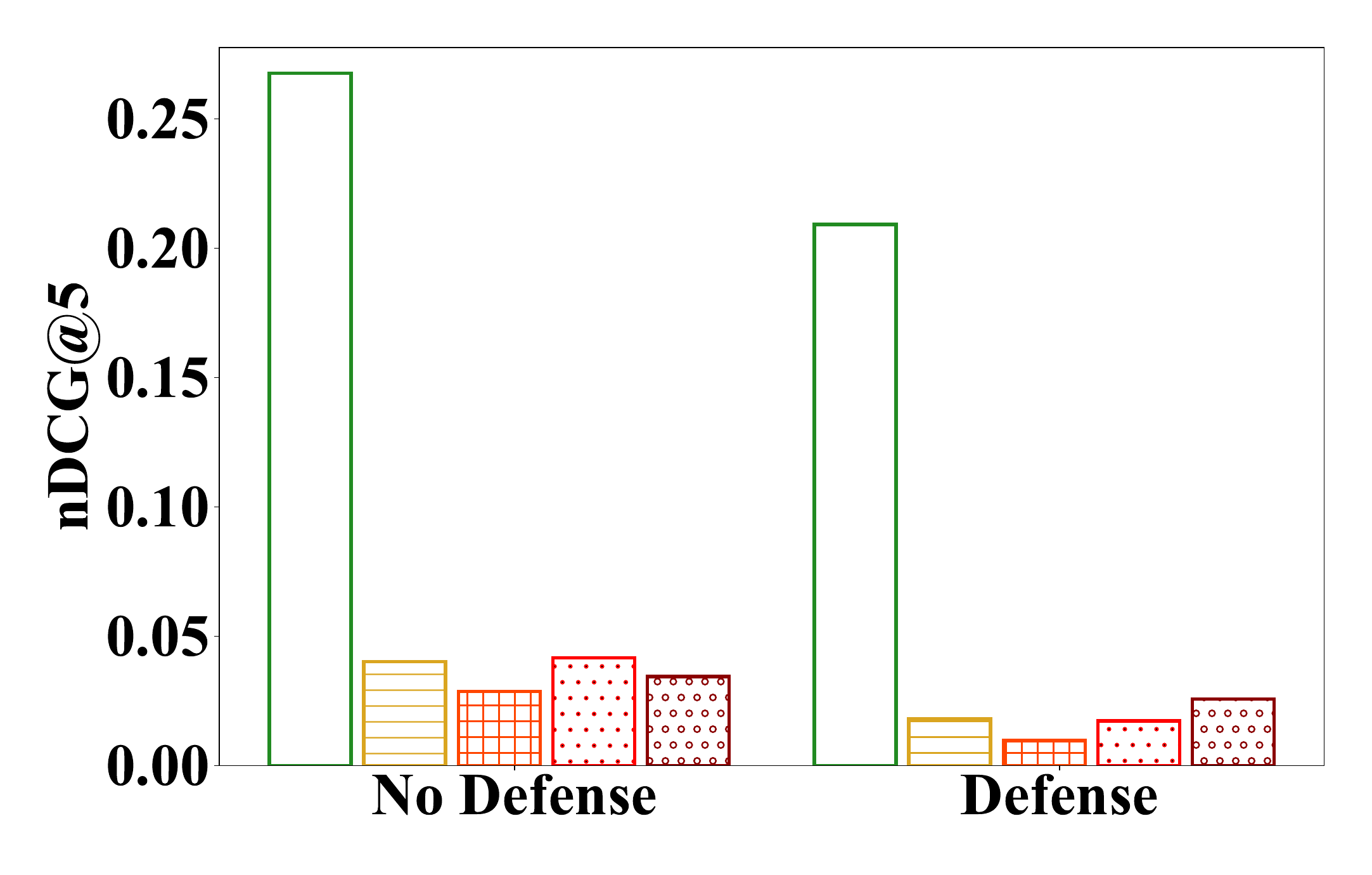}}
\endminipage\hfill
\minipage{0.24\textwidth}
\centering
 \subfigure[ML100K]{\includegraphics[ width=\textwidth]{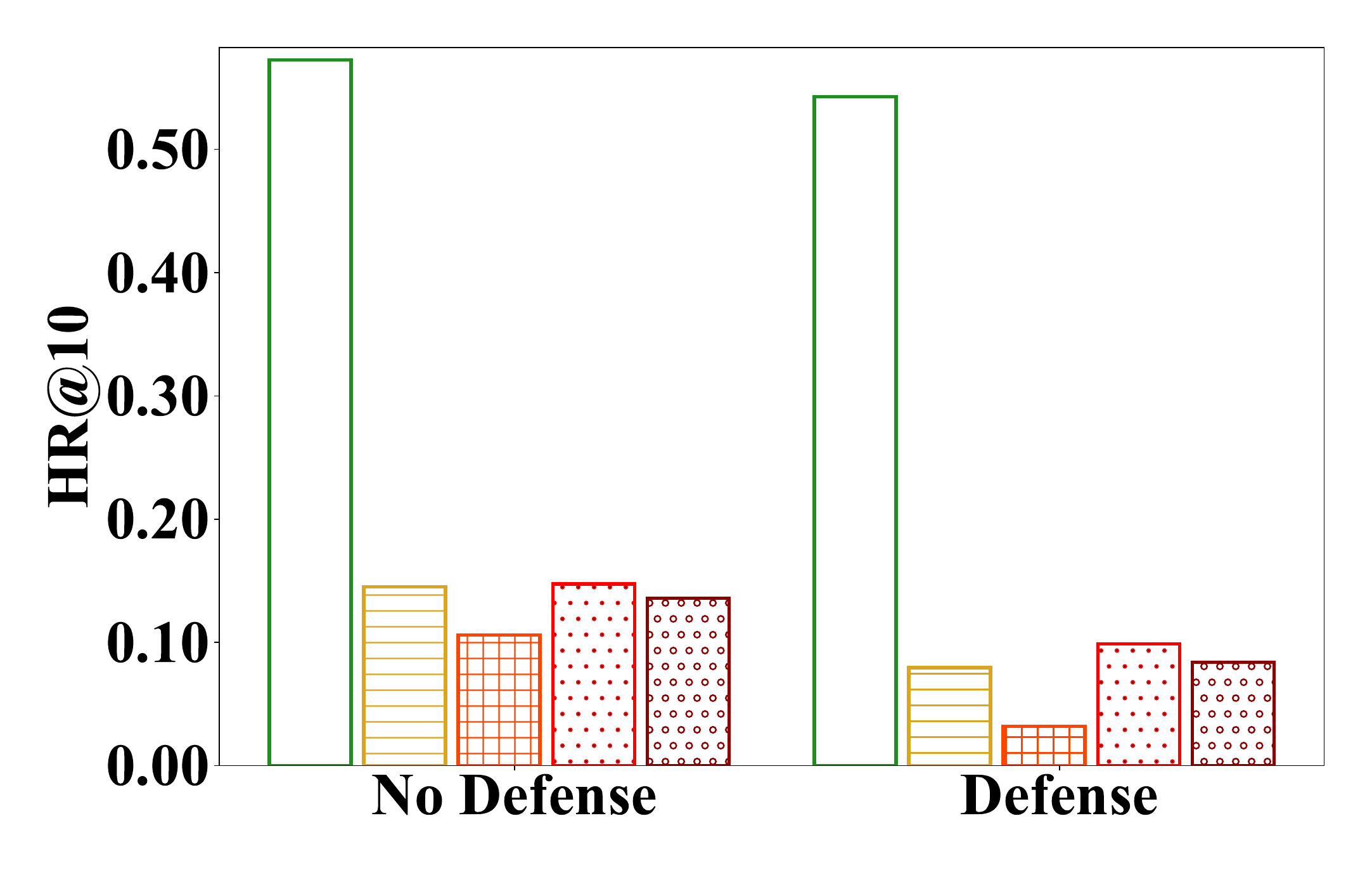}}
\endminipage\hfill
\minipage{0.24\textwidth}
\centering
 \subfigure[ML100K]{ \includegraphics[ width=\textwidth]{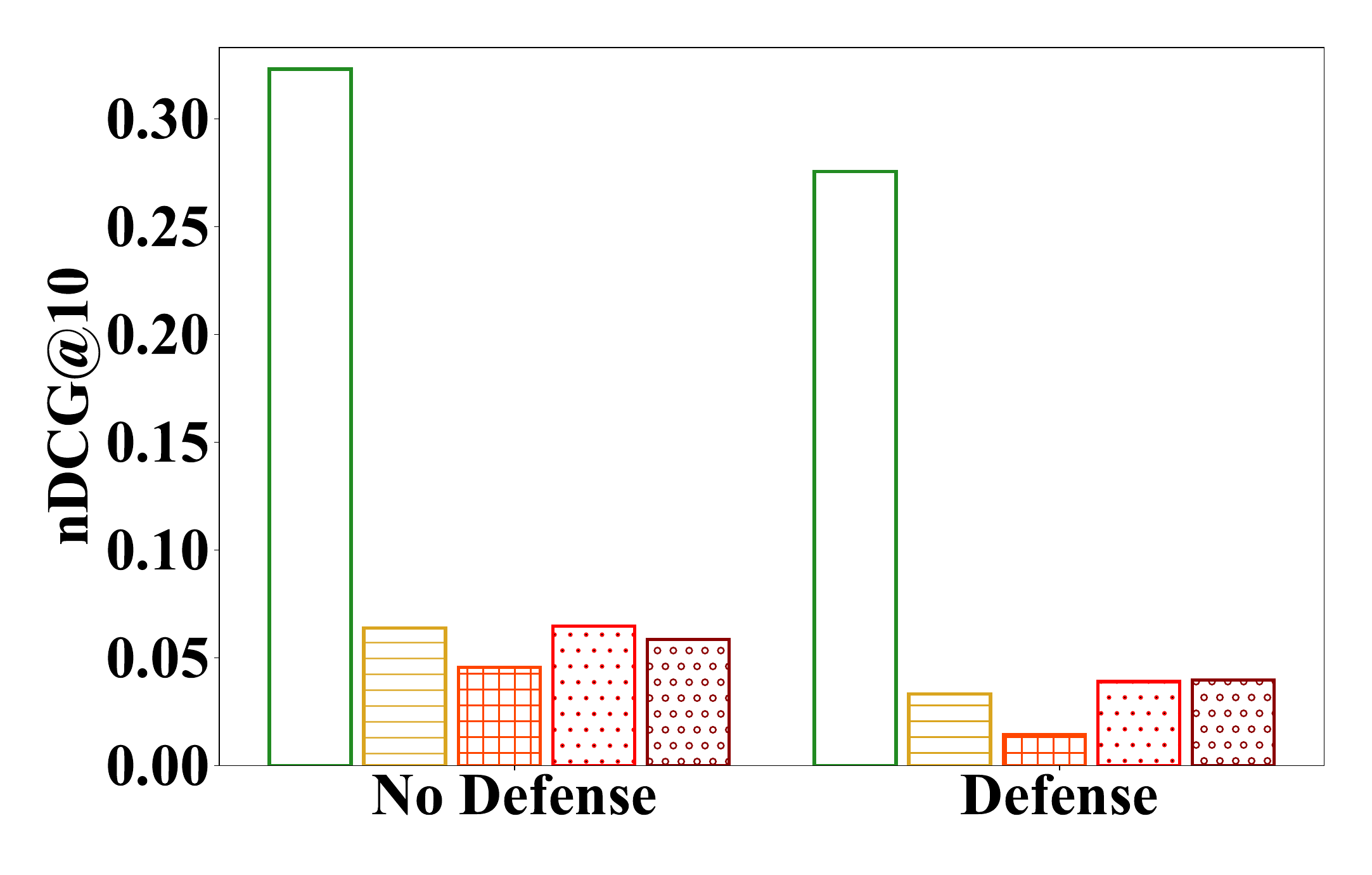}}
\endminipage\hfill

\minipage{0.24\textwidth}
\centering
 \subfigure[Steam]{\includegraphics[ width=\textwidth]{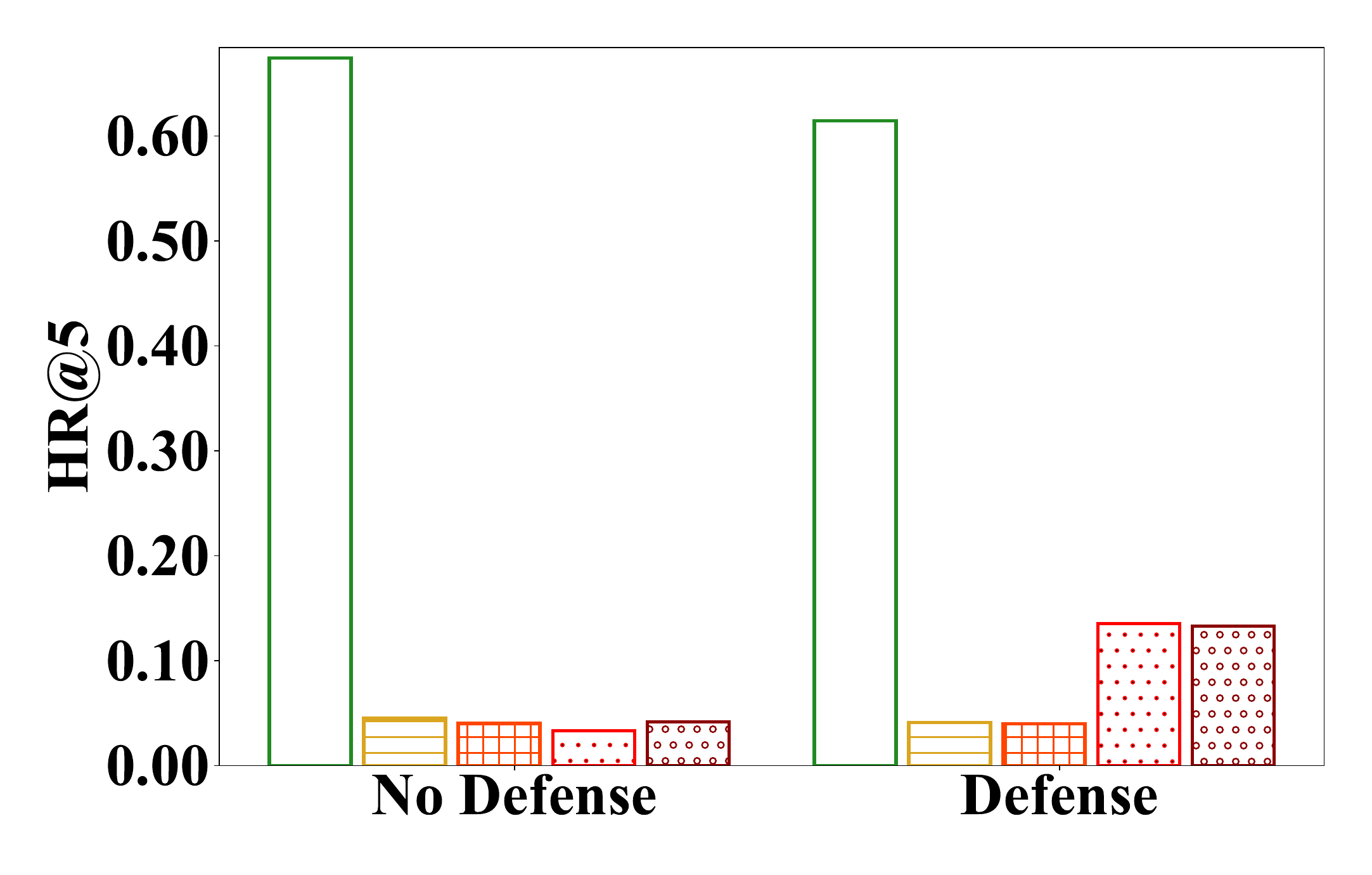}}
\endminipage\hfill
\minipage{0.24\textwidth}
\centering
 \subfigure[Steam]{\includegraphics[ width=\textwidth]{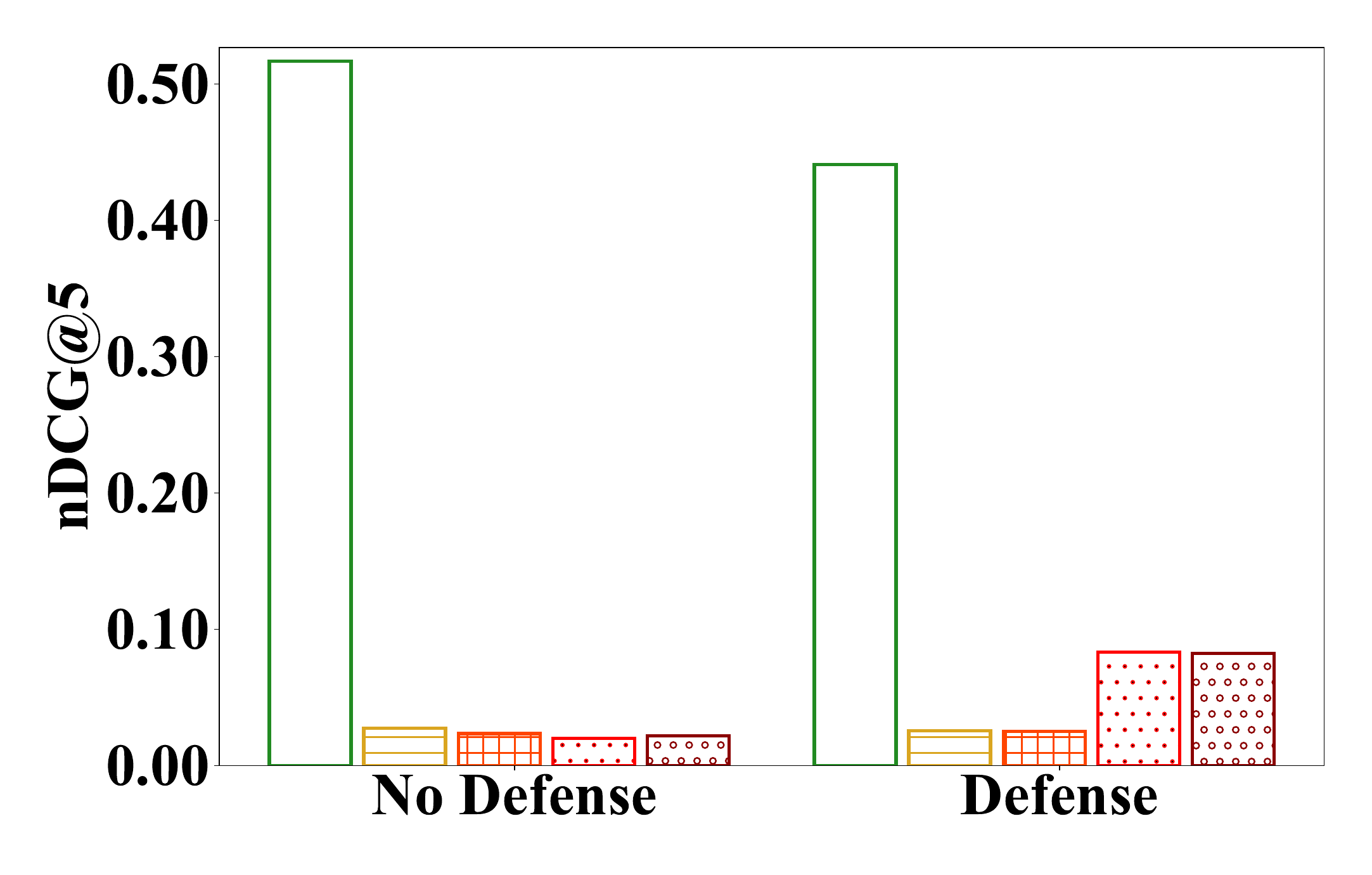}}
\endminipage\hfill
\minipage{0.24\textwidth}
\centering
 \subfigure[Steam]{\includegraphics[ width=\textwidth]{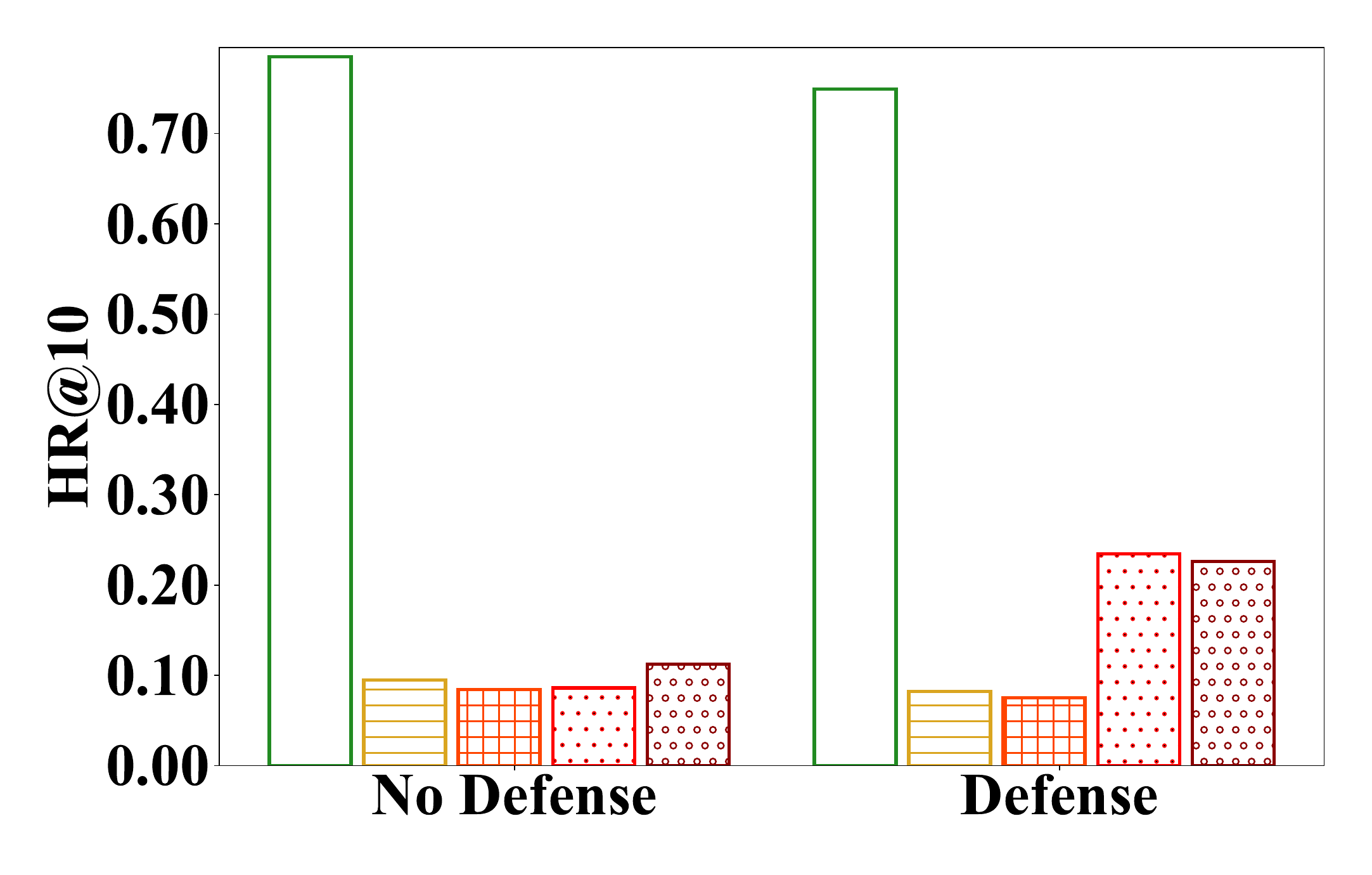}}
\endminipage\hfill
\minipage{0.24\textwidth}
\centering
 \subfigure[Steam]{ \includegraphics[ width=\textwidth]{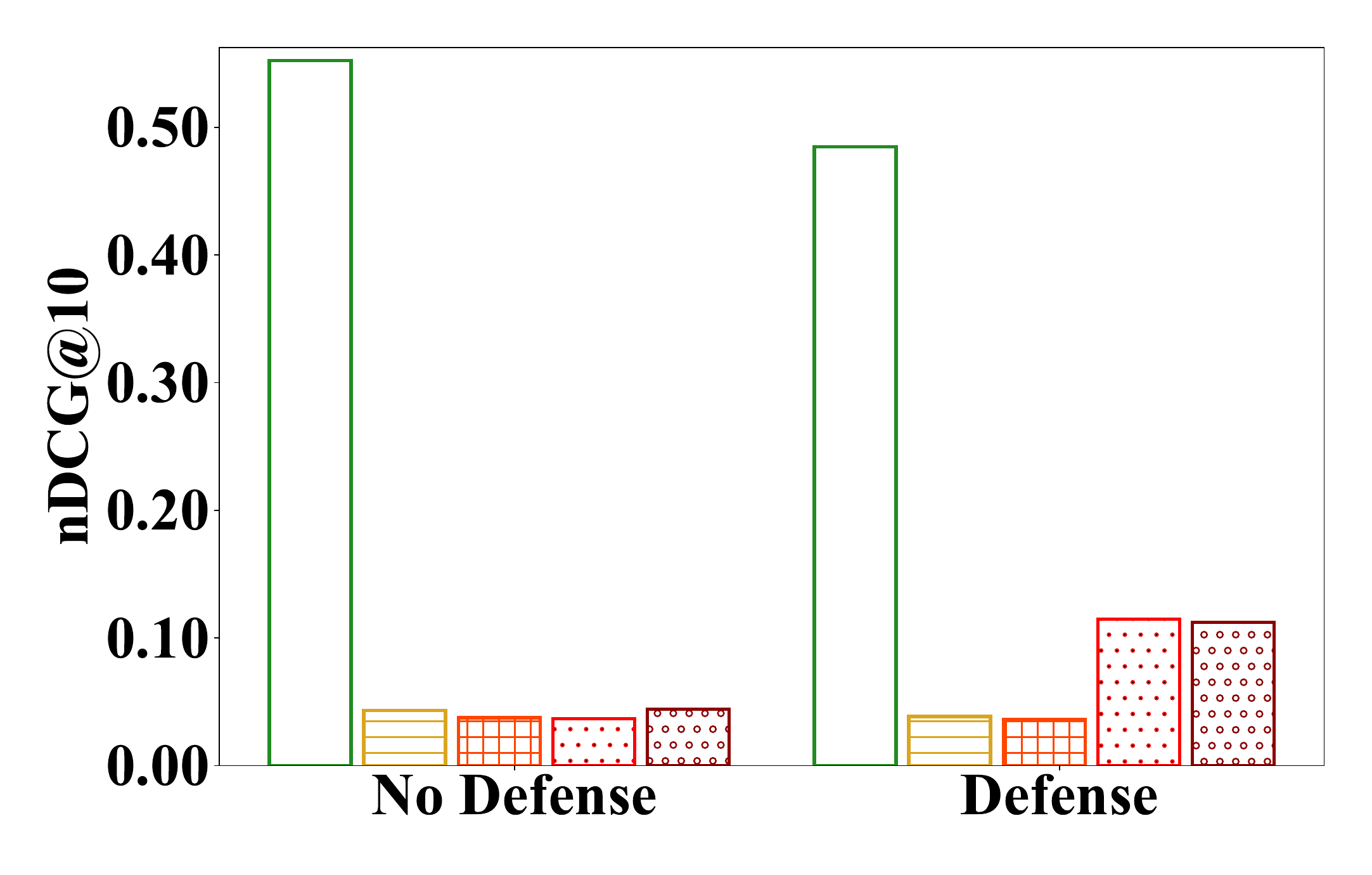}}
\endminipage\hfill

\minipage{0.24\textwidth}
\centering
 \subfigure[ML1M]{\includegraphics[ width=\textwidth]{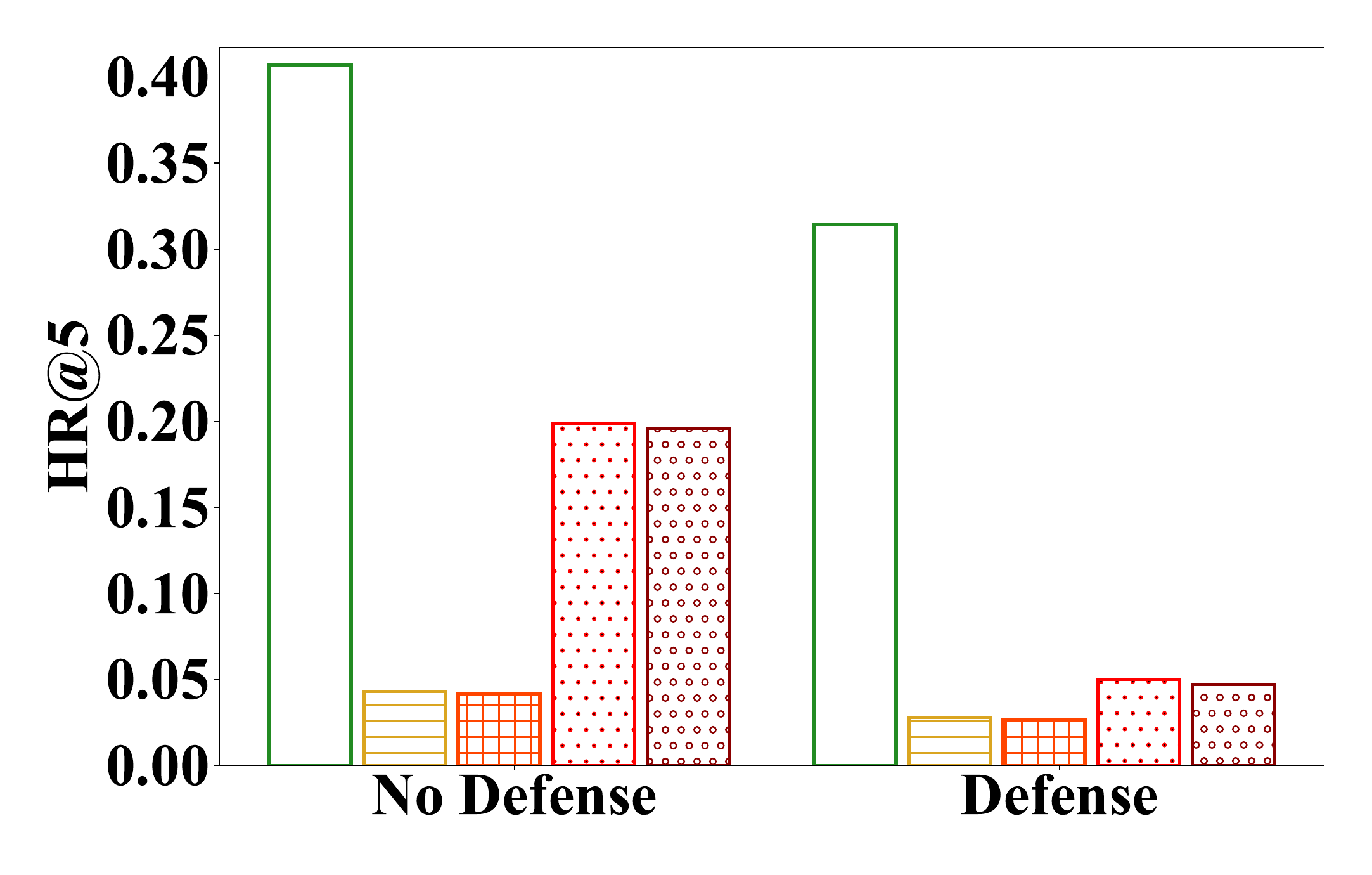}}
\endminipage\hfill
\minipage{0.24\textwidth}
\centering
 \subfigure[ML1M]{\includegraphics[ width=\textwidth]{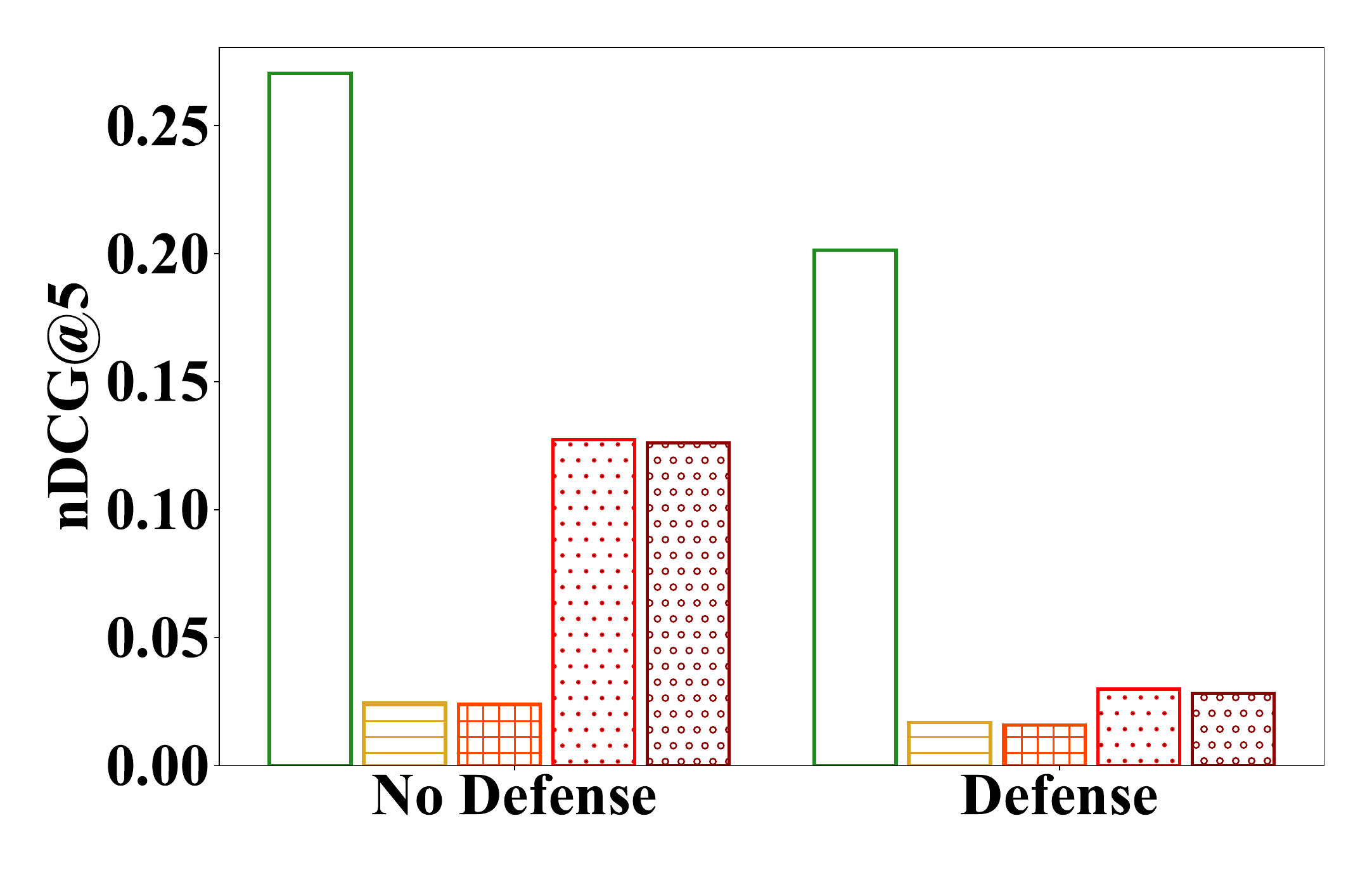}}
\endminipage\hfill
\minipage{0.24\textwidth}
\centering
 \subfigure[ML1M]{\includegraphics[ width=\textwidth]{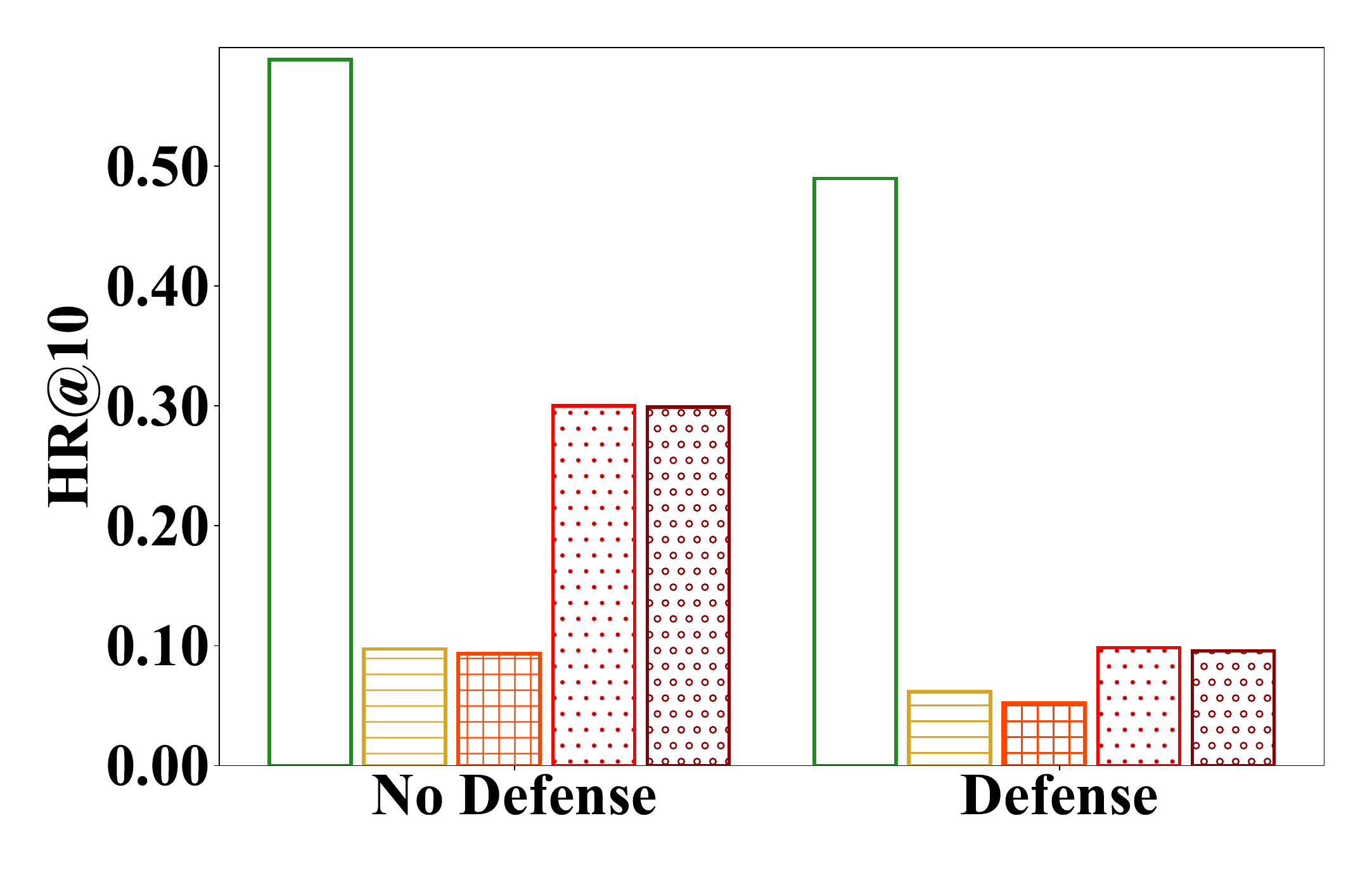}}
\endminipage\hfill
\minipage{0.24\textwidth}
\centering
 \subfigure[ML1M]{ \includegraphics[ width=\textwidth]{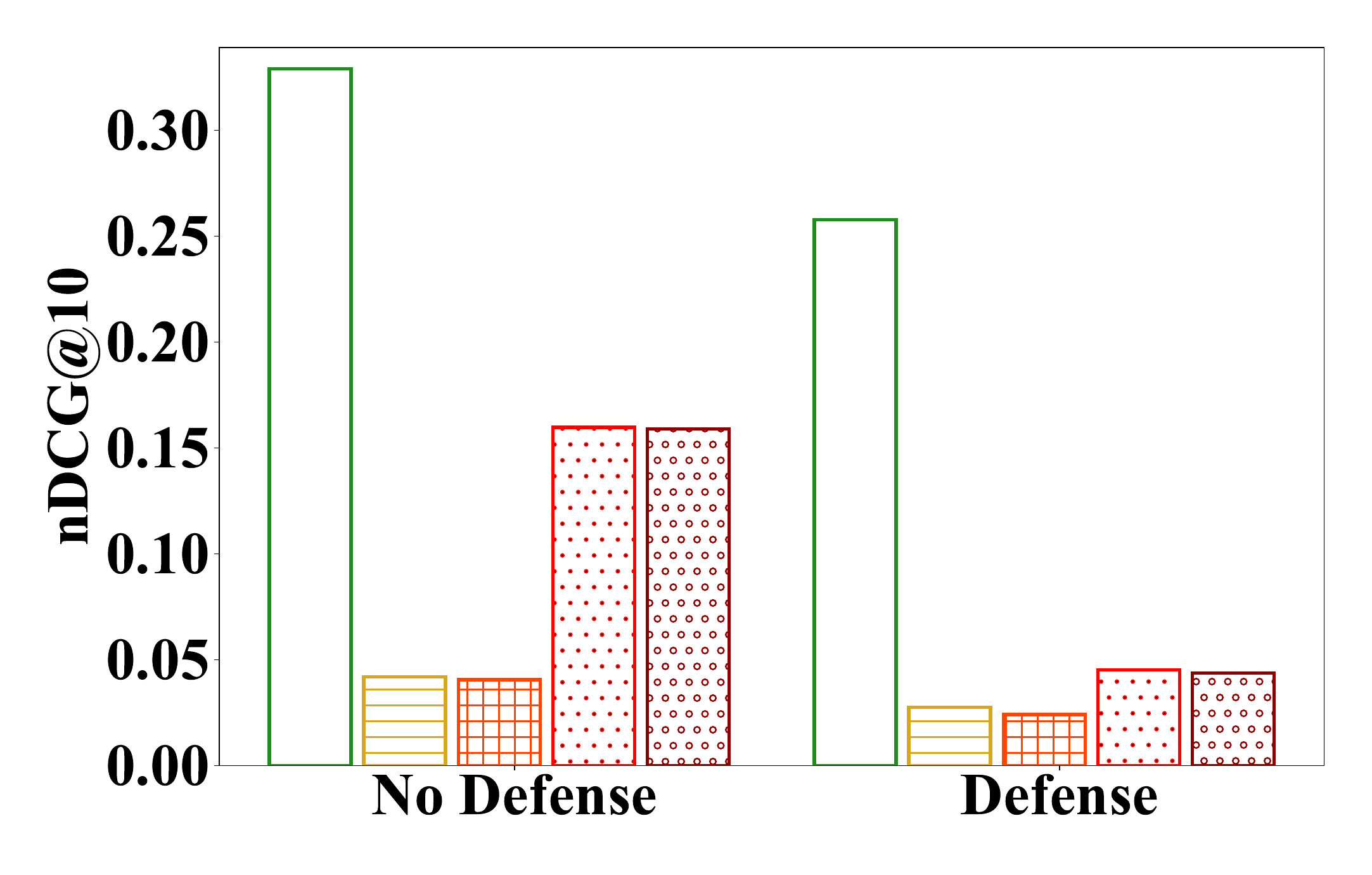}}
\endminipage\hfill
\caption{Results on FR Model with Adam Optimizer.}
\label{fig:full_adam}
\end{figure*}
\end{document}